\newcommand{\R}{\mathbb{R}}
\newcommand{\mH}{\mathcal{H}}
\renewcommand{\P}{{\rm P}}
\newcommand{\E}{{\rm E}}
\newcommand{\FDR}{{\rm FDR}}
\newcommand{\Power}{{\rm Power}}
\newcommand{\BH}{{\rm BH}}
\newcommand{\argmax}{\operatornamewithlimits{argmax}}
\newcommand{\SNR}{{\rm SNR}}
\def\l{{\langle}}
\def\r{\rangle}
\def\La{\Lambda}
\def\ep{\varepsilon}
\theoremstyle{plain}
\newtheorem{thm}{Theorem}
\newtheorem{lemma}[thm]{Lemma}
\newtheorem{prop}[thm]{Proposition}
\newtheorem{remark}[thm]{Remark}
\theoremstyle{definition}
\newtheorem{alg}[thm]{Algorithm}
\newtheorem{example}[thm]{Example}
\newcommand{\figurepath}{.}
\begin{document}

\title {Multiple Testing of Local Maxima for Detection of Peaks in Random Fields}
\author{Dan Cheng$^*$ and  Armin Schwartzman\footnote{Partially supported by NIH grant R01-CA157528. The authors thank Alexander Egner of Georg-August-Universit\"at G\"ottingen for providing the nanoscopy data used in this paper.} \\
Department of Statistics, North Carolina State University}

\maketitle

\begin{abstract}
A topological multiple testing scheme is presented for detecting peaks in images under stationary ergodic Gaussian noise, where tests are performed at local maxima of the smoothed observed signals. The procedure generalizes the one-dimensional scheme of \citep{Schwartzman:2011} to Euclidean domains of arbitrary dimension. Two methods are developed according to two different ways of computing p-values: (i) using the exact distribution of the height of local maxima \citep{CS:2014}, available explicitly when the noise field is isotropic; (ii) using an approximation to the overshoot distribution of local maxima above a pre-threshold \citep{CS:2014}, applicable when the exact distribution is unknown, such as when the stationary noise field is non-isotropic. The algorithms, combined with the Benjamini-Hochberg procedure for thresholding p-values, provide asymptotic strong control of the False Discovery Rate (FDR) and power consistency, with specific rates, as the search space and signal strength get large. The optimal smoothing bandwidth and optimal pre-threshold are obtained to achieve maximum power. Simulations show that FDR levels are maintained in non-asymptotic conditions. The methods are illustrated in a nanoscopy image analysis problem of detecting fluorescent molecules against the image background.
\end{abstract}

{\bf Keywords:} Gaussian random field; kernel smoothing; image analysis; overshoot distribution; topological inference; false discovery rate.

\section{Introduction}

Detection of sparse localized signals embedded in smooth noise is a fundamental problem in image analysis, with applications in many scientific areas such as neuroimaging \citep{Worsley:1996a,Genovese:2002,Taylor:2007}, microscopy \citep{Egner:2007,Geisler:2007} and astronomy \citep{Brutti:2005}. The key issue is to find a threshold to determine significant regions. This paper treats the thresholding problem as a multiple testing problem where tests are performed at local maxima of the observed image, allowing error rates and detection power to be topologically defined in terms of detected spatial peaks, rather than pixels or voxels.

Now commonplace in neuroimaging, Keith Worsley pioneered the use of random field theory to approximate the null distribution of the global maximum of the observed image to control the family-wise error rate (FWER) of detected voxels \citep{Worsley:1996a,Worsley:2004,Taylor:2007}. On the other hand, initial attempts to control the false discovery rate (FDR), desirable for being less conservative, ignored the spatial structure in the data \citep{Genovese:2002}. Recognizing the need to make inferences about connected regions rather than voxels in imaging applications, multiple testing methods have since been developed for pre-defined regions \citep{Heller:2006,Heller:2007,Sun:2014} and for the harder problem of detecting unknown clusters \citep{Pacifico:2004,Pacifico:2007,Zhang:2009}. It has been argued, however, that localized signal regions often present themselves as peaks in the image intensity profile, inviting a more powerful analysis based on local maxima of the observed data as the features of interest \citep{Poline:1997,Chumbley:2009}.

\citet{Schwartzman:2011} formalized peak detection by introducing a multiple testing paradigm where local maxima of the smoothed data are tested for significance. That work, however, was limited to one-dimensional spatial and temporal domains because the distribution of the height of local maxima, a key ingredient for calculation of p-values, has historically been known in closed-form only for one-dimensional stationary Gaussian processes \citep{Cramer:1967}. Recently, \citet{CS:2014} have obtained exact expressions for the height distribution of isotropic Gaussian fields and an approximation to the overshoot distribution of local maxima of constant-variance Gaussian fields by applying techniques from random matrix theory \citep{Fyodorov04}. These crucial developments allow us in the current paper to extend the multiple testing method of \citep{Schwartzman:2011} to Euclidean domains of higher dimension.

Our general algorithm consists of the following steps:
\begin{enumerate}[label={(\arabic*)}]
\item {\em Kernel smoothing}: to increase SNR \citep{Worsley:1996b,Smith:2009}.
\item {\em Candidate peaks}: find local maxima of the smoothed field above a pre-threshold.
\item {\em P-values}: computed at each local maximum under the complete null hypothesis of no signal anywhere.
\item {\em Multiple testing}: apply a multiple testing procedure and declare as detected peaks those local maxima whose p-values are significant.
\end{enumerate}

The main conceptual difference with the algorithm of \citet{Schwartzman:2011}, in addition to being multi-dimensional, is the introduction of a height pre-threshold in step (2). Pre-thresholding is often used in neuroimaging \citep{Zhang:2009} to reduce the number of candidate peaks or regions. Considered formally here, it leads to two ways of applying the above algorithm. If the exact distribution of the height of local maxima for computing p-values in step (3) is known, such as for isotropic fields \citep{CS:2014}, it is shown here that it is best not to apply pre-thresholding at all. However, if the distribution is unknown, as is the case to date for non-isotropic fields, then pre-thresholding is still valuable in that it enables the use of an approximation of the overshoot distribution of local maxima instead \citep{CS:2014}. In step (4), for concreteness, we focus on the Benjamini-Hochberg (BH) procedure \citep{Benjamini:1995} for controlling FDR, although other procedures and error criteria could be used. The algorithm is illustrated by a toy example in Figure \ref{fig:simul}.

\begin{figure}[t]
\begin{center}
\begin{tabular}{ccc}
\includegraphics[trim=30 10 30 10,clip,width=1.5in]{\figurepath/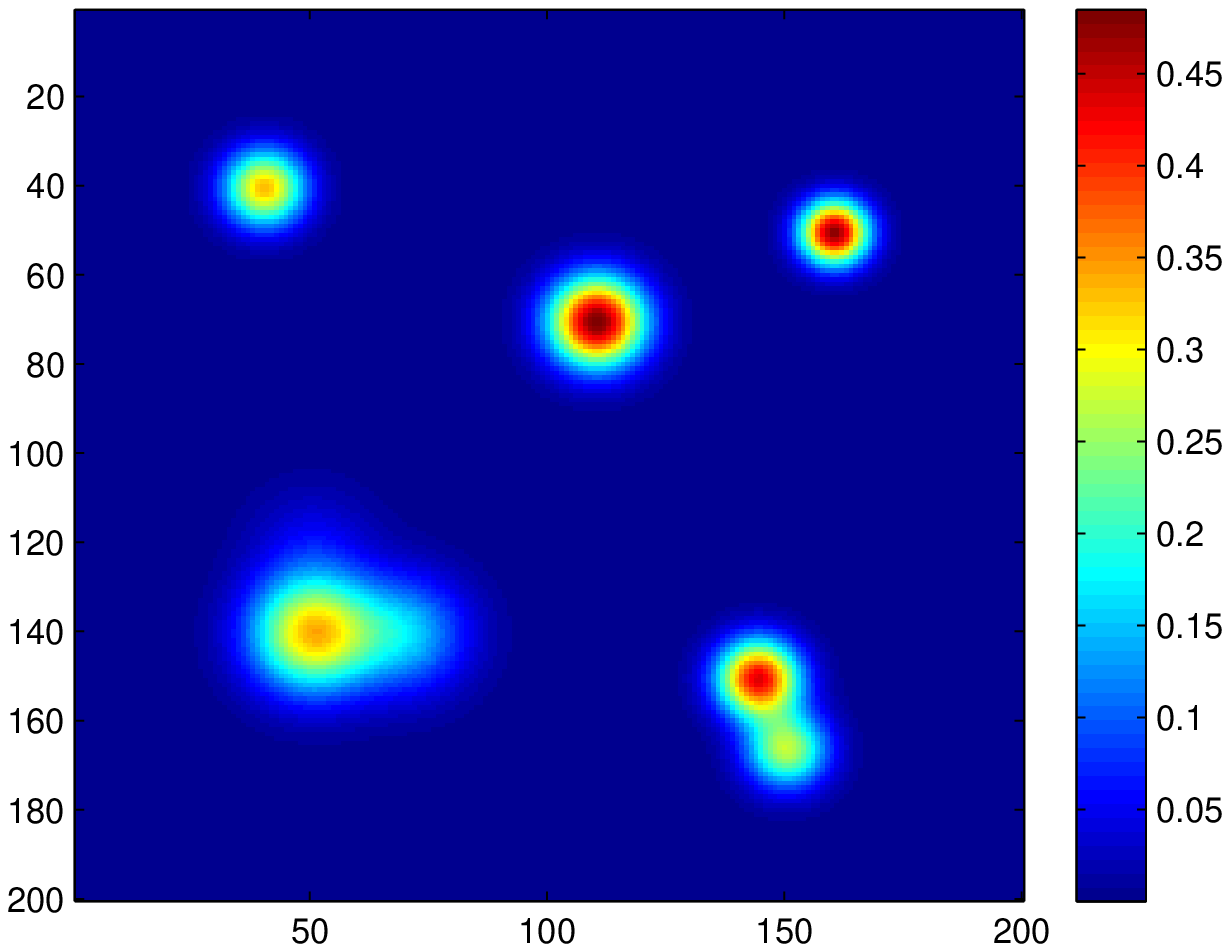} &
\includegraphics[trim=30 10 30 10,clip,width=1.5in]{\figurepath/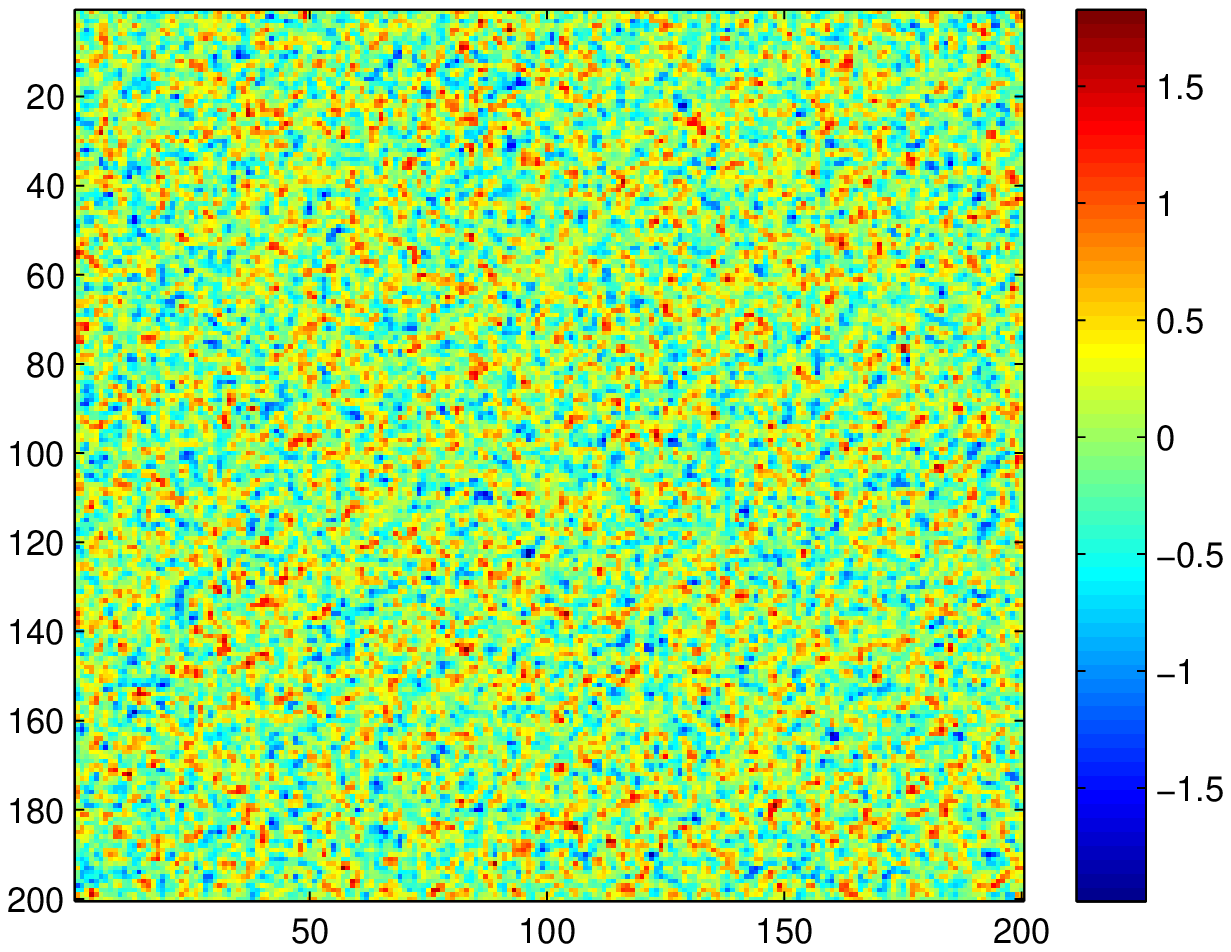} &
\includegraphics[trim=30 10 30 10,clip,width=1.5in]{\figurepath/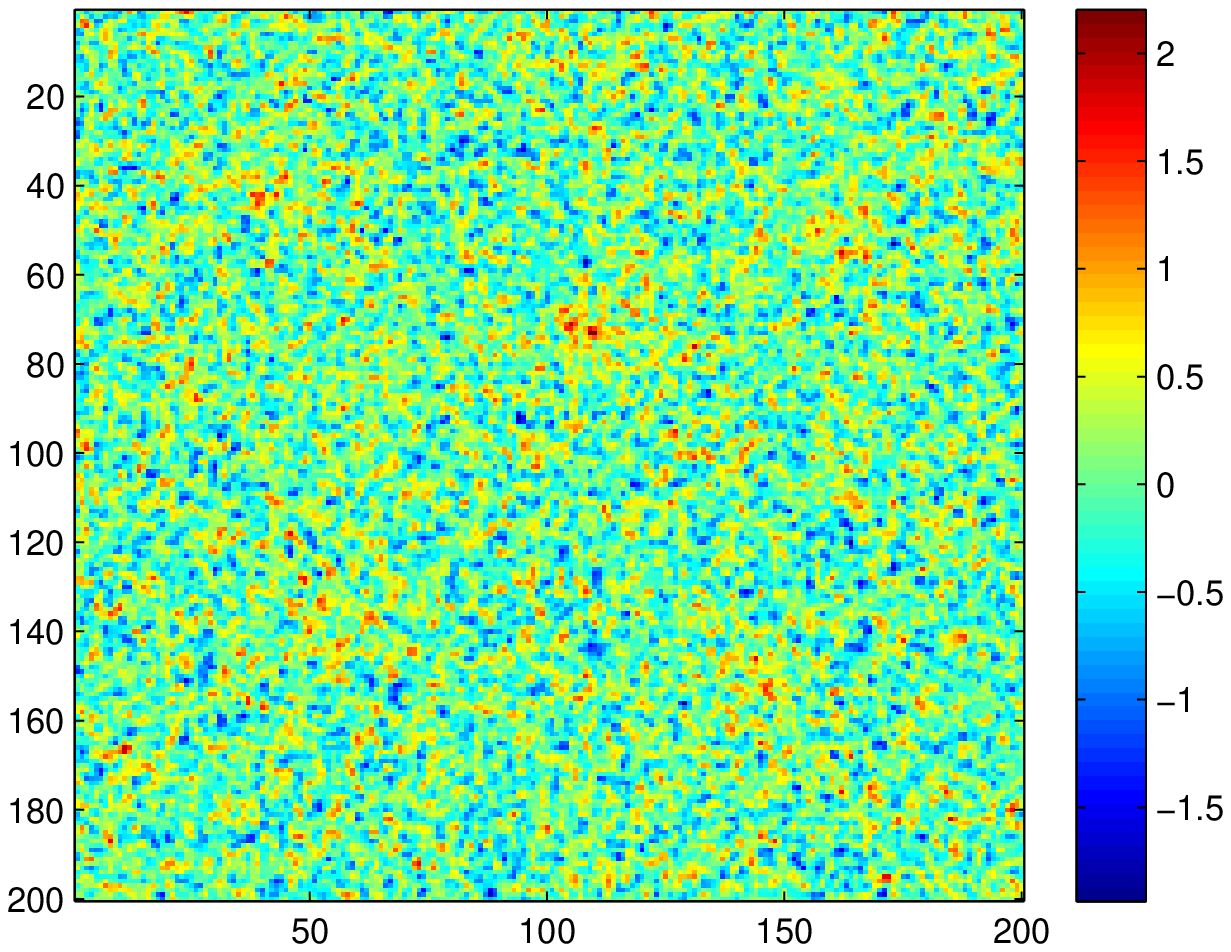} \\
\phantom{---} $\mu(t)$ (signal) & \phantom{---}$z(t)$ (noise) & \phantom{---}$y(t)=\mu(t)+z(t)$\\
\includegraphics[trim=30 10 30 10,clip,width=1.5in]{\figurepath/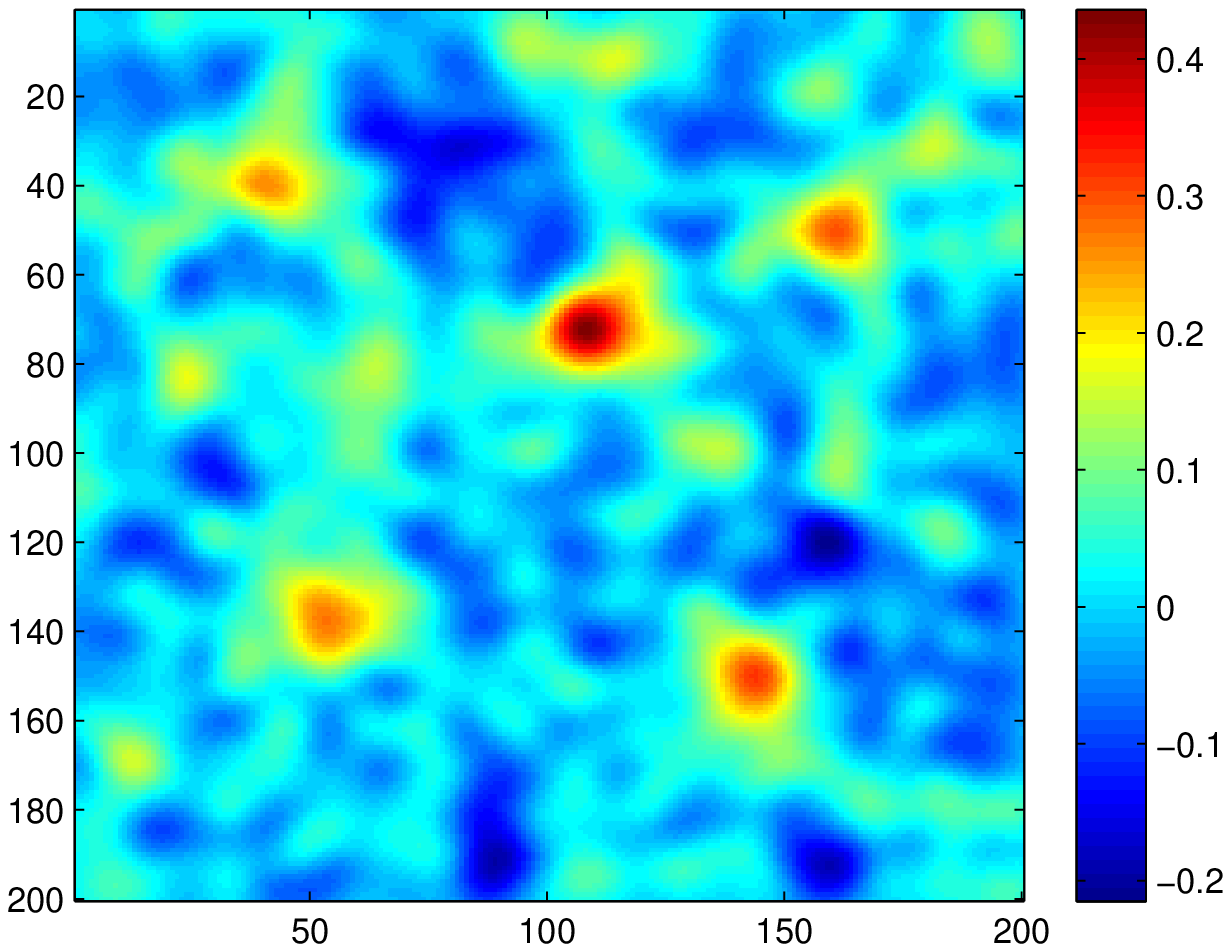} &
\includegraphics[trim=30 10 30 10,clip,width=1.5in]{\figurepath/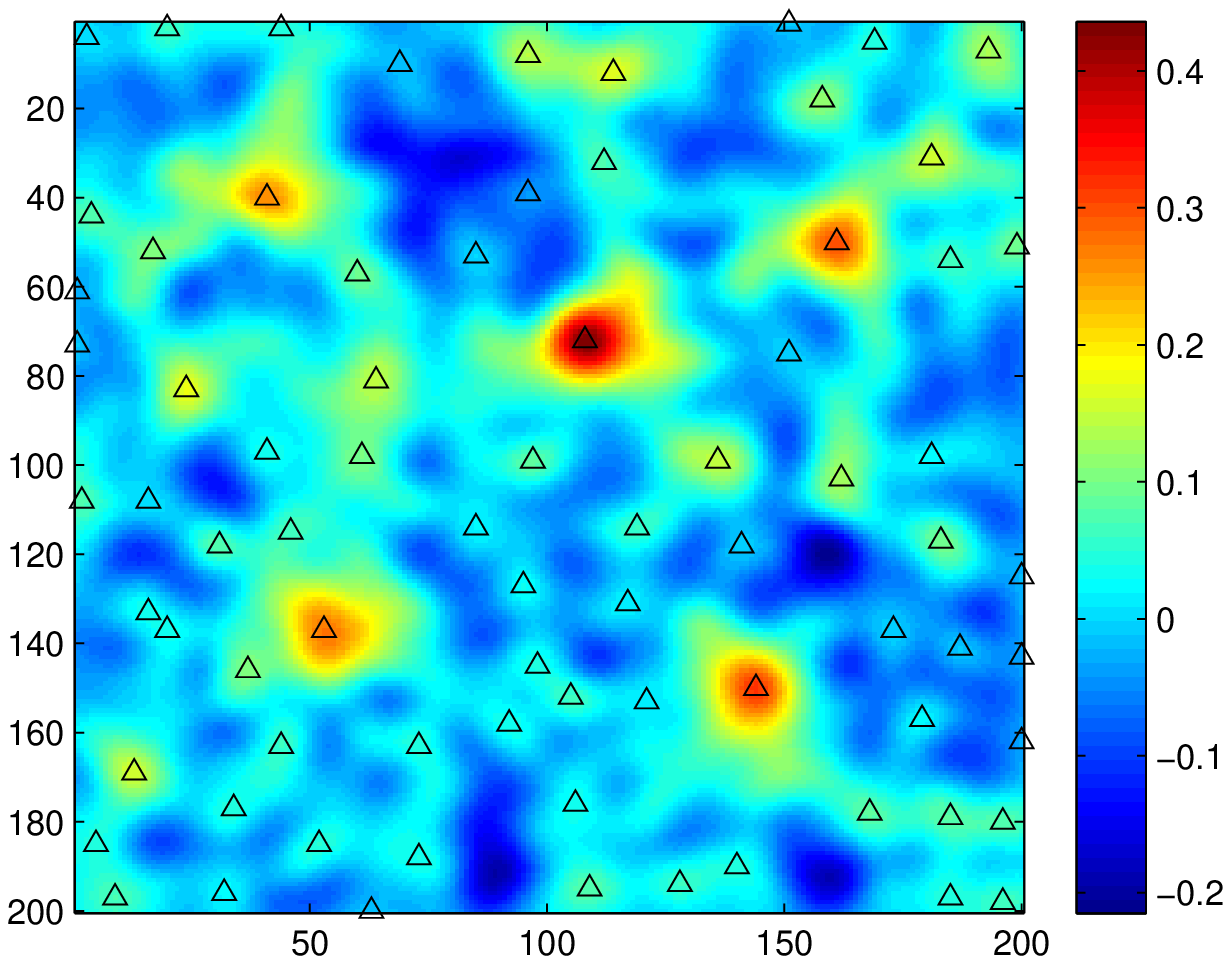} &
\includegraphics[trim=30 10 30 10,clip,width=1.5in]{\figurepath/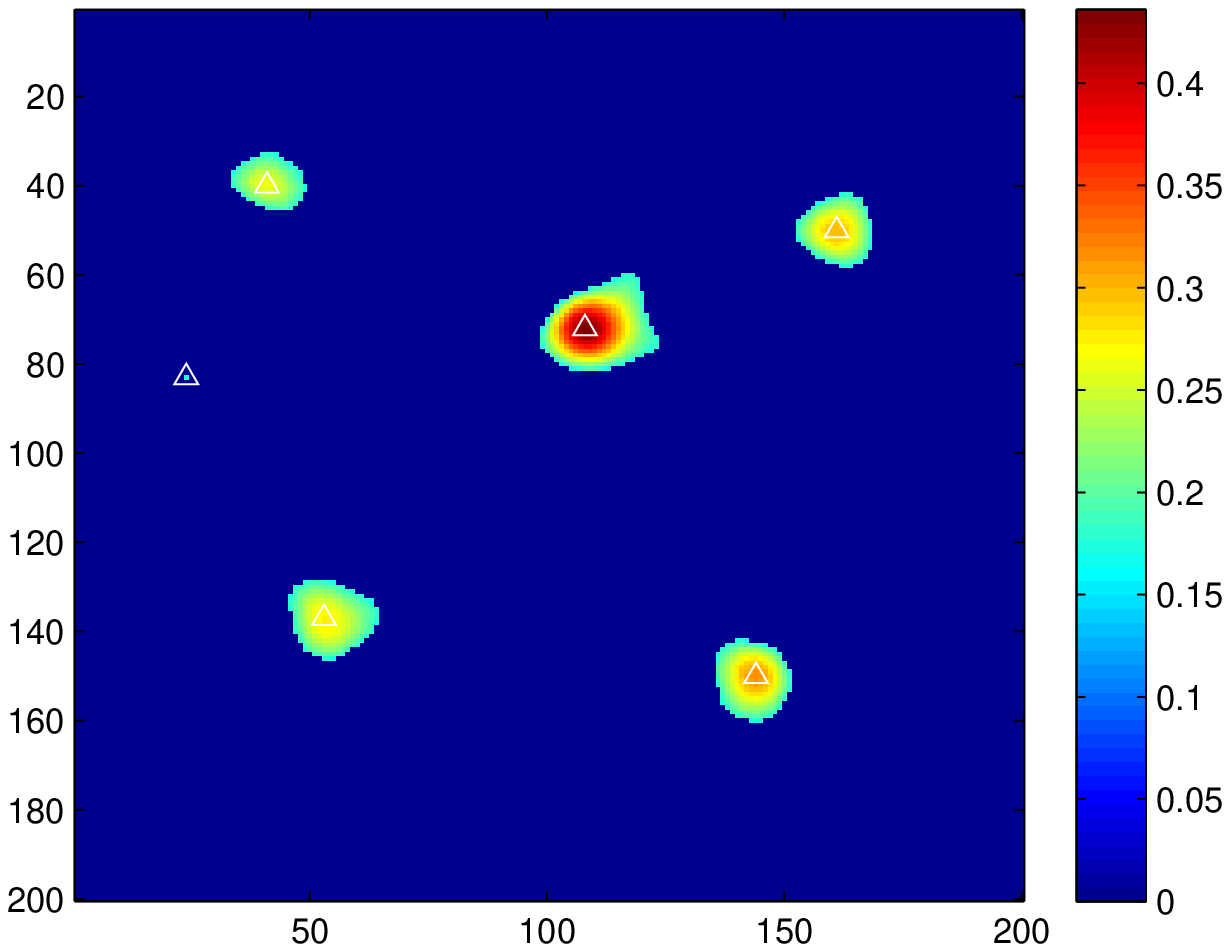} \\
\phantom{---}$y_\gamma(t)$ & \phantom{---}candidate peaks & \phantom{---}significant peaks
\end{tabular}
\caption{ \label{fig:simul} Raw signal $\mu(t)$ with six true peaks of different shapes and simulated Gaussian noise $z(t)$ produce the observed field $y(t)$ and smoothed field $y_\gamma(t)$. Out of 77 local maxima of $y_\gamma(t)$ (candidate peaks), the BH detection threshold at FDR level 0.2 selects six (significant peaks), one of which is a false positive. In this case, five out of six true peaks are detected.}
 \end{center}
 \end{figure}

Following the reasoning of \citet{Schwartzman:2011}, it is shown here that if the noise field is stationary and ergodic, then the proposed algorithm with the BH procedure provides asymptotic control of FDR and power consistency as both the search domain and the signal strength get large, the latter needing to grow only faster than the square root of the log of the former. The large domain assumption helps resolve an interesting aspect of inference for local maxima, namely the fact that the number of tests, equal to the number of observed local maxima, is random. The multiple testing literature usually assumes that the number of tests is fixed. The large domain assumption implies that, by ergodicity, the number of tests behaves asymptotically as its expectation. On the other hand, the strong signal assumption asymptotically eliminates the false positives caused by the smoothed signal spreading into the null regions, causing each signal peak region to be represented by only one observed local maximum within the true domain with probability tending to one. Simulations show that FDR levels are maintained and high power is achieved at finite search domains and moderate signal strength. We also find that the optimal smoothing kernel is approximately that which is closest in shape and bandwidth to the signal peaks to be detected, akin to the matched filter theorem in signal processing \citep{Pratt:1991,Simon:1995}. This bandwidth is much larger than the usual optimal bandwidth in nonparametric regression.

The results in this paper supercede those of \citet{Schwartzman:2011} in the sense that they can be seen as special cases when the domain is of one dimension and no pre-thresholding is applied. In addition, this paper provides specific rates for the asymptotic results, not available in \citet{Schwartzman:2011}, as well as a more rigorous discussion of the optimal smoothing bandwidth. Furthermore, the new concept of approximating p-values by pre-thresholding is not only useful in solving the multi-dimensional problem in this paper but it provides a potentially powerful tool for detection of peaks in non-stationary Gaussian fields on Euclidean space or manifolds \citep{CS:2014}.

The data analysis and all simulations were implemented in \texttt{Matlab}.


\section{The multiple testing scheme}
\label{sec:model-algorithm}


\subsection{The model}
\label{sec:model}
Consider the signal-plus-noise model
\begin{equation}
\label{eq:signal+noise}
y(t) = \mu(t) + z(t), \qquad t \in \R^N
\end{equation}
where the signal $\mu(t)$ is composed of unimodal positive peaks of the form
\begin{equation}
\label{eq:mu}
\mu(t) = \sum_{j=-\infty}^\infty a_j h_j(t), \qquad a_j > 0,
\end{equation}
and the peak shape $h_j(t) \ge 0$ has compact connected support $S_j = \{t: h_j(t) > 0\}$ and unit action $\int_{S_j} h_j(t)\,dt = 1$ for each $j$. Let $w_\gamma(t) \ge 0$ with bandwidth barameter $\gamma>0$ be a unimodal kernel with compact connected support and unit action. Convolving the process \eqref{eq:signal+noise} with the kernel $w_\gamma(t)$ results in the smoothed random field
\begin{equation}
\label{eq:conv}
y_\gamma(t) = w_\gamma(t) * y(t) =
\int_{\R^N} w_\gamma(t-s) y(s)\,ds = \mu_\gamma(t) + z_\gamma(t),
\end{equation}
where the smoothed signal and smoothed noise are defined as
\begin{equation}
\label{eq:mu-gamma}
\mu_\gamma(t) = w_\gamma(t) * \mu(t) = \sum_{j=-\infty}^\infty a_j h_{j,\gamma}(t), \qquad z_\gamma(t) = w_\gamma(t) * z(t).
\end{equation}

The smoothed noise $z_\gamma(t)$ defined by \eqref{eq:conv} and \eqref{eq:mu-gamma} is assumed to be a zero-mean thrice differentiable stationary Gaussian field such that for any non-negative integers $k_1, \ldots, k_N$ with $\sum_{i=1}^N k_i=k \in \{0, 1, 2, 3, 4\}$,
\begin{equation}
\label{Eq:variance-asymptotics}
\int_{\R_+^N} \left|\frac{\partial^k r_\gamma(t)}{\partial t_1^{k_1} \cdots \partial t_N^{k_N}} \right| dt<\infty,
\end{equation}
where $\R_+^N=[0,\infty)^N$ and $r_\gamma(t)=\E[z_\gamma(t)z_\gamma(0)]$. The technical condition \eqref{Eq:variance-asymptotics} is needed for obtaining the rates of FDR control and power consistency below, and by taking $k=0$, it implies the ergodicity of $z_\gamma(t)$ \citep{Cramer:1967}. It requires that the derivatives of the covariance function of the smoothed field $z_\gamma(t)$ should not decay too slowly. This can be easily obtained by using a Gaussian kernel $w_\gamma(t)$ in \eqref{eq:conv}, regardless of the smoothness of the original noise.

For each $j$, the smoothed peak shape $h_{j,\gamma}(t) = w_\gamma(t)*h_j(t) \ge 0$ is unimodal and has compact connected support $S_{j,\gamma}$ and unit action. For each $j$, we require that $h_{j,\gamma}(t)$ is twice differentiable in the interior of $S_{j,\gamma}$ and has no other critical points within its support. For simplicity, the theory requires that the supports $S_{j,\gamma}$ do not overlap although this is not crucial in practice.

Let $\tau_j$ be the unique point in the signal domain $S_{j,\gamma}$ where the peak shape $h_{j,\gamma}(t)$ attains its maximum. We impose additionally the following uniformity assumptions on the signal in our model.
\begin{enumerate}[label={(\arabic*)}]
\item $\sup_{j}|S_{j, \gamma}|<\infty$ and $\inf_{j} M_j>0$, where $M_j=h_{j,\gamma}(\tau_j)$.
\item There exists a universal $\delta>0$ such that $I_j^{\rm mode}:= \{t\in \R^N: \|t-\tau_j\| \le \delta\} \subset S_j$ for all $j$, $C=\inf_j C_j>0$ and $D=\inf_j D_j>0$, where
\begin{equation*}
\begin{split}
C_j &= \inf_{t\in I_j^{\rm side}} \l \nabla h_{j,\gamma}(t), (\tau_j-t)/\|\tau_j-t\|\r ,\quad I_j^{\rm side}=S_{j,\gamma}\backslash I_j^{\rm mode},\\
D_j &= -\sup_{t\in I_j^{\rm mode}} \sup_{\|x\|=1} x^T \nabla^2 h_{j,\gamma}(t) x.
\end{split}
\end{equation*}
\end{enumerate}
Here $\nabla f$ and $\nabla^2 f$ denote the gradient and Hessian of a real-valued function $f$, respectively.

Assumption (1) indicates that the sizes of the supports $S_{j, \gamma}$ are bounded and that the heights of the peaks of $h_{j,\gamma}$ are uniformly positive. Assumption (2) indicates that, uniformly for all $j$,  $h_{j,\gamma}(t)$ increases toward the mode $\tau_j$ along the direction $(\tau_j-t)/\|\tau_j-t\|$, and that $\sup_{\|x\|=1} x^T \nabla^2 h_{j,\gamma}(t) x$, which is the largest eigenvalue of matrix $\nabla^2 h_{j,\gamma}(t)$, is strictly negative in the vicinity of the mode so that the peak shape is strictly concave there.


\subsection{The STEM algorithm}
\label{sec:alg}
Suppose we observe $y(t)$ defined by \eqref{eq:signal+noise} in the cube of length $L$ centered at the origin, denoted by $U(L)=(-L/2,L/2)^N$, and suppose it contains $J$ peaks. We call the following procedure STEM (Smoothing and TEsting of Maxima).

\begin{alg}[STEM algorithm]
\label{alg:STEM}
\hfill\par\noindent
\begin{enumerate}[label={(\arabic*)}]
\item {\em Kernel smoothing}:
Construct the field \eqref{eq:conv}, ignoring the effects on the boundary of $U(L)$.
\item {\em Candidate peaks}:
For a fixed pre-threshold $v\in[-\infty,\infty)$, find the set of local maxima exceeding level $v$ for $y_\gamma(t)$ in $U(L)$
\begin{equation}
\label{eq:Tv}
\tilde{T}(v) = \left\{ t \in U(L): \ y_{\gamma}(t)>v, \
\nabla y_{\gamma}(t) = 0, \ \nabla^2 y_{\gamma}(t) \prec 0 \right\},
\end{equation}
where $\nabla^2 y_{\gamma}(t) \prec 0$ means that the Hessian matrix is negative definite.
\item {\em P-values}:
For each $t \in \tilde{T}(v)$, compute the p-value $p_\gamma(t,v)$ for testing the (conditional) hypothesis
$$
\mH_{0}(t): \ \mu(t) = 0 \quad \text{vs.} \quad \mH_{A}(t): \  \mu(t) > 0,
\qquad t \in \tilde{T}(v).
$$
\item {\em Multiple testing}:
Let $\tilde{m}(v) = \# \{t\in \tilde{T}(v)\}$ be the number of tested hypotheses. Apply a multiple testing procedure on the set of $\tilde{m}(v)$ p-values $\{p_\gamma(t,v), \, t \in \tilde{T}(v)\}$, and declare significant all local maxima whose p-values are smaller than the significance threshold.
\end{enumerate}
\end{alg}

When $v=-\infty$, we regard $\tilde{T}=\tilde{T}(-\infty)$ as the set of local maxima of $y_\gamma(t)$ in $U(L)$. In such case, Algorithm \ref{alg:STEM} becomes an $N$-dimensional version of the STEM algorithm proposed in \citep{Schwartzman:2011} for one-dimensional domains. When $v>-\infty$, an option not available in \citep{Schwartzman:2011}, Algorithm \ref{alg:STEM} provides a different way of selecting candidate peaks and computing p-values by choosing a pre-threshold $v$. In particular, this provides an efficient way to approximate the p-values for stationary and non-isotropic Gaussian noise (Section \ref{sec:overshoot}).

Steps (1) and (2) above are well defined under the model assumptions. Step (3) is detailed in Sections \ref{sec:height-distr} and \ref{sec:overshoot} below. For step (4), we use the BH procedure to control FDR (Section \ref{sec:FDR-control}). Notice that, in contrast to the usual BH procedures, the number of tests $\tilde{m}(v)$ is random.

\subsection{Error definitions}
\label{sec:errors}
As in \citep{Schwartzman:2011}, because the location of truly detected peaks may shift as a result of noise, we define a significant local maximum to be a true positive if it falls anywhere inside the support of a true peak. Conversely, we define it to be a false positive if it falls outside the support of any true peak.

Assuming the model of Section \ref{sec:model}, define the \emph{signal region} $\mathbb{S}_1 =  \cup_{j=1}^J S_j$ and \emph{null region} $\mathbb{S}_0 = U(L) \setminus \mathbb{S}_1$. For a significance threshold $u$ above the pre-threshold $v$, the total number of detected peaks and the number of falsely detected peaks are
\begin{equation}
\label{eq:R-and-V}
R(u) = \#\{t\in \tilde{T}(u)\} \quad {\rm and } \quad V(u) = \#\{t\in \tilde{T}(u)\cap \mathbb{S}_0\}
\end{equation}
respectively. Both are defined as zero if $\tilde{T}(u)$ is empty. The FDR is defined as the expected proportion of falsely detected peaks
\begin{equation}
\label{eq:FDR}
\FDR(u) = \E\left\{ \frac{V(u)}{R(u)\vee1} \right\}.
\end{equation}


Kernel smoothing enlarges the signal support and increases the probability of obtaining false positives in the null regions neighboring the signal \citep{Pacifico:2007}. Define the \emph{smoothed signal region} $\mathbb{S}_{1, \gamma} =  \cup_{j=1}^J S_{j, \gamma} \supset \mathbb{S}_1$ and \emph{smoothed null region} $\mathbb{S}_{0, \gamma} = U(L) \setminus \mathbb{S}_{1, \gamma} \subset \mathbb{S}_0$. We call the difference between the expanded signal support and the true signal support the \emph{transition region} $\mathbb{T}_{\gamma} = \mathbb{S}_{1, \gamma}\setminus \mathbb{S}_1 =\mathbb{S}_{0}\setminus \mathbb{S}_{0, \gamma} = \cup_{j=1}^J T_{j, \gamma}$, where $T_{j, \gamma} = S_{j, \gamma} \setminus S_j$ is the transition region corresponding to each peak $j$.

In general, more than one significant local maximum may be obtained within the domain of a true peak, affecting the interpretation of definition \eqref{eq:FDR}. However, this has no effect asymptotically because each true peak is represented by exactly one local maximum of the smoothed
observed field with probability tending to 1 (Lemma \ref{lemma:bounds} in Section \ref{app:lemmas}).


\subsection{Power}
\label{sec:power}
We define the power of Algorithm \ref{alg:STEM} as the expected fraction of true discovered peaks
\begin{equation}
\label{eq:power}
\Power(u) = \E \left( \frac{1}{J} \sum_{j=1}^{J} \mathbbm{1}_{\left\{
\tilde{T}(u) \cap S_j \ne \emptyset \right\}}\right) = \frac{1}{J} \sum_{j=1}^{J} \Power_j(u),
\end{equation}
where $\Power_j(u)$ is the probability of detecting peak $j$
\begin{equation}
\label{eq:power-j}
\Power_j(u) = \P\left(\tilde{T}(u) \cap S_j \ne \emptyset \right).
\end{equation}
The indicator function in \eqref{eq:power} ensures that only one significant local maximum is counted within the same peak support, so power is not inflated. Again, this has no effect asymptotically because each true peak is represented by exactly one local maximum of the smoothed observed process with probability tending to 1 (Lemma \ref{lemma:bounds} in Section \ref{app:lemmas}).


\newpage

\section{Detection of peaks by the height distribution of local maxima}
\label{sec:height-distr}


\subsection{P-values}
\label{sec:pvalue}
Given the observed heights $y_\gamma(t)$ at the local maxima $t\in\tilde{T}(v)$, the p-values in step (3) of Algorithm \ref{alg:STEM} are computed as
$p_\gamma(t,v) =  F_\gamma\left(y_\gamma(t),v\right)$, $t\in\tilde{T}(v)$, where
\begin{equation}
\label{eq:palm}
F_\gamma(u,v) = \P\left(z_\gamma(t) > u ~\Big|~ t \in \tilde{T}(v)\right)
\end{equation}
denotes the right tail probability of $z_\gamma(t)$ at the local maximum $t \in \tilde{T}(v)$, evaluated under the complete null hypothesis $\mu(t) = 0, \forall t$. By convention, when $v=-\infty$, denote
\begin{equation}\label{eq:F}
F_\gamma(u)=F_\gamma(u,-\infty).
\end{equation}

The conditional distribution \eqref{eq:palm} is a Palm distribution \citep[Ch. 6]{Adler:2010} and requires careful evaluation because the conditioning event has probability zero. Unlike the marginal distribution of $z_\gamma(t)$, it is not Gaussian but stochastically greater. Generally, for a constant-variance Gaussian field, there is an implicit formula for $F_\gamma(\cdot, v)$ \citep{CS:2014}. Theorem \ref{thm:Palm} below (\citep{CS:2014}, \citep{Adler:2010}) provides the formula for $F_\gamma(\cdot, v)$ for stationary Gaussian fields.

Let $\sigma_\gamma^2={\rm Var}(z_\gamma(t))$ and $\La_\gamma={\rm Cov}(\nabla z_\gamma(t))$, both independent of $t$ due to the stationarity of $z_\gamma(t)$. Denote by $\tilde{m}_{0,\gamma}(U(1))$ and $\tilde{m}_{0,\gamma}(U(1),u)$ respectively the number of local maxima of $z_\gamma(t)$ and the number of local maxima of $z_\gamma(t)$ exceeding level $u$ in the unit cube $U(1)=(-1/2,1/2)^N$. In fact, $\tilde{m}_{0,\gamma}(U(1))=\tilde{m}_{0,\gamma}(U(1),-\infty)$. By the Kac-Rice formula \citep{Adler:2007},
\begin{equation}\label{eq:expected-local-maxima}
\begin{split}
\E&[\tilde{m}_{0,\gamma}(U(1),u)] \\
&= \E\left[|{\rm det}\nabla^2 z_\gamma(t)| \mathbbm{1}_{\{z_\gamma(t)>u\}} \mathbbm{1}_{\{\nabla^2 z_\gamma(t)\prec 0\}}\big|\nabla z_\gamma(t)=0\right]p_{\nabla z_\gamma(t)}(0),
\end{split}
\end{equation}
where $p_{\nabla z_\gamma(t)}(0)=(2\pi)^{-N/2}({\rm det} (\La_\gamma))^{-1/2}$ is the density function of $\nabla z_\gamma(t)$ evaluated at $0$.

\begin{thm}
\label{thm:Palm}
Suppose the assumptions of Section \ref{sec:model} hold and that $\mu(t) = 0, \forall t$.
Then the distributions \eqref{eq:palm} and \eqref{eq:F} are given respectively by
\begin{equation*}
F_\gamma(u,v)= \frac{\E[\tilde{m}_{0,\gamma}(U(1),u)]}{\E[\tilde{m}_{0,\gamma}(U(1),v)]} \quad {\text and} \quad  F_\gamma(u)= \frac{\E[\tilde{m}_{0,\gamma}(U(1),u)]}{\E[\tilde{m}_{0,\gamma}(U(1))]}.
\end{equation*}
\end{thm}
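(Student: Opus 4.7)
The plan is to realize $F_\gamma(u,v)$ as the Palm distribution at the origin of the stationary point process $\Pi_v = \{t \in \R^N : z_\gamma(t) > v,\, \nabla z_\gamma(t)=0,\, \nabla^2 z_\gamma(t) \prec 0\}$ of local maxima of $z_\gamma$ above level $v$, and then evaluate it by comparing the two intensities supplied by the Kac--Rice formula \eqref{eq:expected-local-maxima}.

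First I would verify that $\Pi_v$ is a well-defined stationary point process with finite and strictly positive intensity $\lambda_v = \E[\tilde{m}_{0,\gamma}(U(1),v)]$. Stationarity is inherited from that of $z_\gamma$. Finiteness of $\lambda_v$ follows from \eqref{eq:expected-local-maxima} together with the smoothness of $z_\gamma$ and the non-degeneracy of the joint Gaussian distribution of $(z_\gamma(0), \nabla z_\gamma(0), \nabla^2 z_\gamma(0))$ ensured by \eqref{Eq:variance-asymptotics}; the same non-degeneracy guarantees that local maxima are almost surely isolated and non-degenerate, so the Kac--Rice representation applies without correction terms.

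The key step is the Campbell--Mecke identity for the stationary point process $\Pi_v$: for any nonnegative measurable functional $F$ of the field,
\begin{equation*}
\E\Bigl[\sum_{t \in \Pi_v \cap U(1)} F(\theta_t z_\gamma)\Bigr] = \lambda_v \, \E^{0}\!\bigl[F(z_\gamma)\bigr],
\end{equation*}
where $\theta_t$ is the spatial shift and $\E^{0}$ is expectation under the Palm distribution of $\Pi_v$ at the origin, which by definition is the conditional law of $z_\gamma$ given $0 \in \Pi_v$ and hence matches the conditioning in \eqref{eq:palm}. Taking $F(z_\gamma) = \mathbbm{1}_{\{z_\gamma(0) > u\}}$ with $u \geq v$, the summand becomes $\mathbbm{1}_{\{z_\gamma(t) > u\}}$, and because any local maximum above $u$ automatically lies in $\Pi_v$, the sum on the left equals $\tilde{m}_{0,\gamma}(U(1),u)$. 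The right-hand side equals $\lambda_v F_\gamma(u,v)$, and dividing by $\lambda_v$ yields the first formula; the case $v=-\infty$ specializes to the second.

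The main obstacle is the rigorous identification of \eqref{eq:palm} with the Palm distribution of $\Pi_v$, since the conditioning event has probability zero and requires a proper limiting interpretation. I would either invoke the general theory in \citep{Adler:2010}, which sets up this Palm calculus for smooth stationary random fields and justifies Campbell--Mecke in the present setting, or bypass Palm distributions entirely by defining $F_\gamma(u,v)$ directly as the ratio of Kac--Rice intensities---an approach consistent with the way both sides of the claimed identity are constructed from \eqref{eq:expected-local-maxima}, which by stationarity already reduces each expected count to a single-point expectation of the determinant functional over the conditional law given $\nabla z_\gamma(0)=0$.
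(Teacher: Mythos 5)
Your proposal is correct and follows essentially the same route the paper relies on: the paper gives no self-contained proof of Theorem \ref{thm:Palm} but attributes it to \citep{CS:2014} and \citep[Ch.~6]{Adler:2010}, where $F_\gamma(\cdot,v)$ is identified as the Palm distribution of the point process of local maxima above $v$ and evaluated, exactly as you do, as the ratio of Kac--Rice intensities $\E[\tilde{m}_{0,\gamma}(U(1),u)]/\E[\tilde{m}_{0,\gamma}(U(1),v)]$ via the Campbell--Mecke identity. Your explicit acknowledgment that the zero-probability conditioning in \eqref{eq:palm} must be interpreted through the Palm formalism is precisely the point the paper flags, so no gap remains.
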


It should be mentioned that the expectations above involving the indicator function $\mathbbm{1}_{\{\nabla^2 z_\gamma(t)\prec 0\}}$ are extremely hard to compute. Therefore, the explicit formula for $F_\gamma$ is usually unknown. The only exception, as far as we know, is the case when the field $z_\gamma$ is isotropic. This is because, in such case, one may apply the Gaussian Orthogonal Ensemble (GOE) technique from random matrix theory to compute these expectations \citep{Fyodorov04}. The corresponding explicit formula for $F_\gamma$ for isotropic Gaussian fields was recently obtained in \citep[Theorem 2.14]{CS:2014}. This will be used to compute the p-values exactly, see Proposition \ref{prop:Palm-isotropic} below.

\subsection{Error control and power consistency}
\label{sec:FDR-control}

Suppose the BH procedure is applied in step (4) of Algorithm \ref{alg:STEM}, as follows. For a fixed $\alpha \in (0,1)$, let $k$ be the largest index for which the $i$th smallest p-value is less than $i\alpha/\tilde{m}(v)$. Then the null hypothesis $\mH_0(t)$ at $t \in \tilde{T}(v)$ is rejected if
\begin{equation}
\label{eq:thresh-BH-random}
p_\gamma(t,v) < \frac{k\alpha}{\tilde{m}(v)}
\quad \iff \quad y_\gamma(t) >
\tilde{u}_{\BH}(v) = F_\gamma(\cdot, v)^{-1} \left(\frac{k\alpha}{\tilde{m}(v)}\right),
\end{equation}
where $k\alpha/\tilde{m}(v)$ is defined as 1 if $\tilde{m}(v)=0$. Since $\tilde{u}_{\BH}(v)$ is random, definition \eqref{eq:FDR} is hereby modified to
\begin{equation}
\label{eq:true-FDR}
\FDR_{\BH}(v) = \E\left\{ \frac{V(\tilde{u}_{\BH}(v))}{R(\tilde{u}_{\BH}(v))\vee1} \right\},
\end{equation}
where $R(\cdot)$ and $V(\cdot)$ are defined in \eqref{eq:R-and-V} and the expectation is taken over all possible realizations of the random threshold $\tilde{u}_{\BH}(v)$.

Define the conditions:
\begin{enumerate}
\item[(C1)] The assumptions of Section \ref{sec:model} hold.
\item[(C2)] $L \to \infty$ and $a = \inf_j a_j \to\infty$, such that $(\log L)/a^2 \to 0$, $J/L^N = A_1 + O(a^{-2}+L^{-N/2})$ and $|\mathbb{S}_{1,\gamma}|/L^N = A_{2,\gamma} + O(a^{-2}+L^{-N/2})$ with $A_1>0$ and $A_{2,\gamma} \in [0,1)$.
\end{enumerate}

\begin{thm}
\label{thm:FDR}
Let conditions (C1) and (C2) hold.

(i) Suppose that Algorithm \ref{alg:STEM} is applied with a fixed threshold $u>v$, then
\begin{equation}\label{eq:bound-FDR-u}
\FDR(u) \le \frac{\E[\tilde{m}_{0, \gamma}(U(1),u)](1-A_{2,\gamma})}{\E[\tilde{m}_{0, \gamma}(U(1),u)](1-A_{2,\gamma}) + A_1} + O(a^{-2}+L^{-N/2}).
\end{equation}

(ii) Suppose that Algorithm \ref{alg:STEM} is applied with the random threshold $\tilde{u}_{\BH}(v)$ \eqref{eq:thresh-BH-random}, then
\begin{equation}\label{eq:bound-FDR}
\FDR_{\BH}(v) \le \alpha\frac{\E[\tilde{m}_{0, \gamma}(U(1),v)](1-A_{2,\gamma})}{\E[\tilde{m}_{0, \gamma}(U(1),v)](1-A_{2,\gamma}) + A_1} + O(a^{-1}+L^{-N/4}).
\end{equation}
\end{thm}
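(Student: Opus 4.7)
I split the local maxima of $y_\gamma$ in $U(L)$ according to the three regions $\mathbb{S}_1$, $\mathbb{T}_\gamma$, and $\mathbb{S}_{0,\gamma}$ and use ergodicity to replace random counts by their expectations. By Lemma \ref{lemma:bounds}, with probability $1 - O(a^{-2})$ each true peak support $S_j$ contains exactly one local maximum of $y_\gamma$ and each transition region $T_{j,\gamma}$ contains none, so $V(u)$ equals, up to this exceptional event, the number $\tilde{m}_0(u)$ of local maxima of the pure-noise field $z_\gamma$ in $\mathbb{S}_{0,\gamma}$ exceeding $u$, while $R(u) = J + \tilde{m}_0(u)$ provided $u$ lies below all peak heights. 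Under (C2) the minimum peak height is at least $a\inf_j M_j - \sup_{U(L)}|z_\gamma|$, and a standard Borell--TIS bound gives $\sup_{U(L)}|z_\gamma| = O_p(\sqrt{\log L}) = o(a)$, so the ``peaks exceed $u$'' event holds with probability $1 - O(a^{-2})$.

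\textbf{Part (i).} From the Kac--Rice formula \eqref{eq:expected-local-maxima} and stationarity I obtain $\E[\tilde{m}_0(u)] = |\mathbb{S}_{0,\gamma}|\,\E[\tilde{m}_{0,\gamma}(U(1),u)]$, and (C2) yields $|\mathbb{S}_{0,\gamma}|/L^N = (1-A_{2,\gamma}) + O(a^{-2}+L^{-N/2})$. Condition \eqref{Eq:variance-asymptotics} with $k=0$ makes the covariance of $z_\gamma$ integrable, so a second-order Kac--Rice calculation gives $\var(\tilde{m}_0(u)) = O(L^N)$ and hence $\tilde{m}_0(u) - \E[\tilde{m}_0(u)] = O_p(L^{N/2})$. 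Using the concavity of $x \mapsto x/(x+J)$, Jensen's inequality yields
\[
\FDR(u) \le \frac{\E[\tilde{m}_0(u)]}{\E[\tilde{m}_0(u)] + J} + O(a^{-2}+L^{-N/2}),
\]
and substituting the asymptotic forms of $\E[\tilde{m}_0(u)]/L^N$ and $J/L^N$ from (C2) produces \eqref{eq:bound-FDR-u}.

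\textbf{Part (ii).} The BH construction gives the identity $F_\gamma(\tilde{u}_{\BH}(v), v) = \alpha R(\tilde{u}_{\BH}(v))/\tilde{m}(v)$, so on $\{R(\tilde{u}_{\BH}(v))\ge 1\}$,
\[
\frac{V(\tilde{u}_{\BH}(v))}{R(\tilde{u}_{\BH}(v))} = \frac{\alpha\, V(\tilde{u}_{\BH}(v))}{\tilde{m}(v)\, F_\gamma(\tilde{u}_{\BH}(v), v)}.
\]
By Theorem \ref{thm:Palm}, the p-value of each null-region local maximum above $v$ is marginally uniform on $[0,1]$, so $\E[V(u)] = \E[\tilde{m}_0(v)]\,F_\gamma(u,v)$ for every fixed $u>v$. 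I would promote this to the uniform statement
\[
\sup_{u > v} \left| \frac{V(u)}{\tilde{m}_0(v)\,F_\gamma(u,v)} - 1 \right| = O_p(a^{-1} + L^{-N/4})
\]
by chaining on a polynomial-size net of levels truncated above by $\sup_{U(L)}|z_\gamma|$, using \eqref{Eq:variance-asymptotics} for per-level variance bounds and a Cauchy--Schwarz step to absorb the randomness of $\tilde{u}_{\BH}(v)$. Combining this with the concentration $\tilde{m}(v) = J + \tilde{m}_0(v) + O_p(L^{N/2})$ and plugging back into the BH identity delivers \eqref{eq:bound-FDR}, the factor $\alpha$ being inherited directly from the BH relation.

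\textbf{Main obstacle.} The hardest step is the uniform concentration in Part (ii): because the BH level $\tilde{u}_{\BH}(v)$ is random, the ratio $V(u)/F_\gamma(u,v)$ must be controlled simultaneously over $u$, and the Cauchy--Schwarz loss accounts for the worse rates $a^{-1}$ and $L^{-N/4}$ in place of $a^{-2}$ and $L^{-N/2}$. Verifying that the covariance integrability in \eqref{Eq:variance-asymptotics} is strong enough to support chaining over a growing range of levels, and carefully coupling the truncation at $\sup_{U(L)}|z_\gamma|$ with the strong-signal event from Lemma \ref{lemma:bounds}, is where most of the technical work lives.
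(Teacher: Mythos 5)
Your part (i) is essentially the paper's argument: decompose the local maxima over $\mathbb{S}_{0,\gamma}$, $\mathbb{T}_\gamma$ and the peak supports, kill the transition region and force one maximum per peak via Lemma \ref{lemma:bounds}/\ref{lemma:unique-max}, use Kac--Rice plus the variance bound of Lemma \ref{lemma:variance-localmax} for concentration of $\tilde m_0(u)/L^N$, and pass from $\E[V/(V+J)]$ to $\E[V]/(\E[V]+J)$ (the paper invokes Lemma 12 of \citet{Schwartzman:2011}, which is the same Jensen-type step you describe). That part is sound.

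Part (ii), however, has a genuine gap. Your key step is the uniform statement $\sup_{u>v}\bigl|V(u)/(\tilde m_0(v)F_\gamma(u,v))-1\bigr|=O_p(a^{-1}+L^{-N/4})$. This cannot hold as stated: for $u$ near the top of the range (even after truncating at $\sup_{U(L)}|z_\gamma|\asymp\sigma_\gamma\sqrt{2N\log L}$), $V(u)$ is a small integer while $\E[\tilde m_0(v)]F_\gamma(u,v)$ is of order one, so the relative error is $O_p(1)$, not $o_p(1)$; for $u$ just beyond the largest local maximum the ratio is exactly $0$ and the supremum is at least $1$. Relative concentration of $V(u)$ is only available where $\E[V(u)]\asymp L^N$, i.e.\ for $u$ in a bounded range. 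The paper's proof supplies precisely the localization you are missing: it first shows that the random BH threshold satisfies $\alpha\tilde G(u,v)\ge F_\gamma(u,v)$, replaces $\tilde G$ by its ergodic limit to identify a deterministic, bounded threshold $u^*_{\BH}(v)$ in \eqref{eq:u_BH^*}, proves $\P(|\tilde u_{\BH}(v)-u^*_{\BH}(v)|>a^{-1}+L^{-N/4})=O(a^{-1}+L^{-N/4})$, and then controls $\E[\tilde m_{0,\gamma}(\tilde u_{\BH}(v))]$ by sandwiching it between $\tilde m_{0,\gamma}(u^*_{\BH}(v)\pm(a^{-1}+L^{-N/4}))$ using monotonicity, H\"older, and a Taylor expansion of $F_\gamma$. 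Only Chebyshev-type concentration at the two fixed levels $u^*_{\BH}(v)\pm\delta$ is needed, not a chaining argument over all levels. Your "Cauchy--Schwarz step to absorb the randomness of $\tilde u_{\BH}(v)$" is exactly where this localization must be carried out explicitly; without first proving that $\tilde u_{\BH}(v)$ concentrates at a bounded deterministic level (which also requires the concentration of $\tilde m_{1,\gamma}(v)/L^N$ to $A_1$ from Lemma \ref{lemma:mean-m-1}, so that the denominator in $\tilde G$ is nondegenerate), the argument does not close, and the $\alpha$ factor cannot simply be "inherited from the BH relation."
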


The proof of Theorem \ref{thm:FDR} is given in Section \ref{app:FDR}. It can be seen from the proof of Theorem \ref{thm:FDR} that
the inequalities in \eqref{eq:bound-FDR-u} and \eqref{eq:bound-FDR} become equalities asymptotically (without specific rates), so the bounds given in \eqref{eq:bound-FDR-u} and \eqref{eq:bound-FDR} are tight and can be regarded respectively as the asymptotic estimators of $\FDR(u)$ and $\FDR_{\BH}(v)$. By \eqref{eq:u_BH^*}, the random threshold $\tilde{u}_{\BH}(v)$ converges asymptotically to the deterministic threshold
\begin{equation}
\label{eq:thresh-BH-fixed}
u^*_{\BH}(v) = F_\gamma^{-1}\left(\frac{\alpha A_1\E[\tilde{m}_{0,\gamma}(U(1), v)]/\E[\tilde{m}_{0,\gamma}(U(1))]}{A_1 + \E[\tilde{m}_{0,\gamma}(U(1),v)](1-A_{2,\gamma})(1-\alpha)}\right),
\end{equation}
which is a strictly increasing function in $v$. The threshold \eqref{eq:thresh-BH-random} can be viewed as the smallest solution of the equation $\alpha \tilde{G}(u,v) \ge F_\gamma(u,v)$, where $\tilde{G}(u,v)$ is the empirical right cumulative distribution function of $y_\gamma(t), ~t\in \tilde{T}(v)$ \citep{Genovese:2002}. Taking the limit of that equation as $L\to \infty$ yields the solution \eqref{eq:thresh-BH-fixed}.

Similar to the definition of $\FDR_{\BH}$ \eqref{eq:true-FDR}, since $\tilde{u}_{\BH}(v)$ is random, define
\begin{equation}
\label{eq:true-power}
\Power_{\BH}(v) = \E \left( \frac{1}{J} \sum_{j=1}^{J} \mathbbm{1}_{\left\{
\tilde{T}(\tilde{u}_{\BH}(v)) \cap S_j \ne \emptyset \right\}}\right).
\end{equation}
Since $\tilde{u}_{\BH}(v)$ converges to the deterministic threshold $u^*_{\BH}(v)$, which attains the minimum at $v=-\infty$, we see that the power is asymptotically maximized at $v=-\infty$ when $\gamma$ is fixed. This will be reflected in the simulation studies below (Figure \ref{fig:error-os}) and it shows that, if the exact height distribution of local maxima $F_\gamma(\cdot, v)$ or $F_\gamma(\cdot)$ is known, for example the smoothed noise $z_\gamma$ is an isotropic Gaussian field, then we will choose to apply the original STEM algorithm without pre-thresholding (i.e., $v=-\infty$) to perform the test.

The following lemma, whose proof is given in Section \ref{app:power}, provides an asymptotic approximation to the power at a fixed threshold.
\begin{lemma} \label{lemma:power-approx}
Let conditions (C1) and (C2) hold. As $a_j \to \infty$, the power for peak $j$ \eqref{eq:power-j} can be approximated by
\begin{equation}
\label{eq:approx-power}
\Power_j(u) = \Phi\left(\frac{a_j h_{j,\gamma}(\tau_j) - u}{\sigma_\gamma}\right)(1 + O(a_j^{-2})).
\end{equation}
\end{lemma}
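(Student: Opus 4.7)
The plan is a local analysis around each peak mode $\tau_j$: on a high-probability event $B_j$, the observed smoothed field $y_\gamma$ has a unique local maximum $t_j^\star$ near $\tau_j$ whose height is well approximated by $a_j h_{j,\gamma}(\tau_j) + z_\gamma(\tau_j)$. Since $m_j := a_j h_{j,\gamma}(\tau_j) - u \to \infty$, the Gaussian c.d.f.\ $\Phi$ is smooth and close to $1$ at $m_j/\sigma_\gamma$, converting this height approximation into the power approximation \eqref{eq:approx-power}.

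To define $B_j$, I would intersect the sup bound $\sup_{\bar S_{j,\gamma}}\|\nabla z_\gamma\| \le a_j C/2$, the sup bound $\sup_{t \in I_j^{\rm mode}}\sup_{\|x\|=1} x^T\nabla^2 z_\gamma(t) x \le a_j D/2$, and a tighter sup bound $\sup_{\bar I_j^{\rm mode}}\|\nabla z_\gamma\| \le M$ with $M = c\sqrt{\log a_j}$. On $B_j$, assumption (2) of Section \ref{sec:model} gives: (i) for $t \in I_j^{\rm side}$, $\l a_j\nabla h_{j,\gamma}(t)+\nabla z_\gamma(t),\,(\tau_j-t)/\|\tau_j-t\|\r \ge a_j C/2 > 0$, so $y_\gamma$ has no critical point in $I_j^{\rm side}$; (ii) for $t \in I_j^{\rm mode}$, $\nabla^2 y_\gamma(t) \prec 0$, so $y_\gamma$ is strictly concave there; and (iii) combined with $\nabla y_\gamma$ pointing inward across $\partial I_j^{\rm mode}$ by continuity, these force the unique local maximum of $y_\gamma$ in $S_{j,\gamma} \supset S_j$ to lie at a single interior point $t_j^\star \in I_j^{\rm mode}$. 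Applying Borell--TIS to the stationary Gaussian derivative fields on $\bar S_{j,\gamma}$, whose variances and expected suprema are bounded uniformly in $j$ under \eqref{Eq:variance-asymptotics}, then yields $\P(B_j^c) = O(a_j^{-2})$.

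Writing $s^\star = t_j^\star - \tau_j$, the first-order condition $a_j\nabla h_{j,\gamma}(t_j^\star) = -\nabla z_\gamma(t_j^\star)$, together with $\nabla h_{j,\gamma}(\tau_j)=0$ and the uniform strict concavity of $h_{j,\gamma}$ near its mode, yields $\|s^\star\| = O(\|\nabla z_\gamma(\tau_j)\|/a_j) = O(\sqrt{\log a_j}/a_j)$ on $B_j$. A second-order Taylor expansion of $y_\gamma(\tau_j+s^\star)$ around $\tau_j$ (using $\nabla h_{j,\gamma}(\tau_j)=0$) then gives
$$
y_\gamma(t_j^\star) = a_j h_{j,\gamma}(\tau_j) + z_\gamma(\tau_j) + R_j,\qquad |R_j| = O\bigl(\tfrac{\log a_j}{a_j}\bigr) \text{ on } B_j.
$$
Since on $B_j$ the event $\{\tilde T(u)\cap S_j\ne\emptyset\}$ reduces to $\{y_\gamma(t_j^\star)>u\}$, sandwiching via the bound on $|R_j|$ and using $z_\gamma(\tau_j) \sim N(0,\sigma_\gamma^2)$ yields
$$
\Power_j(u) = \Phi(m_j/\sigma_\gamma) + O\bigl(\phi(m_j/\sigma_\gamma)\tfrac{\log a_j}{a_j}\bigr) + O(\P(B_j^c)) = \Phi(m_j/\sigma_\gamma)\bigl(1 + O(a_j^{-2})\bigr),
$$
since $\Phi(m_j/\sigma_\gamma)\to 1$ and $\phi(m_j/\sigma_\gamma)$ is super-polynomially small. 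The main obstacle I anticipate is the geometric step, namely verifying that the sup bounds defining $B_j$ really do produce a unique interior critical point in $I_j^{\rm mode}$ and rule out any critical points in $I_j^{\rm side}$, uniformly in $j$; this is where the universal constants $C,D,\delta$ and the uniform bound on $|S_{j,\gamma}|$ from Section \ref{sec:model}, together with stationarity of $z_\gamma$, play the crucial role.
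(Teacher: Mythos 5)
Your proposal is correct in outline, but it reaches the result by a genuinely different route than the paper. The paper sandwiches $\Power_j(u)$ between $\P(\#\{t\in\tilde T(u)\cap I_j^{\rm mode}\}=1)$ from below (the third line of Lemma \ref{lemma:bounds}, proved via Borell--TIS and the bound $\P(\sup_{I_j^{\rm mode}}y_\gamma\le u)\le\Phi((u-a_jM_j)/\sigma_\gamma)$) and $\E[\#\{t\in\tilde T(u)\cap I_j^{\rm mode}\}]$ from above, and the key quantitative step is evaluating that expectation by the Kac--Rice formula followed by the Laplace method, exactly as in Lemma \ref{lemma:mean-m-1}. You instead avoid Kac--Rice entirely: you build a high-probability event $B_j$ on which $y_\gamma$ has a unique interior critical point $t_j^\star\in I_j^{\rm mode}$ and no critical point in $I_j^{\rm side}$ (this geometric step is essentially parts (1)--(2) of the paper's Lemma \ref{lemma:bounds}, so it does go through with the universal constants $C,D,\delta$), and then approximate the height pathwise, $y_\gamma(t_j^\star)=a_jM_j+z_\gamma(\tau_j)+O(\log a_j/a_j)$, pushing the Taylor error into the super-exponentially small Gaussian density at $m_j/\sigma_\gamma$ so that the dominant error is $\P(B_j^c)=O(a_j^{-2})$. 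Your route is more elementary and makes the mechanism transparent (the detected height is essentially $a_jM_j+z_\gamma(\tau_j)$), at the cost of the extra sup-bound event with $M=c\sqrt{\log a_j}$ (where $c$ must be taken large enough that Borell--TIS yields $a_j^{-c^2/(2E')}\le a_j^{-2}$), an implicit Lipschitz bound on $\nabla h_{j,\gamma}$ near $\tau_j$ for the Taylor step, and the implicit-function analysis locating $t_j^\star$; the paper's route requires the Kac--Rice/Laplace computation but produces the expectation $\E[\#\{t\in\tilde T(u)\cap I_j^{\rm mode}\}]$ itself, which is reused for Lemma \ref{lemma:mean-m-1} and hence for the FDR analysis.
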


The next result indicates that the BH procedure is asymptotically consistent. The proof is given in Section \ref{app:power}.
\begin{thm}
\label{thm:power}
Let conditions (C1) and (C2) hold.

(i) Suppose that Algorithm \ref{alg:STEM} is applied with a fixed threshold $u$, then
\[
\Power(u) = 1 - O(a^{-2}).
\]

(ii) Suppose that Algorithm \ref{alg:STEM} is applied with the random threshold $\tilde{u}_{\BH}(v)$ \eqref{eq:thresh-BH-random}, then
\[
\Power_{\BH}(v) = 1 - O(a^{-2}+L^{-N/2}).
\]
\end{thm}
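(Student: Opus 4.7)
The plan for part (i) is to invoke Lemma \ref{lemma:power-approx} directly. Writing $M_j = h_{j,\gamma}(\tau_j)$, assumption (C1) gives $\inf_j M_j \ge C > 0$ and assumption (C2) gives $a = \inf_j a_j \to \infty$, so the argument $(a_j M_j - u)/\sigma_\gamma$ of $\Phi$ in \eqref{eq:approx-power} tends to $+\infty$ uniformly in $j$ for any fixed $u$. The standard Gaussian tail bound $1 - \Phi(x) \le \phi(x)/x$ then gives $\Phi((a_j M_j - u)/\sigma_\gamma) = 1 - O(e^{-c a^2})$ for some $c > 0$, which is $o(a^{-k})$ for every $k$. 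Multiplying by the factor $1 + O(a_j^{-2})$ from \eqref{eq:approx-power} and using $0 \le \Power_j(u) \le 1$ yield $\Power_j(u) = 1 - O(a^{-2})$ uniformly in $j$, and averaging over $j$ gives $\Power(u) = 1 - O(a^{-2})$.

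For part (ii), the plan is to reduce to the fixed-threshold case by showing that $\tilde u_{\BH}(v)$ exceeds $u^*_{\BH}(v)$ from \eqref{eq:thresh-BH-fixed} by at most a constant, with overwhelming probability. Define $A_L = \{\tilde u_{\BH}(v) \le u^*_{\BH}(v) + 1\}$. Since the set $\tilde T(\cdot)$ in \eqref{eq:Tv} is monotone decreasing in its threshold argument, on $A_L$ we have $\tilde T(\tilde u_{\BH}(v)) \supseteq \tilde T(u^*_{\BH}(v) + 1)$, and hence
\begin{equation*}
\mathbbm{1}_{\{\tilde T(\tilde u_{\BH}(v)) \cap S_j \ne \emptyset\}} \ge \mathbbm{1}_{A_L}\,\mathbbm{1}_{\{\tilde T(u^*_{\BH}(v) + 1) \cap S_j \ne \emptyset\}}.
\end{equation*}
Taking expectation, averaging over $j$, and invoking part (i) with the fixed threshold $u = u^*_{\BH}(v) + 1$ give
\begin{equation*}
\Power_{\BH}(v) \ge \Power(u^*_{\BH}(v) + 1) - \P(A_L^c) = 1 - O(a^{-2}) - \P(A_L^c),
\end{equation*}
so it remains to prove $\P(A_L^c) = O(a^{-2} + L^{-N/2})$.

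The main obstacle is this last probability bound, which I expect to handle with the empirical-process machinery already needed in the proof of Theorem \ref{thm:FDR}. By \eqref{eq:thresh-BH-random}, $\tilde u_{\BH}(v)$ is the smallest $u$ satisfying $\alpha \tilde G(u,v) \ge F_\gamma(u,v)$, where $\tilde G(u,v)$ is the empirical right-tail distribution of the heights $\{y_\gamma(t) : t \in \tilde T(v)\}$. Continuity and strict monotonicity of $F_\gamma(\cdot, v)$ near $u^*_{\BH}(v) + 1$ reduce $A_L^c$ to a sup-deviation event of the form $\{\sup_u |\tilde G(u,v) - G^*(u,v)| > \delta\}$ for some constant $\delta > 0$, where $G^*(u,v)$ is the deterministic ergodic limit of $\tilde G(u,v)$. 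Since $\tilde G$ is essentially a ratio $\tilde m(u)/\tilde m(v)$ of counts of local maxima of $y_\gamma$ in $U(L)$, a second-moment computation based on the Kac--Rice formula \eqref{eq:expected-local-maxima} and the integrability condition \eqref{Eq:variance-asymptotics} on the covariance of $z_\gamma$ will give $\mathrm{Var}[\tilde m(u)/L^N] = O(L^{-N})$; Chebyshev's inequality together with the deterministic remainders in (C2) then yields the $L^{-N/2}$ rate, while the $O(a^{-2})$ contribution absorbs the signal-leakage bias from candidate maxima situated in the transition region $\mathbb T_\gamma$, controlled via Lemma \ref{lemma:bounds}.
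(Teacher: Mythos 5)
Your proposal is correct and follows essentially the same route as the paper: part (i) is read off directly from Lemma \ref{lemma:power-approx} (with the uniformity assumptions guaranteeing the Gaussian term is $1-o(a^{-2})$ uniformly in $j$), and part (ii) reduces to a fixed threshold near $u^*_{\BH}(v)$ on the event that $\tilde u_{\BH}(v)$ is close to $u^*_{\BH}(v)$, whose complement is bounded by the Chebyshev/variance argument already established in the proof of Theorem \ref{thm:FDR} (cf.\ \eqref{eq:u_BH^*-and-utilde} with fixed $\delta$). The only slip is cosmetic: the lower bound $\inf_j M_j>0$ comes from uniformity assumption (1), not from the constant $C$ of assumption (2).
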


The results in Theorem \ref{thm:power} rely on the condition $(\log L)/a^2 \to 0$ in (C2). If this condition is not satisfied, then it can be seen from the proofs of Lemmas \ref{lemma:power-approx} and \ref{lemma:unique-max} below that the power may be bounded above by a constant strictly less than one.


\subsection{Optimal smoothing kernel}
\label{sec:optimal-gamma}
The best smoothing kernel $w_\gamma(t)$ is that which maximizes the detection power under the true model. By Lemma \ref{lemma:power-approx}, the power \eqref{eq:approx-power} is approximately maximized by maximizing the signal-to-noise ratio (SNR)
\begin{equation}
\label{eq:SNR}
\SNR_{j,\gamma} = \frac{a_j h_{j,\gamma}(\tau_j)}{\sigma_\gamma} =
\frac{a_j \int_{\R^N} w_\gamma(s) h_j(s)\,ds}
{\sigma \sqrt{\int_{\R^N} w^2_\gamma(s)\,ds}},
\end{equation}
where $\sigma$ is the standard deviation of the observed process $y(t)$. The smoothing kernel $w_\gamma(t)$ that maximizes \eqref{eq:SNR} is called a matched filter in signal processing \citep{Pratt:1991,Simon:1995}. It is known in signal processing that if the peak locations are known, then the matched filter maximizes the detection power exactly. As shown in the simulations, the result only holds approximately in our case because the peak locations are unknown.


\subsection{Isotropic Gaussian fields}
\label{sec:p-value-isotropic}
The following result gives an explicit expression for the distribution \eqref{eq:palm} in the special case when $N=2$ and the noise field $z_\gamma(t)$ is isotropic. It is obtained from \citep[Example 2.16]{CS:2014} by standardizing the field in \eqref{eq:palm} as $F_\gamma(u,v) = \P(z_\gamma(t)/\sigma_\gamma > u/\sigma_\gamma | t \in \tilde{T}(v))$. Here, denote by $\phi(x)$ and $\Phi(x)$ respectively the density function and cumulative distribution function (cdf) of the standard Gaussian distribution.

\begin{prop}
\label{prop:Palm-isotropic}
Let the assumptions in Theorem \ref{thm:Palm} hold and let $z_\gamma(t)$ be an isotropic Gaussian field over $\R^2$ with correlation function
\begin{equation*}
\rho_\gamma(\|t-s\|^2)=\E[z_\gamma(t)z_\gamma(s)]/\sigma_\gamma^2
\end{equation*}
and $\rho_\gamma'' - \rho_\gamma'^2 \ge 0$, where $\rho_\gamma'=\rho_\gamma'(0)$ and $\rho_\gamma''=\rho_\gamma''(0)$. Let $\kappa_\gamma=-\rho_\gamma'/\sqrt{\rho_\gamma''}$, then $\E[\tilde{m}_{0,\gamma}(U(1))] = -\sqrt{3}\rho_\gamma''/(3\pi\rho_\gamma')$ and the distributions \eqref{eq:F} and \eqref{eq:palm} are given respectively by $F_\gamma(u) = \int_{u/\sigma_\gamma}^\infty g_\gamma(x) dx$ and $F_\gamma(u,v)=F_\gamma(u)/F_\gamma(v)$, where
\begin{equation*}
\begin{split}
g_\gamma(x) &=\sqrt{3}\kappa_\gamma^2(x^2-1)\phi(x)\Phi\left(\frac{\kappa_\gamma x}{\sqrt{2-\kappa_\gamma^2}} \right) + \frac{\kappa_\gamma x\sqrt{3(2-\kappa_\gamma^2)}}{2\pi}e^{-\frac{x^2}{2-\kappa_\gamma^2}} \\
&\quad+\frac{\sqrt{6}}{\sqrt{\pi(3-\kappa_\gamma^2)}}e^{-\frac{3x^2}{2(3-\kappa_\gamma^2)}}\Phi\left(\frac{\kappa_\gamma x}{\sqrt{(3-\kappa_\gamma^2)(2-\kappa_\gamma^2)}} \right).
\end{split}
\end{equation*}
\end{prop}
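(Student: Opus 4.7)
\medskip

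\noindent\textbf{Proof plan for Proposition \ref{prop:Palm-isotropic}.}
By Theorem \ref{thm:Palm}, the problem reduces to computing $\E[\tilde{m}_{0,\gamma}(U(1),u)]$ as a function of $u$ for the isotropic Gaussian field on $\R^2$; the ratios then give both $F_\gamma(u)$ and $F_\gamma(u,v)$. The plan is therefore to evaluate the Kac--Rice expression \eqref{eq:expected-local-maxima} explicitly in the isotropic two-dimensional case, then standardize to $x=u/\sigma_\gamma$ to obtain the density $g_\gamma$.

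First I would compute the second spectral moments under isotropy: if $\E[z_\gamma(t)z_\gamma(s)]=\sigma_\gamma^2\rho_\gamma(\|t-s\|^2)$, then $\La_\gamma=\mathrm{Cov}(\nabla z_\gamma(t))=-2\rho_\gamma'\sigma_\gamma^2 I_2$, the fourth-order spectral moment yields $\mathrm{Var}(\partial_{ii}^2 z_\gamma)=12\rho_\gamma''\sigma_\gamma^2$ for $i=1,2$, $\mathrm{Var}(\partial_{12}^2 z_\gamma)=4\rho_\gamma''\sigma_\gamma^2$, $\mathrm{Cov}(\partial_{11}^2 z_\gamma,\partial_{22}^2 z_\gamma)=4\rho_\gamma''\sigma_\gamma^2$, and $\mathrm{Cov}(z_\gamma(t),\partial_{ii}^2 z_\gamma(t))=2\rho_\gamma'\sigma_\gamma^2$. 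Isotropy also gives $\nabla z_\gamma(t)\perp (z_\gamma(t),\nabla^2 z_\gamma(t))$, so conditioning on $\nabla z_\gamma(t)=0$ in \eqref{eq:expected-local-maxima} is harmless and $p_{\nabla z_\gamma(t)}(0)=[4\pi(-\rho_\gamma')\sigma_\gamma^2]^{-1}$. Next I would write $\nabla^2 z_\gamma(t)$, after subtracting its conditional mean on $z_\gamma(t)$, in the canonical GOE form: the standardized residual Hessian is a $2\times 2$ symmetric matrix $M$ whose diagonal entries are $\mathcal{N}(0,1)$ (possibly correlated) and off-diagonal entry is $\mathcal{N}(0,1/2)$, with the correlation determined by $\rho_\gamma'^2/\rho_\gamma''$. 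This is the 2D GOE step that makes the eigenvalue indicator tractable. The parameter $\kappa_\gamma^2=\rho_\gamma'^2/\rho_\gamma''\in[0,1]$ (nonnegativity of which uses the hypothesis $\rho_\gamma''-\rho_\gamma'^2\ge 0$) will emerge naturally as the squared ratio of the regression coefficient to the conditional standard deviation.

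Then, writing $x=z_\gamma(t)/\sigma_\gamma$, I would substitute these conditional distributions into \eqref{eq:expected-local-maxima} and evaluate
\[
\E\bigl[|\det\nabla^2 z_\gamma(t)|\,\mathbbm{1}_{\{\nabla^2 z_\gamma(t)\prec 0\}}\,\mathbbm{1}_{\{z_\gamma(t)>u\}}\,\big|\,\nabla z_\gamma(t)=0\bigr].
\]
In two dimensions, $\nabla^2 z_\gamma\prec 0$ is equivalent to $\mathrm{tr}<0$ and $\det>0$, so the indicator plus the absolute value collapses to $\det\nabla^2 z_\gamma$ restricted to the negative-definite cone. Conditional on $x$, the Hessian equals $-\sqrt{2}\rho_\gamma'\sigma_\gamma\, x\, I_2+\sqrt{2\rho_\gamma''}\sigma_\gamma\, M$, so the expectation becomes an explicit Gaussian integral in the three Hessian entries with a quadratic region. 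Computing this integral produces the three summands in $g_\gamma$: the first comes from the ``mean'' contribution $\sqrt{3}\kappa_\gamma^2(x^2-1)\phi(x)\Phi(\kappa_\gamma x/\sqrt{2-\kappa_\gamma^2})$, and the other two arise from the cross-term with $M$ and the pure $M$ part, via Gaussian integrals over the negative-definite cone. Dividing by $\E[\tilde m_{0,\gamma}(U(1))]=-\sqrt 3\,\rho_\gamma''/(3\pi\rho_\gamma')$ (which is the $u\to-\infty$ limit, obtained as $\int_\R g_\gamma(x)\,dx$) yields $F_\gamma(u)=\int_{u/\sigma_\gamma}^\infty g_\gamma(x)\,dx$, and the formula for $F_\gamma(u,v)$ follows from Theorem \ref{thm:Palm}.

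The main obstacle will be the Gaussian integral over the negative-definite cone with the weight $\det\nabla^2 z_\gamma$: parametrizing by eigenvalues $(\lambda_1,\lambda_2)$ of $M$ (GOE joint density proportional to $|\lambda_1-\lambda_2|e^{-(\lambda_1^2+\lambda_2^2)/2}/\sqrt{\pi}$) and translating the cone $\{\lambda_i<\kappa_\gamma x/\sqrt{2-\kappa_\gamma^2}\cdot(\text{const})\}$ is the bookkeeping-heavy step; this is exactly the computation carried out in \citep[Example 2.16]{CS:2014}, so once the translation to their normalization is made, invoking their formula completes the argument.
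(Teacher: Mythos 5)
Your plan is correct and matches the paper's treatment: the paper gives no independent proof of Proposition \ref{prop:Palm-isotropic}, importing the formula directly from \citep[Example 2.16]{CS:2014} after standardizing by $\sigma_\gamma$, which is exactly where your outline also lands. The Kac--Rice/GOE computation you sketch (the isotropic spectral moments, the independence of $\nabla z_\gamma$ from $(z_\gamma,\nabla^2 z_\gamma)$, and the determinant integral over the negative-definite cone) is a faithful summary of the cited derivation, so there is nothing to add beyond the citation.
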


As shown in \citep{CS:2014},
$$
\rho_\gamma' = -\frac{1}{2\sigma_\gamma^2}{\rm Var}\left(\frac{\partial z_\gamma(t)}{\partial t_i}\right), \quad \rho_\gamma'' = \frac{1}{12\sigma_\gamma^2}{\rm Var}\left(\frac{\partial^2 z_\gamma(t)}{\partial t_i^2}\right)
$$
for any $i=1, \ldots, N$. Therefore, in order to estimate $\rho_\gamma'$ and $\rho_\gamma''$, we only need to estimate the variances of the derivatives of $z_\gamma$ (or equivalently $y_\gamma$).

\begin{example}[\bf{Gaussian autocorrelation model}]
\label{ex:Gaussian}
Let the noise $z(t)$ in \eqref{eq:signal+noise} be constructed as
$$
z(t) = \sigma \int_{\R^N} \frac{1}{\nu^N} \phi_N\left(\frac{t-s}{\nu}\right)\,dB(s), \qquad \sigma, \nu > 0,
$$
where $\phi_N(x) = (2\pi)^{-N/2}e^{-\|x\|^2/2}$ for all $x \in \R^N$ is the $N$-dimensional standard Gaussian density, $dB(s)$ is Gaussian white noise and $\nu > 0$ ($z(t)$ is regarded by convention as Gaussian white noise when $\nu=0$). Convolving with a Gaussian kernel $w_\gamma(t) = (1/\gamma^N)\phi_N(t/\gamma)$ with $\gamma > 0$ as in \eqref{eq:mu-gamma} produces a zero-mean infinitely differentiable stationary ergodic Gaussian field
$$
z_\gamma(t) = w_\gamma(t) * z(t) = \sigma \int_{\R^N} \frac{1}{\xi^N} \phi_N\left(\frac{t-s}{\xi}\right)\,dB(s), \qquad \xi = \sqrt{\gamma^2 + \nu^2},
$$
with $\sigma_\gamma^2 = \sigma^2/(2^N\pi^{N/2} \xi^N)$, $\rho_\gamma'=-(2\xi)^{-2}$, $\rho_\gamma''=(2\xi)^{-4}$ and $\kappa_\gamma=1$.
The above expressions may be used as approximations if the kernel, required to have finite support, is truncated at $[-\gamma d, \gamma d]^N$ for moderately large $d$, say $d=3$.

Suppose the signal peak $j$ is a truncated Gaussian density
$$
h_j(t) = (1/b_j^N)\phi_N[(t-\tau_j)/b_j]\mathbbm{1}_{\{(t-\tau_j)/b_j \in [-c_j,c_j]^N\}},\quad  b_j, c_j>0.
$$
Ignoring the truncation, $h_{j,\gamma}(t) = w_\gamma(t) * h_j(t)$ in \eqref{eq:SNR} is the convolution of two Gaussian densities with variances $\gamma^2$ and $b_j^2$, which is another Gaussian density with variance $\gamma^2 + b_j^2$. We have that
\begin{equation*}
\SNR_{j,\gamma} = \frac{a_j h_{j,\gamma}(\tau_j)}{\sigma_\gamma} = \frac{a_j}{\sigma\pi^{N/4}}\left[\frac{\gamma^2 + \nu^2}{(\gamma^2 + b_j^2)^2}\right]^{N/4}.
\end{equation*}
As a function of $\gamma$, the SNR is maximized at
\begin{equation}
\label{eq:optimal-gamma}
\argmax_\gamma \SNR_{j,\gamma} = \begin{cases}
\sqrt{b_j^2 - 2 \nu^2}, & \nu < b_j/\sqrt{2} \\
0, & \nu > b_j/\sqrt{2}.
\end{cases}
\end{equation}
In particular, when $\nu=0$, we have that the optimal bandwidth for peak $j$ is $\gamma = b_j$, the same as the signal bandwidth. We show in the simulations below that the optimal $\gamma$ is indeed close to \eqref{eq:optimal-gamma}. It can be seen from \eqref{eq:optimal-gamma} that as $\nu$ gets larger, which means that $y(t)$ gets smoother, the optimal $\gamma$ becomes smaller. If $\nu$ is large enough, there is no need to smooth at all.
\end{example}


\section{Detection of peaks by approximated overshoot distribution}
\label{sec:overshoot}


\subsection{Approximating the overshoot distribution}
In the neuroimaging literature, it has been proposed to pre-threshold the test statistic field and then perform the inference on the supra-threshold statistics \citep{Zhang:2009}. We showed theoretically in Section \ref{sec:FDR-control} and will confirm by simulations that, in the best case scenario where the exact distribution of the height of local maxima is known, pre-thresholding ($v=-\infty$) does not increase detection power. However, pre-thresholding is still very valuable if the exact distribution is unknown but an approximation is known.

As mentioned, if the Gaussian field is only stationary but not isotropic, then the explicit formula for $F_\gamma(u, v)$ \eqref{eq:palm} is unknown so far. By \citep[Corollary 2.7]{CS:2014}, there exists $\ep_0>0$ such that as $v \to \infty$ and $u>v$,
\begin{equation*}
F_\gamma(u,v) = K_\gamma(u,v)(1 + o(e^{-\ep_0 v^2})),
\end{equation*}
where
\begin{equation*}
K_\gamma(u,v) =\frac{H_{N-1}(u/\sigma_\gamma)e^{-u^2/(2\sigma_\gamma^2)}}{H_{N-1}(v/\sigma_\gamma)e^{-v^2/(2\sigma_\gamma^2)}}
\end{equation*}
and $H_{N-1}(x)$ is the Hermite polynomial of order $N-1$. A similar argument to the proof of \citep[Corollary 2.7]{CS:2014} yields that for a fixed $v$, as $u\to \infty$,
\begin{equation}\label{Eq:stationary-os-u}
F_\gamma(u,v)=\beta_\gamma(v)K_\gamma(u,v) (1+ o(e^{-\ep_0 u^2})),
\end{equation}
where
\begin{equation}\label{Eq:beta}
\beta_\gamma(v) = \frac{(2\pi)^{-(N+1)/2}\sigma_\gamma^{-N} ({\rm det(\La_\gamma)})^{1/2}H_{N-1}(v/\sigma_\gamma)e^{-v^2/(2\sigma_\gamma^2)}}{\E[\tilde{m}_{0, \gamma}(U(1),v)]}
\end{equation}
and $\La_\gamma={\rm Cov}(\nabla z_\gamma(t))$. Note that $\beta_\gamma(v)$ is similar to the ratio of the expected Euler characteristic \citep[Lemma 11.7.1]{Adler:2007} and the expected number of local maxima of $z_\gamma(t)$ over the unit cube $U(1)$. It is conjectured that $\beta_\gamma(v)<1$ for all $v>0$ (this is true for $N=1$ and $N=2$ \citep{CS:2014}).


\begin{thm}
\label{thm:FDR-os-approx}
Let conditions (C1) and (C2) hold. Suppose that Algorithm \ref{alg:STEM} is applied with the random threshold $\tilde{u}_{\BH}(v)$ by using $K_\gamma(u,v)$ instead of $F_\gamma(u,v)$ to compute p-values, then as $v\to \infty$ such that $v^2/\log(L)\to 0$ and $v^2/\log(a)\to 0$,
\begin{equation}\label{eq:bound-FDR-os-approx}
\FDR_{\BH}(v) \le \alpha\frac{\E[\tilde{m}_{0, \gamma}(U(1),v)](1-A_{2,\gamma})\beta_\gamma(v)}{\E[\tilde{m}_{0, \gamma}(U(1),v)](1-A_{2,\gamma}) + A_1} (1+ o(e^{-\ep_0 v^2})),
\end{equation}
where $\ep_0>0$ is some constant and $\beta_\gamma(v)$ is defined in \eqref{Eq:beta}, and moreover,
\begin{equation}\label{eq:power-os-approx}
\Power_{\BH}(v) = 1 - O(a^{-2}+L^{-N/2}).
\end{equation}
\end{thm}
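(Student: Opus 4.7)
The plan is to mirror the proofs of Theorem \ref{thm:FDR}(ii) and Theorem \ref{thm:power}(ii), now with $K_\gamma$ playing the role of $F_\gamma$ in the definition of p-values and the BH threshold, and then use \eqref{Eq:stationary-os-u} to quantify the gap between the two.

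First I would redo the ergodic step. The BH threshold with $K_\gamma$ is the smallest $u\ge v$ solving $\alpha\tilde{G}(u,v)\ge K_\gamma(u,v)$. By ergodicity together with condition (C2), the quantities $\tilde{m}(v)/L^N$ and $R(u)/L^N$ concentrate at rate $O(a^{-1}+L^{-N/4})$ around $A_1+(1-A_{2,\gamma})\E[\tilde{m}_{0,\gamma}(U(1),v)]$ and $A_1+(1-A_{2,\gamma})\E[\tilde{m}_{0,\gamma}(U(1),u)]$ respectively, where the $A_1$ contribution records the fact that each true peak is represented by exactly one local maximum with probability tending to one (Lemma \ref{lemma:bounds}). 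Hence $\tilde{u}_{\BH}(v)$ concentrates on the deterministic solution $u^{**}_{\BH}(v)$ of the limiting BH equation.

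Second, the asymptotic FDR at a deterministic threshold $u$ is $(1-A_{2,\gamma})\E[\tilde{m}_{0,\gamma}(U(1),u)]/\{A_1+(1-A_{2,\gamma})\E[\tilde{m}_{0,\gamma}(U(1),u)]\}$. Writing $\E[\tilde{m}_{0,\gamma}(U(1),u)]=F_\gamma(u,v)\E[\tilde{m}_{0,\gamma}(U(1),v)]$, substituting $F_\gamma(u,v)=\beta_\gamma(v)K_\gamma(u,v)(1+o(e^{-\ep_0 u^2}))$ from \eqref{Eq:stationary-os-u} at $u=u^{**}_{\BH}(v)$, and using the limiting BH identity $K_\gamma(u^{**}_{\BH}(v),v)=\alpha G(u^{**}_{\BH}(v),v)$ recovers the bound \eqref{eq:bound-FDR-os-approx} with the extra factor $\beta_\gamma(v)$. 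Because $u^{**}_{\BH}(v)\ge v$, the pointwise error $o(e^{-\ep_0 u^2})$ is controlled by $o(e^{-\ep_0 v^2})$; and the hypotheses $v^2/\log L\to 0$, $v^2/\log a\to 0$ make $O(a^{-1}+L^{-N/4})=o(e^{-\ep_0 v^2})$ for a possibly smaller $\ep_0$, so all concentration errors are absorbed.

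For the power, I would bound $\tilde{u}_{\BH}(v)$ from above. From $K_\gamma(u,v)=[H_{N-1}(u/\sigma_\gamma)/H_{N-1}(v/\sigma_\gamma)]\exp\{-(u^2-v^2)/(2\sigma_\gamma^2)\}$ and the crude inequality $K_\gamma(\tilde{u}_{\BH}(v),v)\ge \alpha/\tilde{m}(v)$ with $\tilde{m}(v)=O(L^N)$ with high probability, one obtains $\tilde{u}_{\BH}(v)=O(v+\sqrt{\log L})$. Under (C2) this is $o(a)$, so Lemma \ref{lemma:power-approx} gives $\Power_j(\tilde{u}_{\BH}(v))=1-O(a_j^{-2})$; averaging over $j$ and adding the $O(L^{-N/2})$ cost of replacing the random threshold by its deterministic limit yields \eqref{eq:power-os-approx}. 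The main obstacle will be establishing uniformity of the approximation \eqref{Eq:stationary-os-u} over the random interval $[v,\tilde{u}_{\BH}(v)]$ and transferring pointwise bounds on $F_\gamma-\beta_\gamma K_\gamma$ to the random threshold, while simultaneously verifying that the concentration error is dominated by $o(e^{-\ep_0 v^2})$ under the stated rate conditions on $v$.
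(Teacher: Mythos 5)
Your proposal follows essentially the same route as the paper: replace $F_\gamma$ by $K_\gamma$ in the BH equation $\alpha\tilde G(u,v)\ge K_\gamma(u,v)$, reuse the ergodic limit of $\tilde G$ from the proof of Theorem \ref{thm:FDR} to identify the deterministic threshold $u^{**}_{\BH}(v)$, substitute $F_\gamma(u,v)=\beta_\gamma(v)K_\gamma(u,v)(1+o(e^{-\ep_0 u^2}))$ with $u\ge v$ to pick up the factor $\beta_\gamma(v)$, and use $v^2/\log L\to0$, $v^2/\log a\to0$ to absorb the $O(a^{-1}+L^{-N/4})$ concentration errors into $o(e^{-\ep_0 v^2})$. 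Your explicit upper bound on $\tilde u_{\BH}(v)$ for the power claim is a reasonable way to make precise what the paper dispatches as a "slight modification" of Lemma \ref{lemma:power-approx} and Theorem \ref{thm:power}, and the uniformity issue you flag at the end is one the paper itself only addresses informally.
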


Note that for a fixed threshold $u$, the control of $\FDR(u)$ and the consistency of power $\Power(u)$ in Theorem \ref{thm:FDR-os-approx} will be the same as those given in part (i) of Theorem \ref{thm:FDR} and part (i) in Theorem \ref{thm:power} respectively. From the proof of Theorem \ref{thm:FDR-os-approx}, we see that the limit is in fact taken when $u\to \infty$ while $v$ is fixed. However, we cannot tell the exact gap between $u$ and $v$, though it is assumed that $u$ is always greater than $v$. It is likely that $\tilde{u}_{\BH}(v)$ or $u^{**}_{\BH}(v)$ \eqref{eq:thresh-BH-fixed-os-approx} below is still relatively large for small $v$, which is true when applying the STEM algorithm by using $F(\cdot,v)$ to compute p-values. Therefore, in our simulations below (Figure \ref{fig:error-os-approx}), the approximation in Theorem \ref{thm:FDR-os-approx} even works well for small $v$.

\subsection{Optimal pre-threshold}
\label{sec:opt-v}
Under the conditions in Theorem \ref{thm:FDR-os-approx}, by \eqref{eq:u_BH^*-os-approx}, we see that the random threshold $\tilde{u}_{\BH}(v)$ converges asymptotically to the deterministic threshold
\begin{equation}
\label{eq:thresh-BH-fixed-os-approx}
u^{**}_{\BH}(v) = F_\gamma^{-1}\left(\frac{\alpha A_1\beta_\gamma(v)\E[\tilde{m}_{0, \gamma}(U(1),v)]/\E[\tilde{m}_{0, \gamma}(U(1))]}{A_1 + \E[\tilde{m}_{0,\gamma}(U(1),v)](1-A_{2,\gamma})(1-\alpha\beta_\gamma(v))}\right)(1+o(1)).
\end{equation}
For fixed $\gamma$, the power \eqref{eq:approx-power} is maximized at the optimal pre-threshold minimizing $u^{**}_{\BH}(v)$, which is
\begin{equation}\label{Eq:vopt}
v_{\rm opt}=\underset{v}{\operatorname{argmax}}
 \frac{H_{N-1}(v/\sigma_\gamma)e^{-v^2/(2\sigma_\gamma^2)}}{A_1 + \E[\tilde{m}_{0,\gamma}(U(1),v)](1-A_{2,\gamma})(1-\alpha\beta_\gamma(v))}.
\end{equation}
Let $\gamma$ and $\alpha$ be fixed, we see that $v_{\rm opt}$ depends only on the covariance structure of $z_\gamma(t)$.

\section{Simulation studies}
\label{sec:simulations}


\begin{figure}[t]
\begin{center}
\begin{tabular}{cccc}
& $\nu=0$ & $\nu=1$ & $\nu=2$ \\
\begin{sideways}\phantom{-----------}FDR\end{sideways} &
\includegraphics[trim=40 10 60 10,clip,width=1.4in]{\figurepath/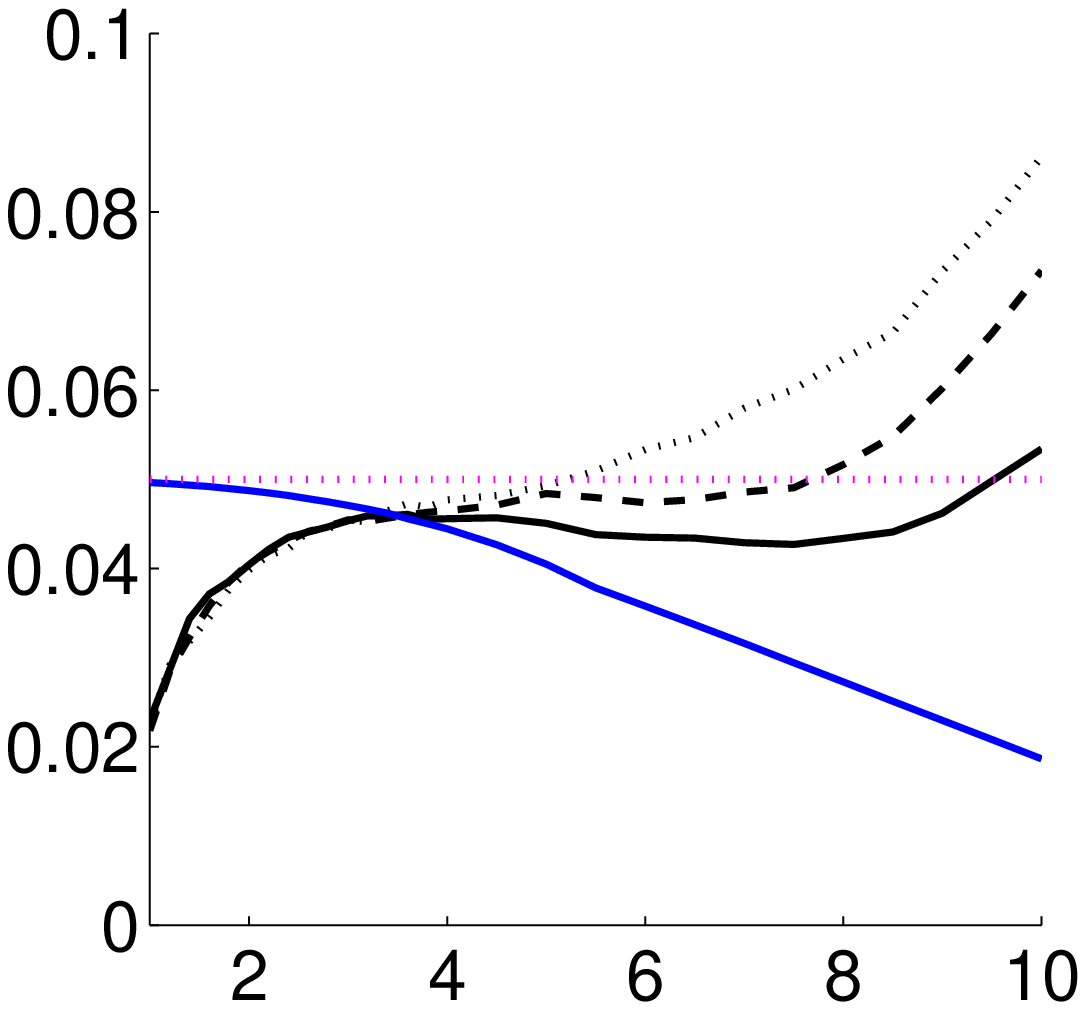} &
\includegraphics[trim=40 10 60 10,clip,width=1.4in]{\figurepath/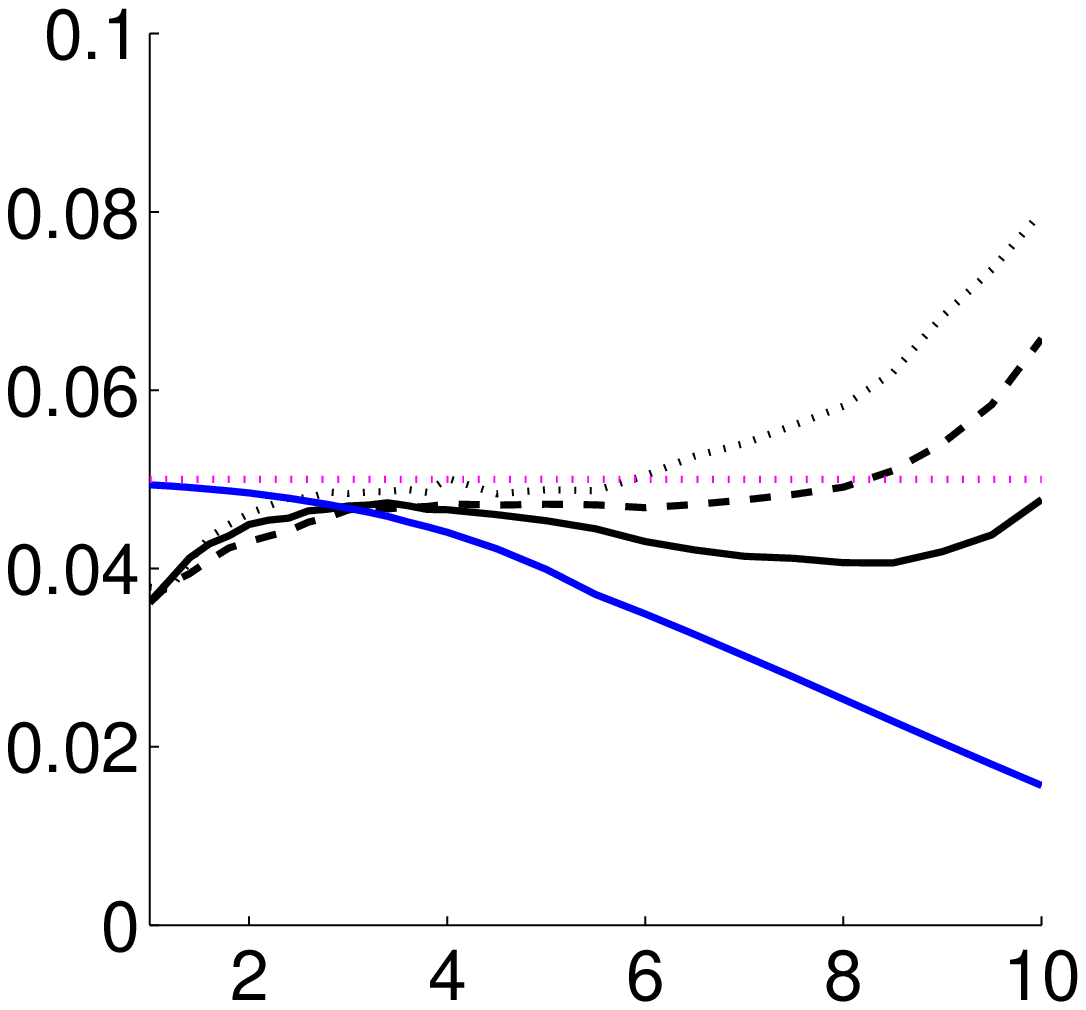} &
\includegraphics[trim=40 10 60 10,clip,width=1.4in]{\figurepath/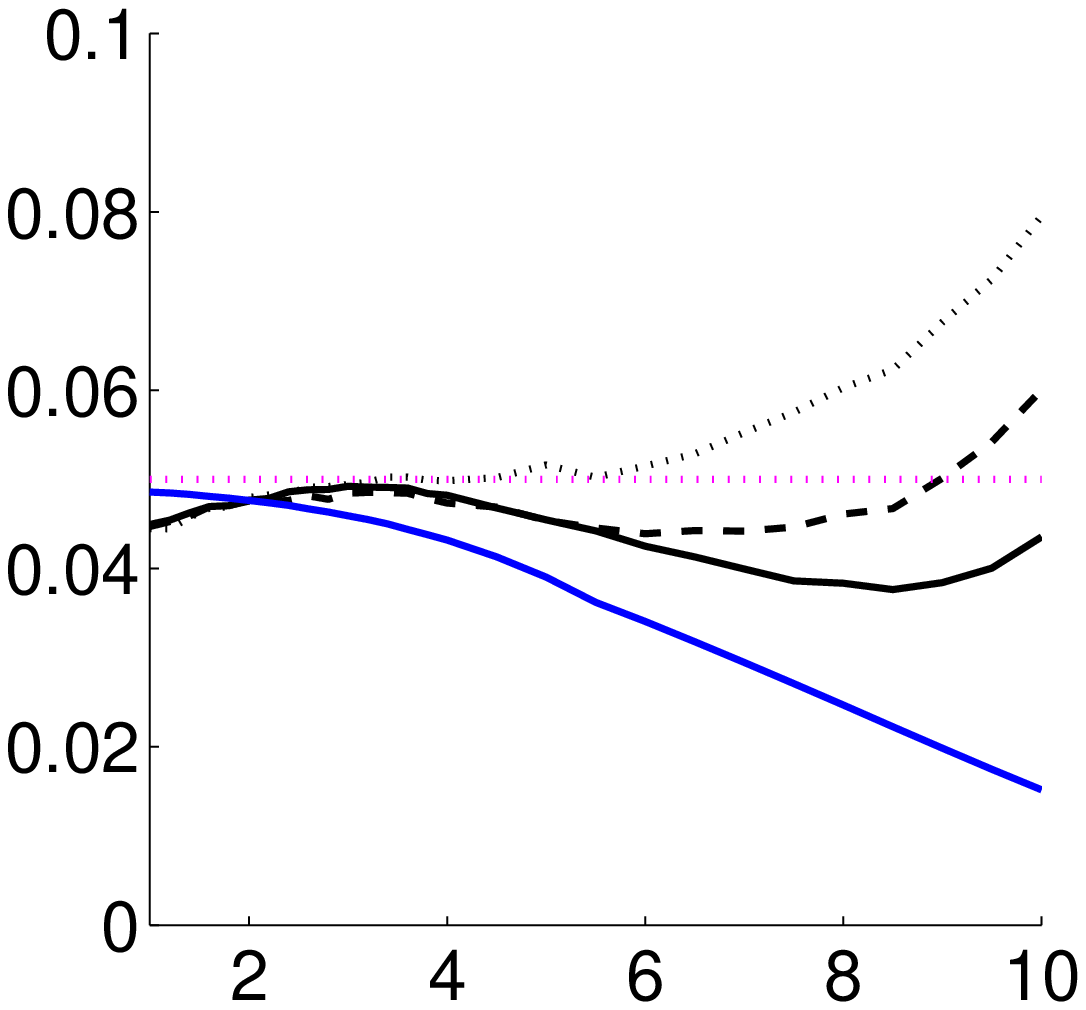} \\
\begin{sideways}\phantom{-----------}Power\end{sideways} &
\includegraphics[trim=40 10 60 10,clip,width=1.4in]{\figurepath/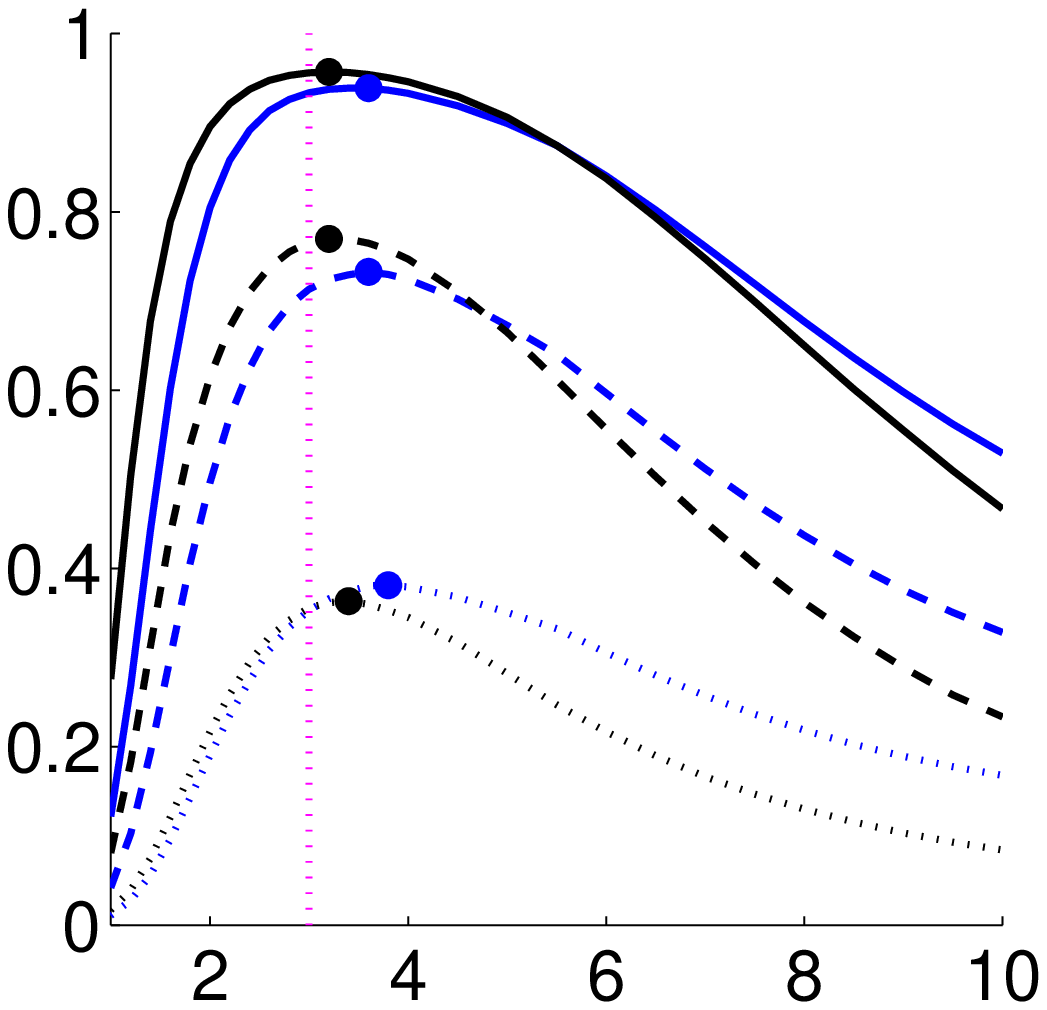} &
\includegraphics[trim=40 10 60 10,clip,width=1.4in]{\figurepath/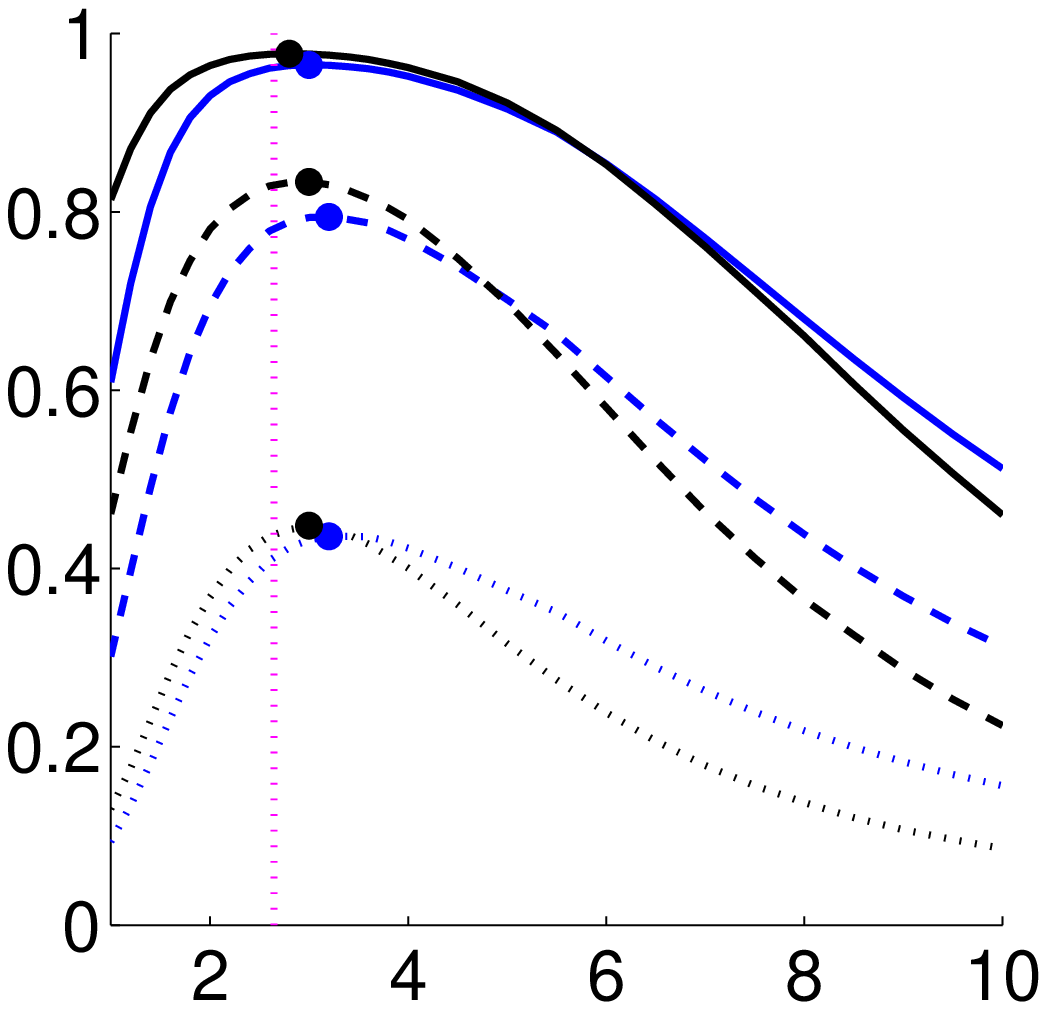} &
\includegraphics[trim=40 10 60 10,clip,width=1.4in]{\figurepath/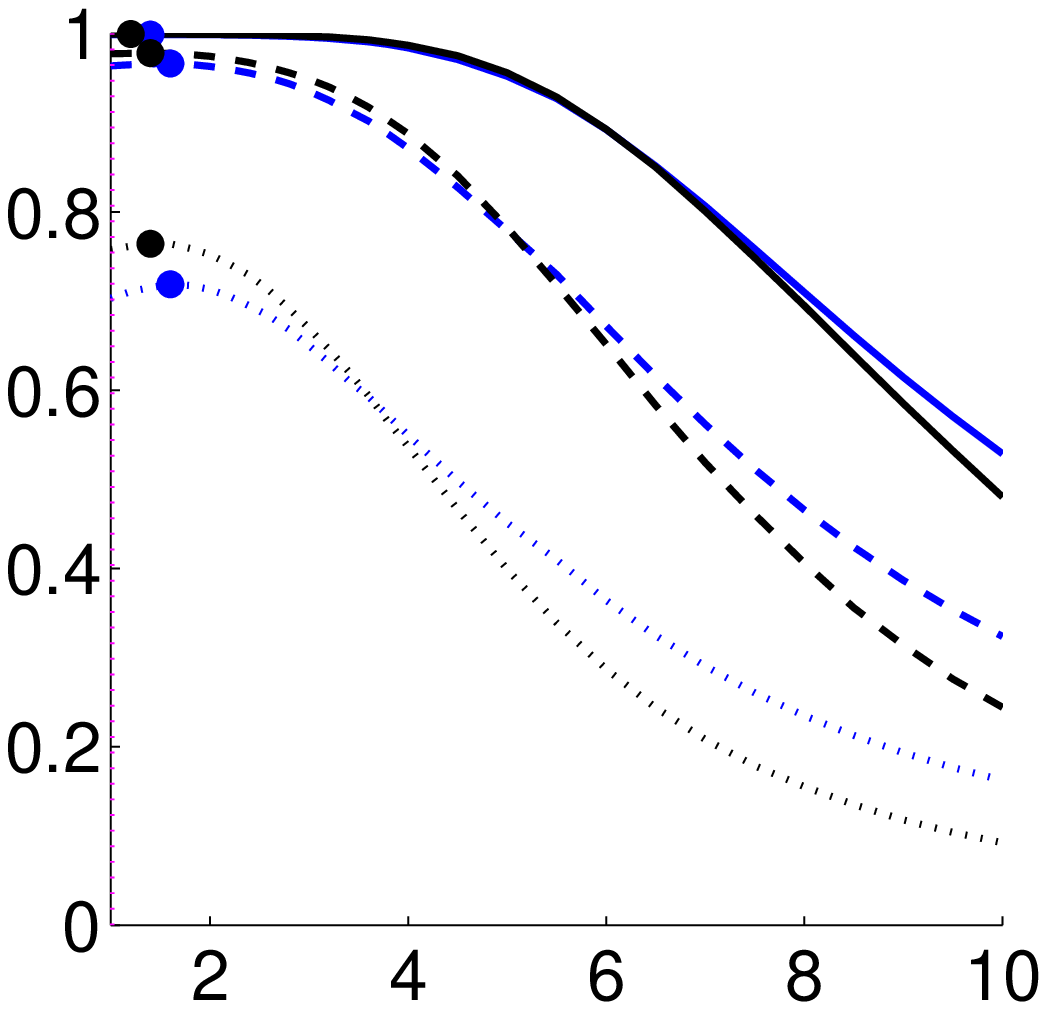} \\
& \phantom{---}$\gamma$ & \phantom{---}$\gamma$ & \phantom{---}$\gamma$
\end{tabular}
\caption{ \label{fig:error} Realized (black) and ``theoretical'' (blue) FDR, Realized (black) and ``theoretical'' (blue) power of the BH procedure by the exact height distribution $F_\gamma$ (i.e., $v=-\infty$) for $a=55$ (solid), $a=45$ (dashed) and $a=35$ (dotted). The maxima of the curves (solid circles) approach the optimal bandwidth (vertical dashed).}
 \end{center}
 \end{figure}

Simulations were used to evaluate the performance and limitations of the STEM algorithm for finite range $L=200$, finite number of peaks $J=9$ and moderate signal strength $a$ over $\R^2$ (i.e., $N=2$). Adopting the notations in Example \ref{ex:Gaussian}, the truncated Gaussian peaks $a_jh_j(t)$ are constructed with $a_j=a$, $b_j=3$ and $c_j=3$ for all $j=1,\ldots,J$ and varying $a$, and $\{\tau_j\}_{1\le j\le J} = \{(50i_1, 50i_2)\}_{i_1,i_2=1,2,3}$; the noise $z(t)$ is constructed with $\sigma=1$ and varying $\nu$; the smoothing kernel $w_\gamma(t)$ is constructed with $c=3$ and varying $\gamma$. The noise parameters $\sigma_\gamma$, $\rho_\gamma'$ and $\rho_\gamma''$ (note that $\kappa_\gamma=1$) were estimated using the same smoothing kernel. The BH procedures were applied at level $\alpha=0.05$ and over 10,000 replications to simulate the expectations.

Figure \ref{fig:error} shows the realized FDR and power of the BH procedures by the STEM algorithm, evaluated according to \eqref{eq:true-FDR} and \eqref{eq:true-power} with $v=-\infty$. As predicted by the theory, for every fixed bandwidth $\gamma$, the FDR is controlled below $\alpha= 0.05$ for strong enough signal $a$, and the power increases to 1. The theoretical FDR curve (blue) is evaluated according to the upper bound in \eqref{eq:bound-FDR}, while the theoretical power curve (blue) is derived by plugging the asymptotic threshold $u^*_{\BH}(-\infty)$ \eqref{eq:thresh-BH-fixed} into the approximated power \eqref{eq:approx-power}. It can be seen that the realized FDR fits the theoretical when the smoothing bandwidth $\gamma$ is close to the optimal one. However, for large $\gamma$, the realized FDR increases because of signal contamination in the transition region $\mathbb{T}_{\gamma}$. This phenomenon goes away as the signal $a$ increases. We also find that for small $\nu$ and small $\gamma$, say $\nu=0$ and $\gamma=1$, the difference between the realized FDR and theoretical FDR is relatively large. This is because the smoothed field $z_\gamma$ is not smooth enough in such case. Similar phenomena appear for the power. The simulation shows that when the signal gets stronger, the bandwidth maximizing the realized power gets closer to the optimal $\gamma$.

\begin{figure}[t]
\begin{center}
\begin{tabular}{cccc}
& $\nu=0$ & $\nu=1$ & $\nu=2$ \\
\begin{sideways}\phantom{-----------}FDR\end{sideways} &
\includegraphics[trim=40 10 60 10,clip,width=1.4in]{\figurepath/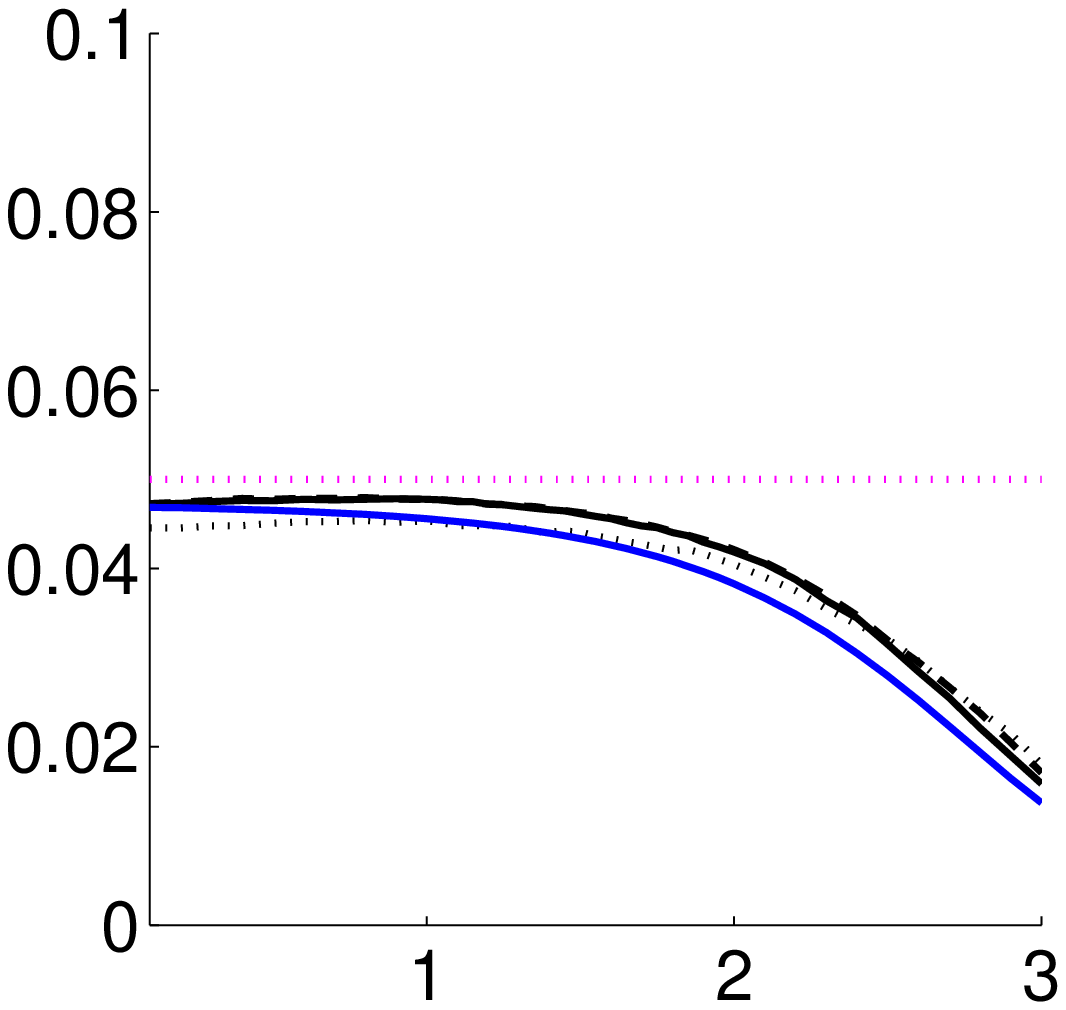} &
\includegraphics[trim=40 10 60 10,clip,width=1.4in]{\figurepath/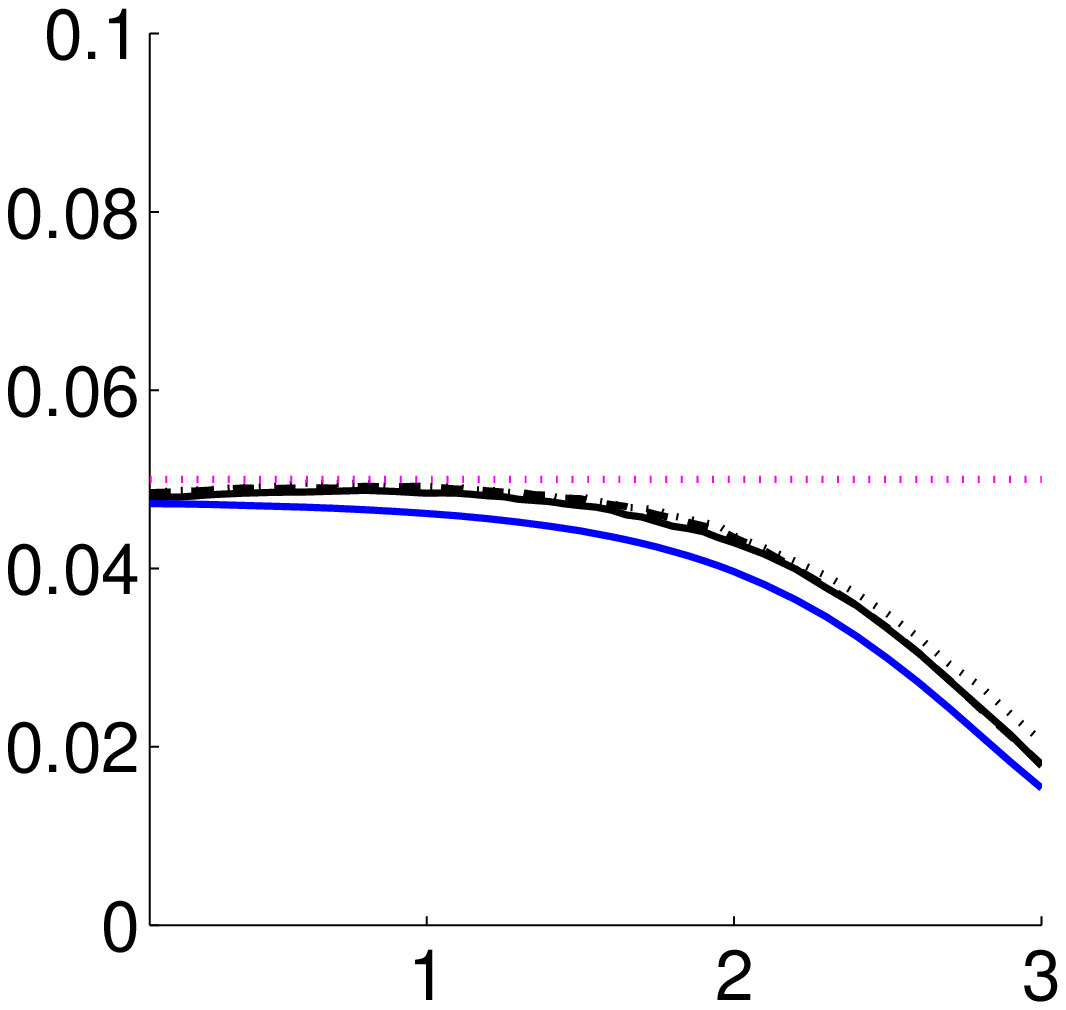} &
\includegraphics[trim=40 10 60 10,clip,width=1.4in]{\figurepath/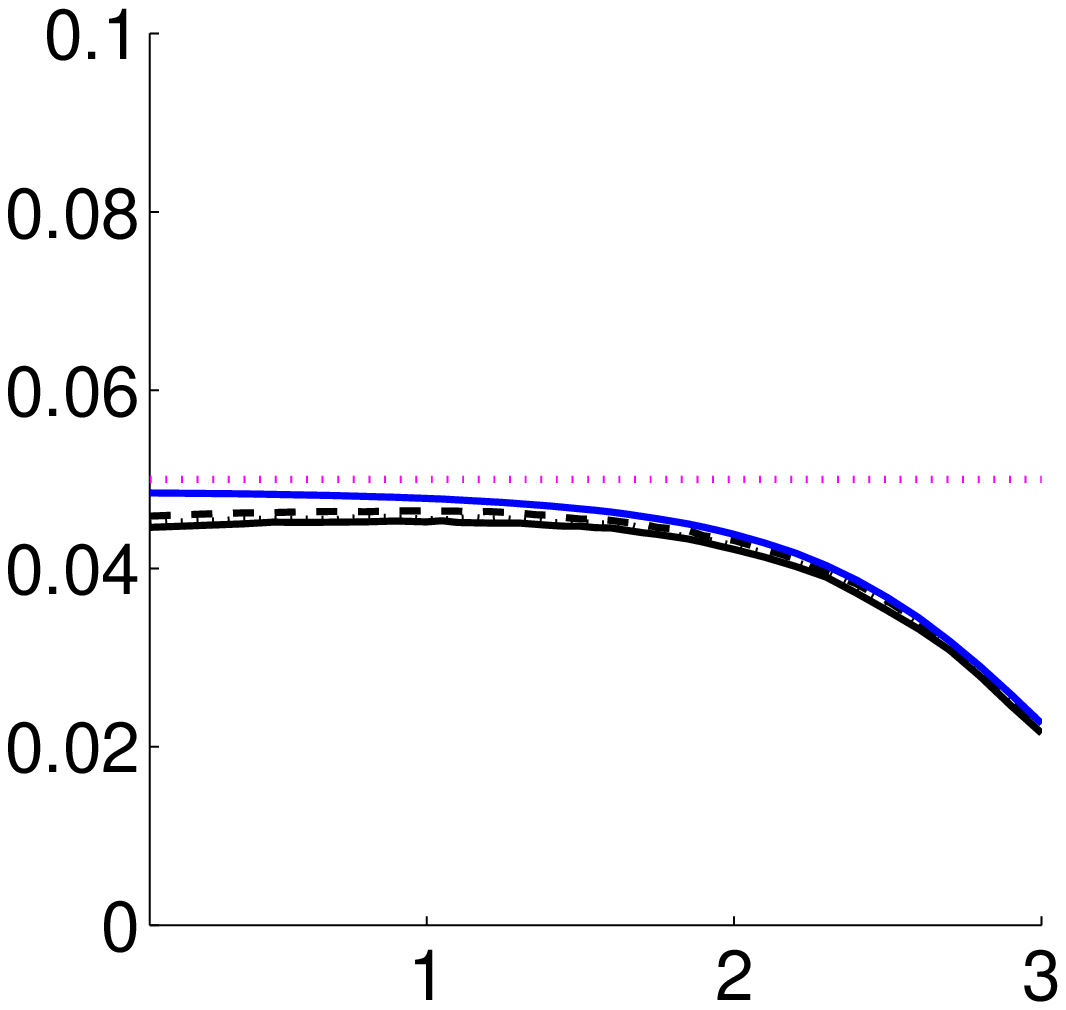} \\
\begin{sideways}\phantom{-----------}Power\end{sideways} &
\includegraphics[trim=40 10 60 10,clip,width=1.4in]{\figurepath/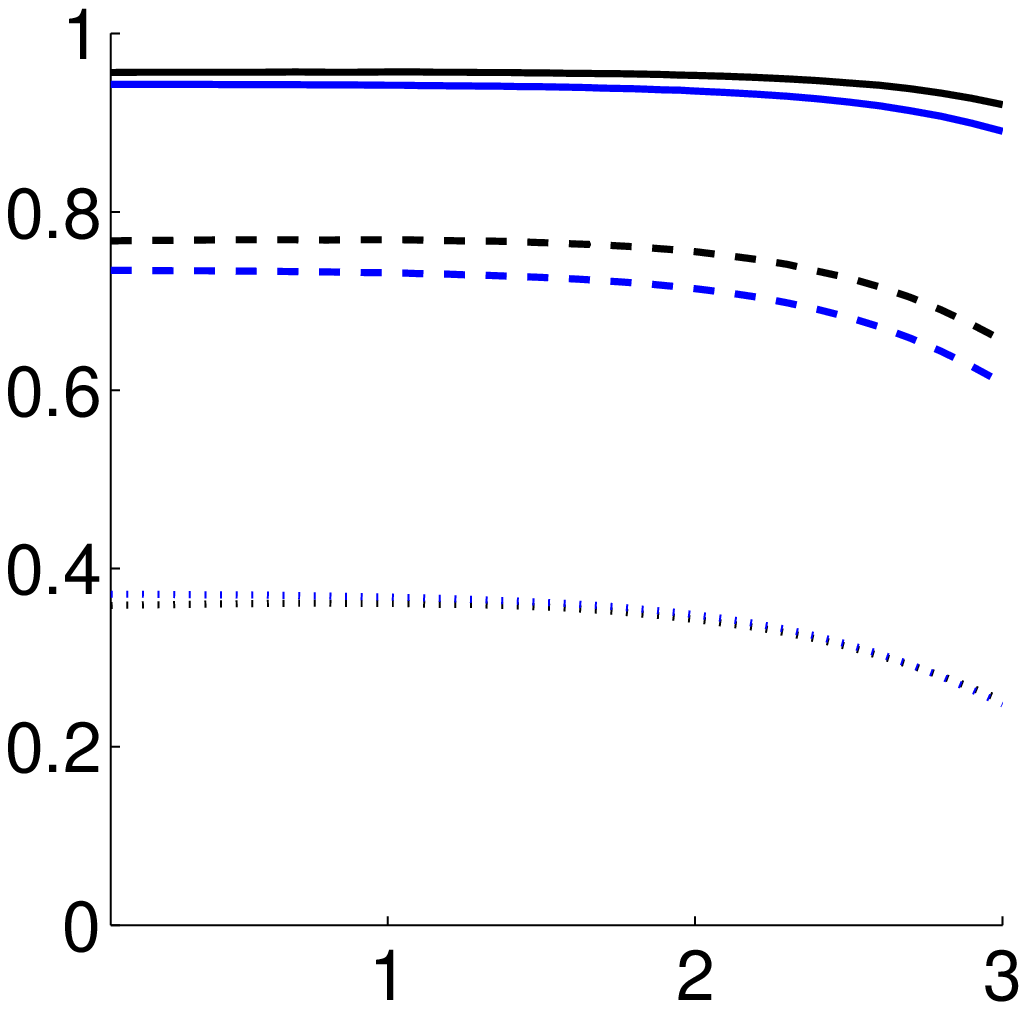} &
\includegraphics[trim=40 10 60 10,clip,width=1.4in]{\figurepath/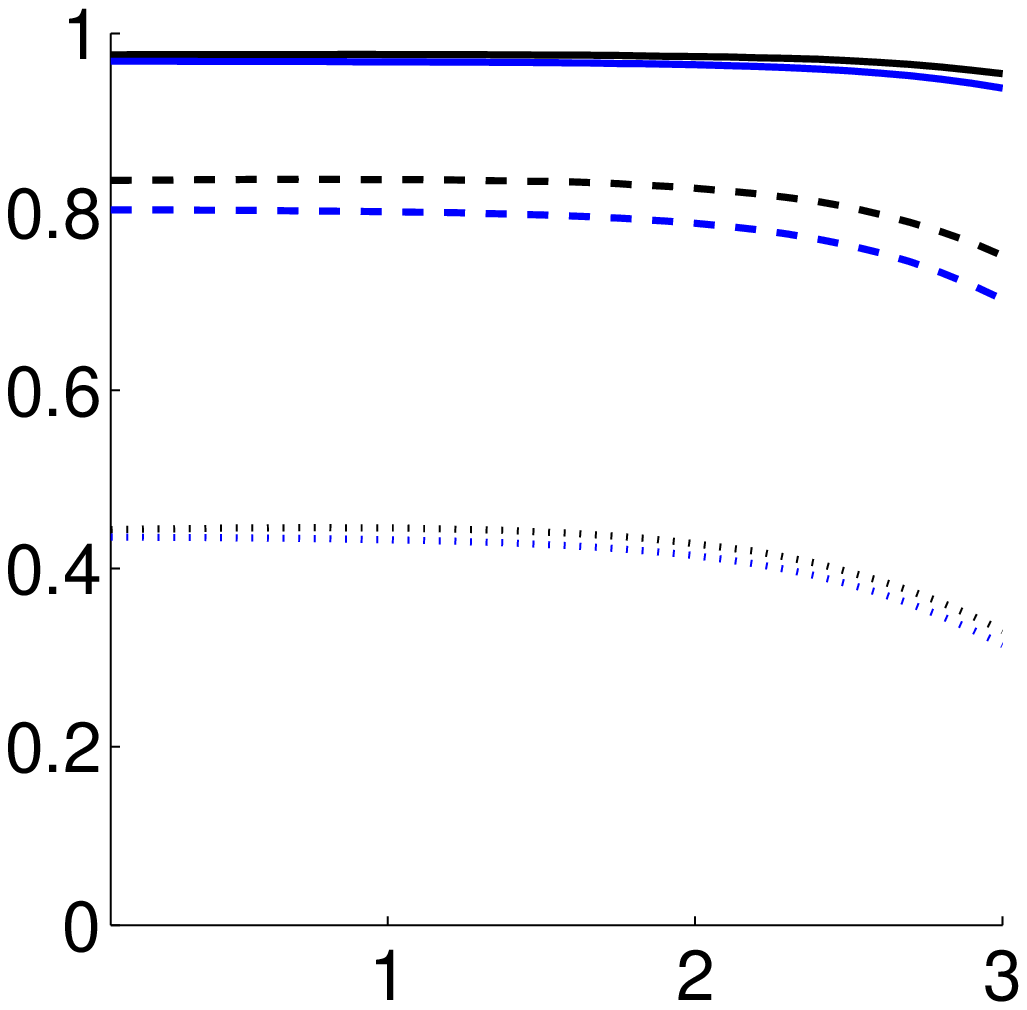} &
\includegraphics[trim=40 10 60 10,clip,width=1.4in]{\figurepath/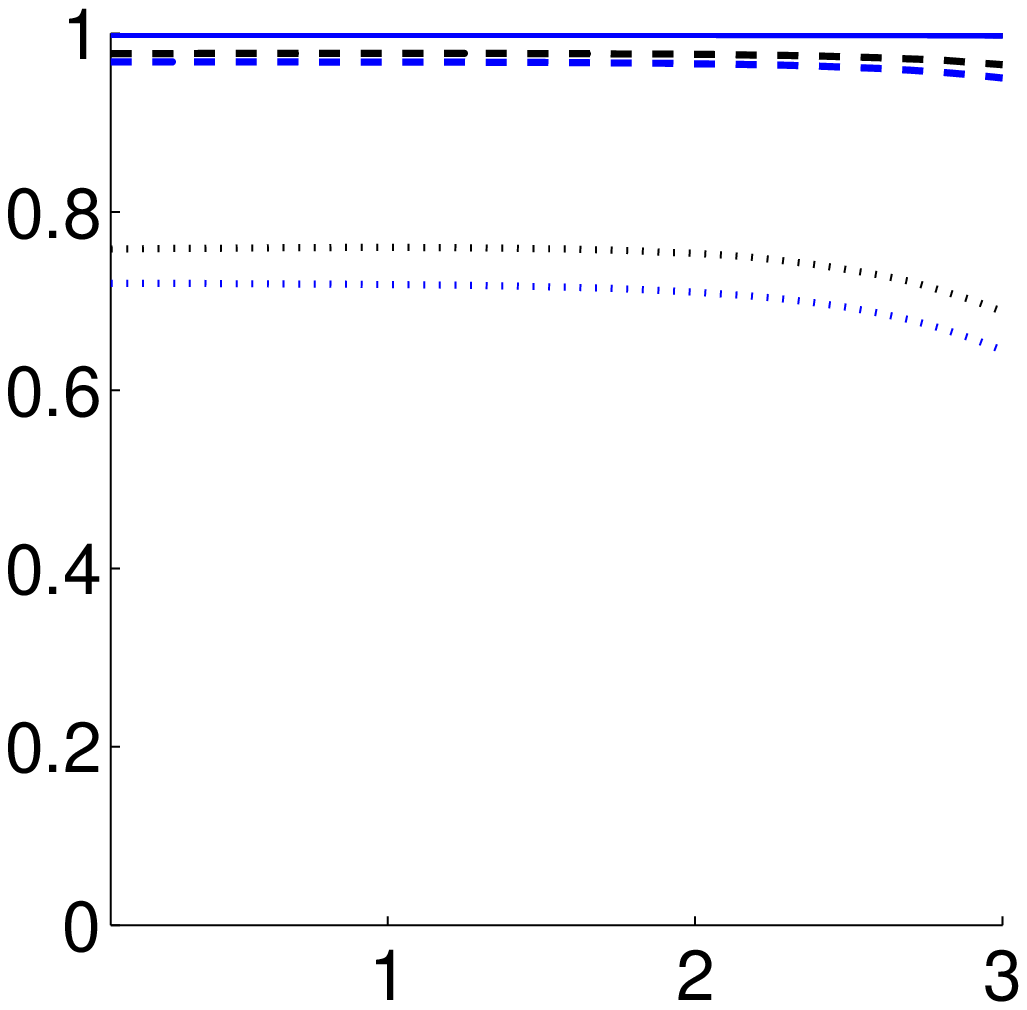} \\
& \phantom{---}$v/\sigma_\gamma$ & \phantom{---}$v/\sigma_\gamma$ & \phantom{---}$v/\sigma_\gamma$
\end{tabular}
\caption{ \label{fig:error-os} Realized (black) and ``theoretical'' (blue) FDR, Realized (black) and ``theoretical'' (blue) power of the BH procedure by the exact overshoot distribution $F_\gamma(\cdot,v)$ for $a=55$ (solid), $a=45$ (dashed) and $a=35$ (dotted).}
 \end{center}
 \end{figure}

Figure \ref{fig:error-os} shows the realized FDR and power of the BH procedures by the STEM algorithm, using the exact overshoot distribution $F_\gamma(\cdot,v)$ to compute p-values. Here, the bandwidth is chosen to be the optimal $\gamma$ and is fixed. The theoretical FDR curve (blue) is evaluated according to the upper bound in \eqref{eq:bound-FDR}, while the theoretical power curve (blue) is derived by plugging the asymptotic threshold $u^*_{\BH}(v)$ \eqref{eq:thresh-BH-fixed} into the approximated power \eqref{eq:approx-power}. As shown in Figure \ref{fig:error-os}, as the pre-threshold $v$ gets larger, the FDR becomes smaller and so does the power. This confirms the observation made after \eqref{eq:true-power} that the case of $\nu=-\infty$ gives the best performance if the exact height distribution $F_\gamma$ is known. However, when the signal is relatively strong, pre-thresholding does not weaken the power too much.

\begin{figure}[t]
\begin{center}
\begin{tabular}{cccc}
& $\nu=0$ & $\nu=1$ & $\nu=2$ \\
\begin{sideways}\phantom{-----------}FDR\end{sideways} &
\includegraphics[trim=40 10 60 10,clip,width=1.4in]{\figurepath/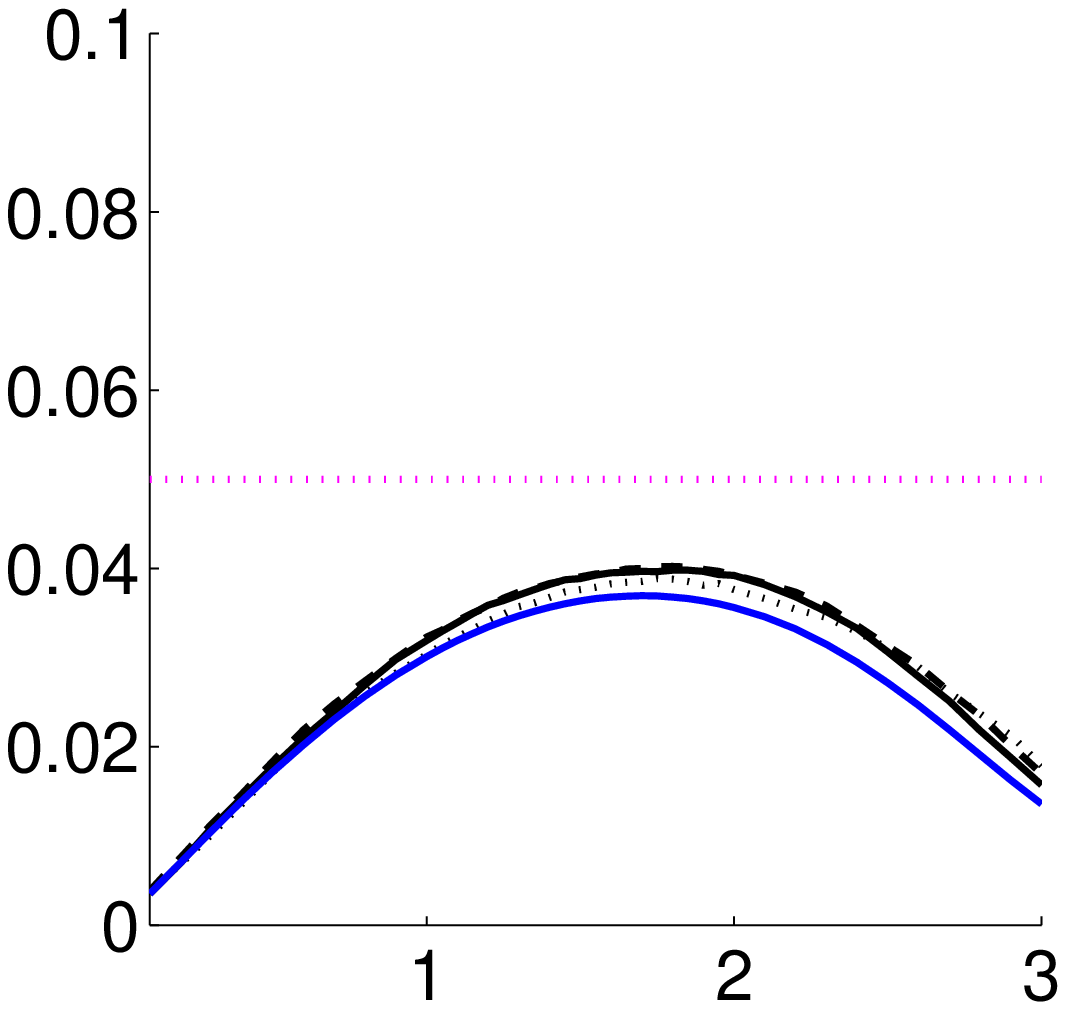} &
\includegraphics[trim=40 10 60 10,clip,width=1.4in]{\figurepath/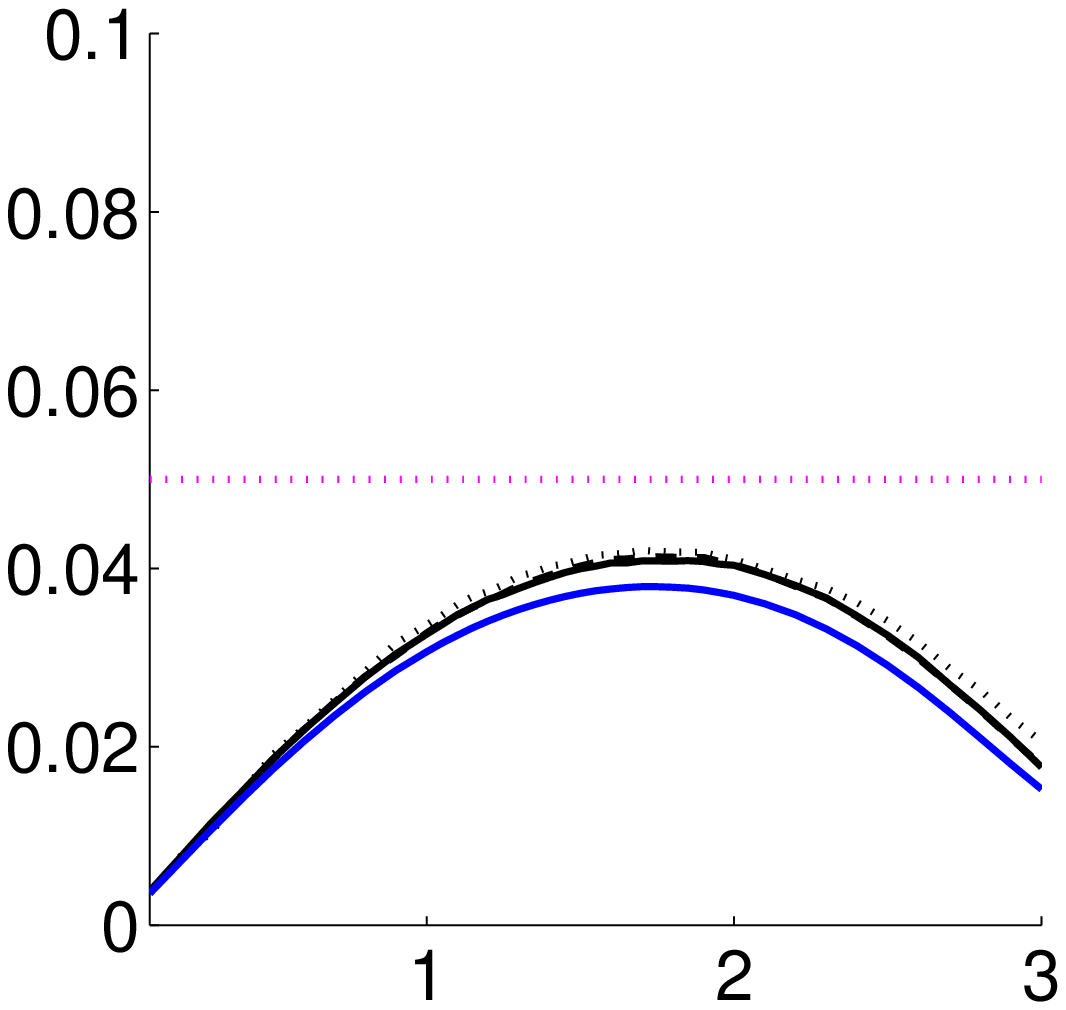} &
\includegraphics[trim=40 10 60 10,clip,width=1.4in]{\figurepath/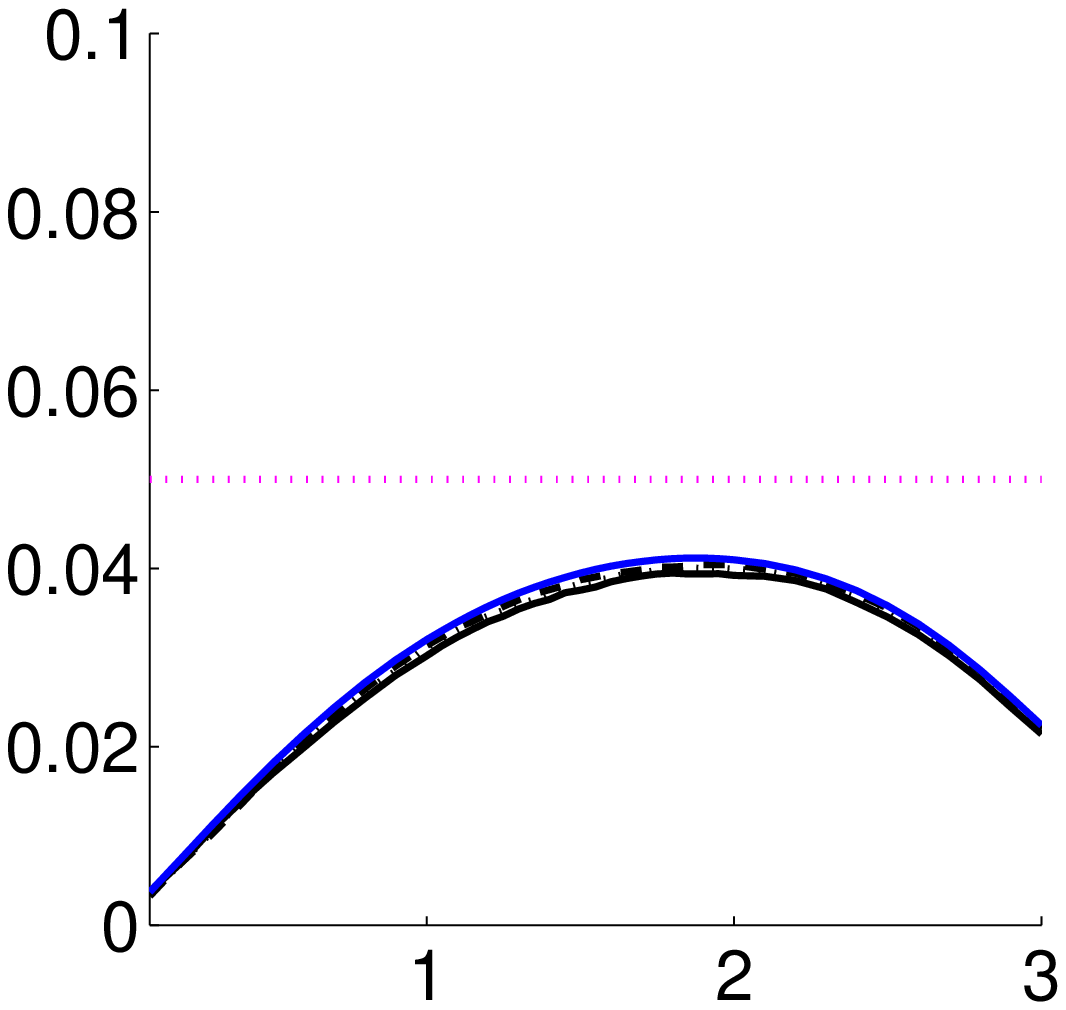} \\
\begin{sideways}\phantom{-----------}Power\end{sideways} &
\includegraphics[trim=40 10 60 10,clip,width=1.4in]{\figurepath/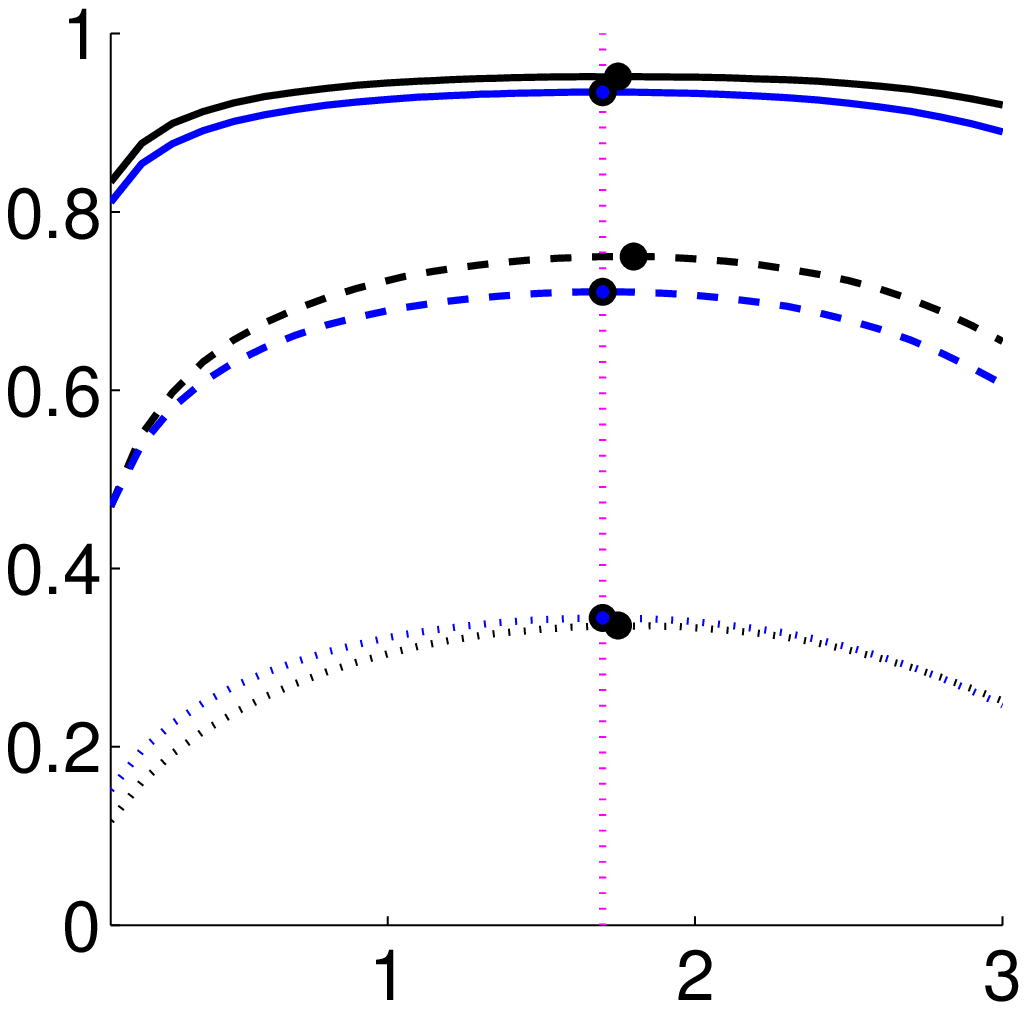} &
\includegraphics[trim=40 10 60 10,clip,width=1.4in]{\figurepath/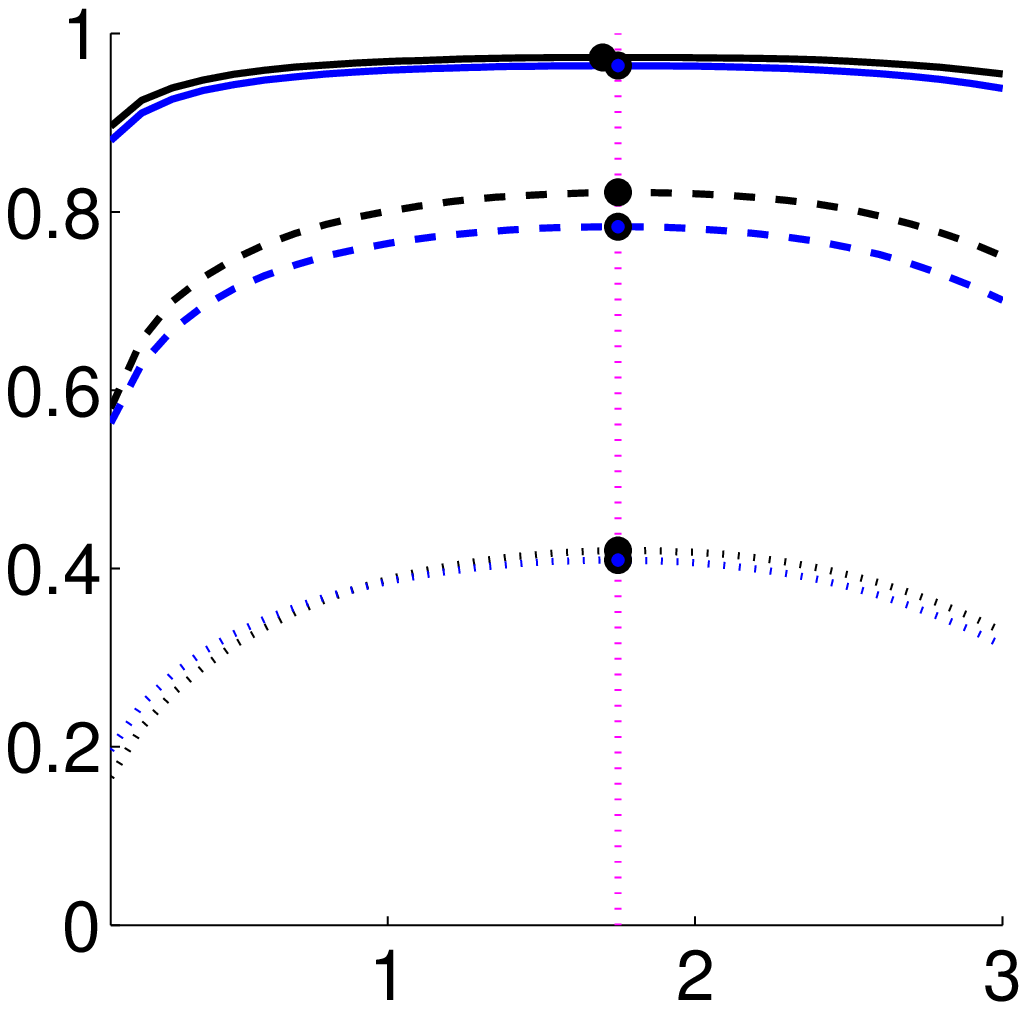} &
\includegraphics[trim=40 10 60 10,clip,width=1.4in]{\figurepath/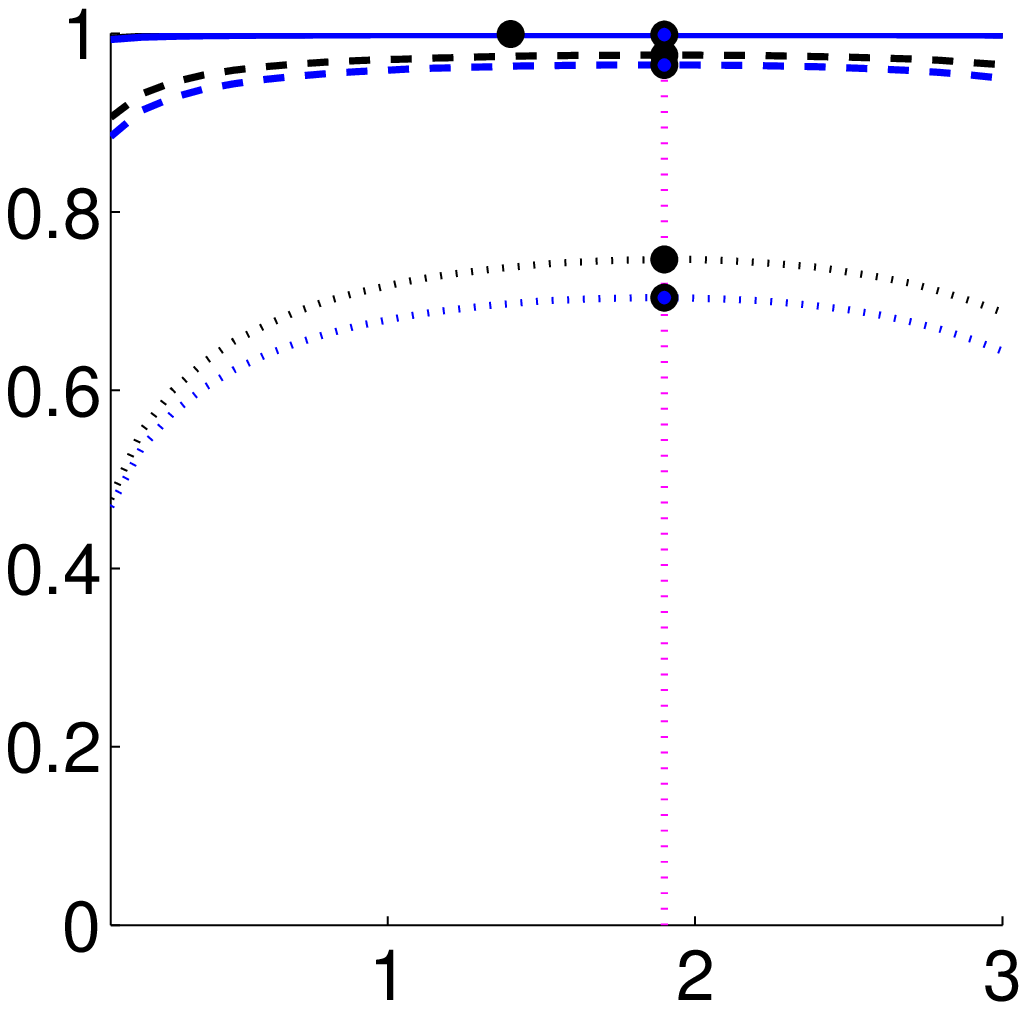} \\
& \phantom{---}$v/\sigma_\gamma$ & \phantom{---}$v/\sigma_\gamma$ & \phantom{---}$v/\sigma_\gamma$
\end{tabular}
\caption{ \label{fig:error-os-approx} Realized (black) and ``theoretical'' (blue) FDR, Realized (black) and ``theoretical'' (blue) power of the BH procedure by the approximated overshoot distribution $K_\gamma(\cdot,v)$ for $a=55$ (solid), $a=45$ (dashed) and $a=35$ (dotted). The maxima of the curves (solid circles) approach the optimal pre-threshold $v_{\rm opt}$ (vertical dashed).}
 \end{center}
 \end{figure}

Figure \ref{fig:error-os-approx} shows the realized FDR and power of the BH procedures by the STEM algorithm, using the approximated overshoot distribution $K_\gamma(\cdot,v)$ to compute p-values. Here again, the bandwidth is chosen to be the optimal $\gamma$ and is fixed. The theoretical FDR curve (blue) is evaluated according to the upper bound in \eqref{eq:bound-FDR-os-approx}, while the theoretical power curve (blue) is derived by plugging the asymptotic threshold $u^{**}_{\BH}(v)$ \eqref{eq:thresh-BH-fixed-os-approx} into the approximated power \eqref{eq:approx-power}. The simulation shows that the pre-threshold maximizing the realized power, which does not depend on the strength of the signal, is very close to the optimal pre-threshold $v_{\rm opt}$ \eqref{Eq:vopt}. Moreover, the realized curves still fit the theoretical curves very well for small $v$. This is because the limit in Theorem \ref{thm:FDR-os-approx} is in fact taken when the BH threshold is large.


\section{Data example}

\begin{figure}[t]
\begin{center}
\begin{tabular}{ccc}
\includegraphics[trim=35 10 35 0,clip,width=1.5in]{\figurepath/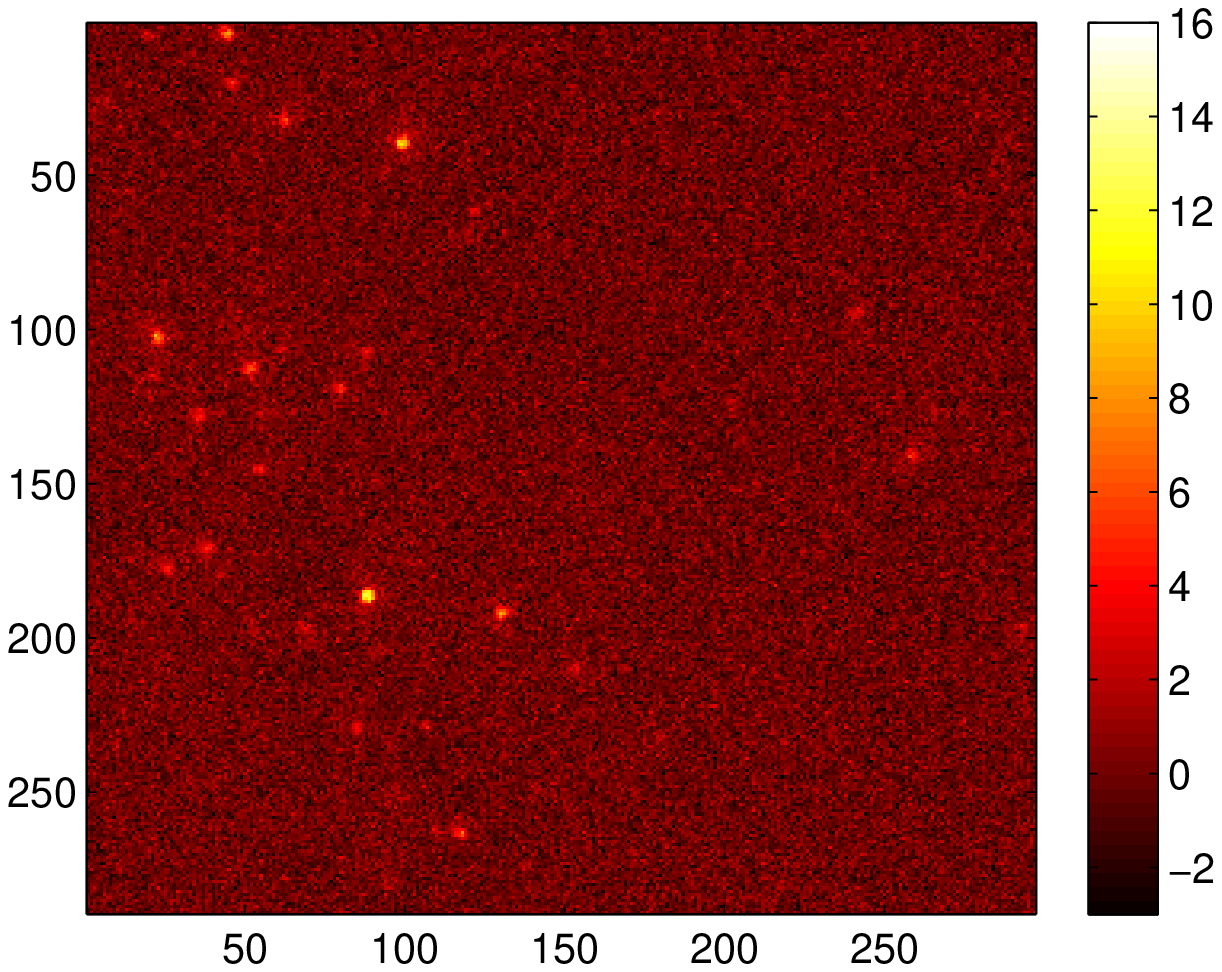} &
\includegraphics[trim=35 10 35 0,clip,width=1.5in]{\figurepath/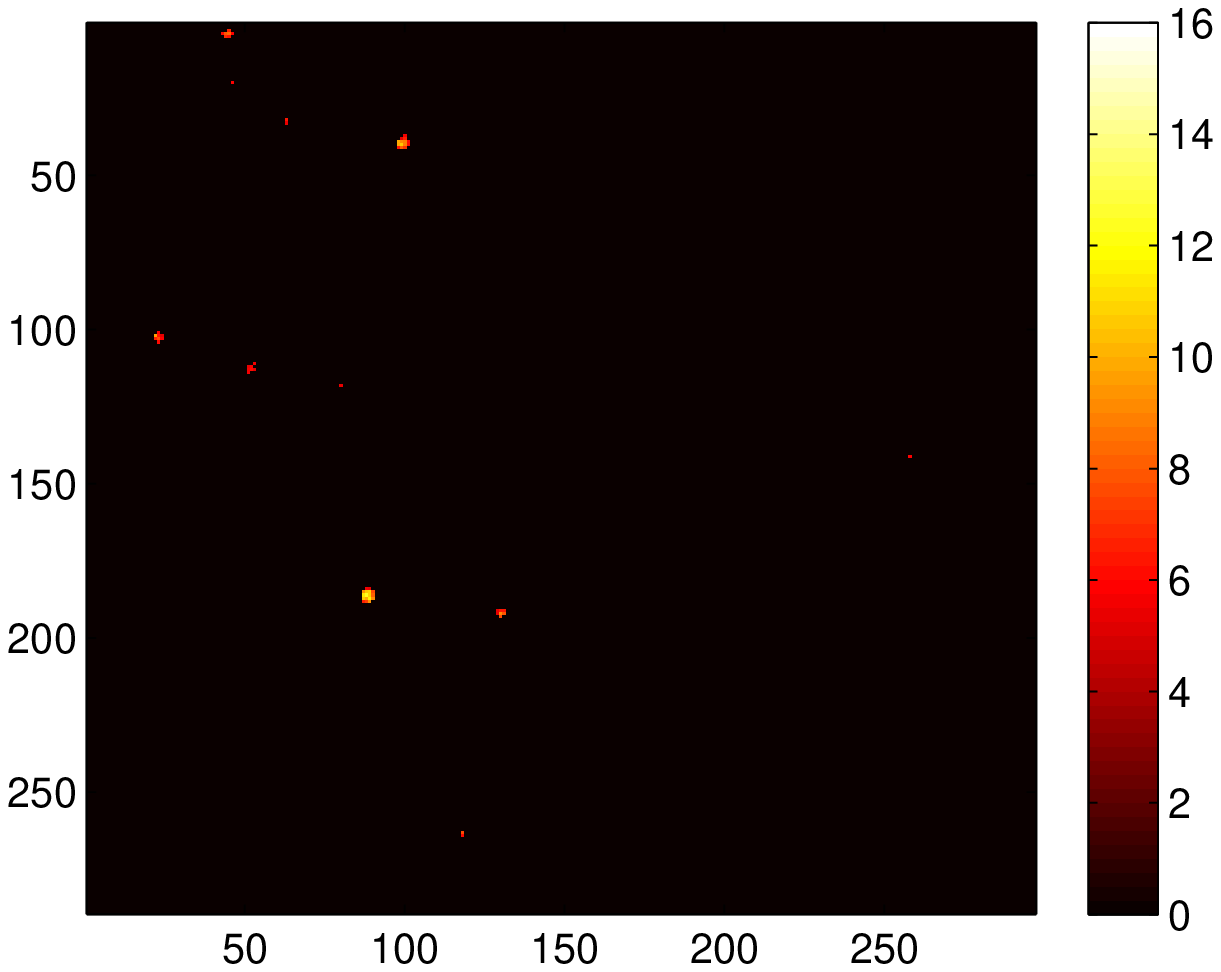}&
\includegraphics[trim=35 10 35 0,clip,width=1.5in]{\figurepath/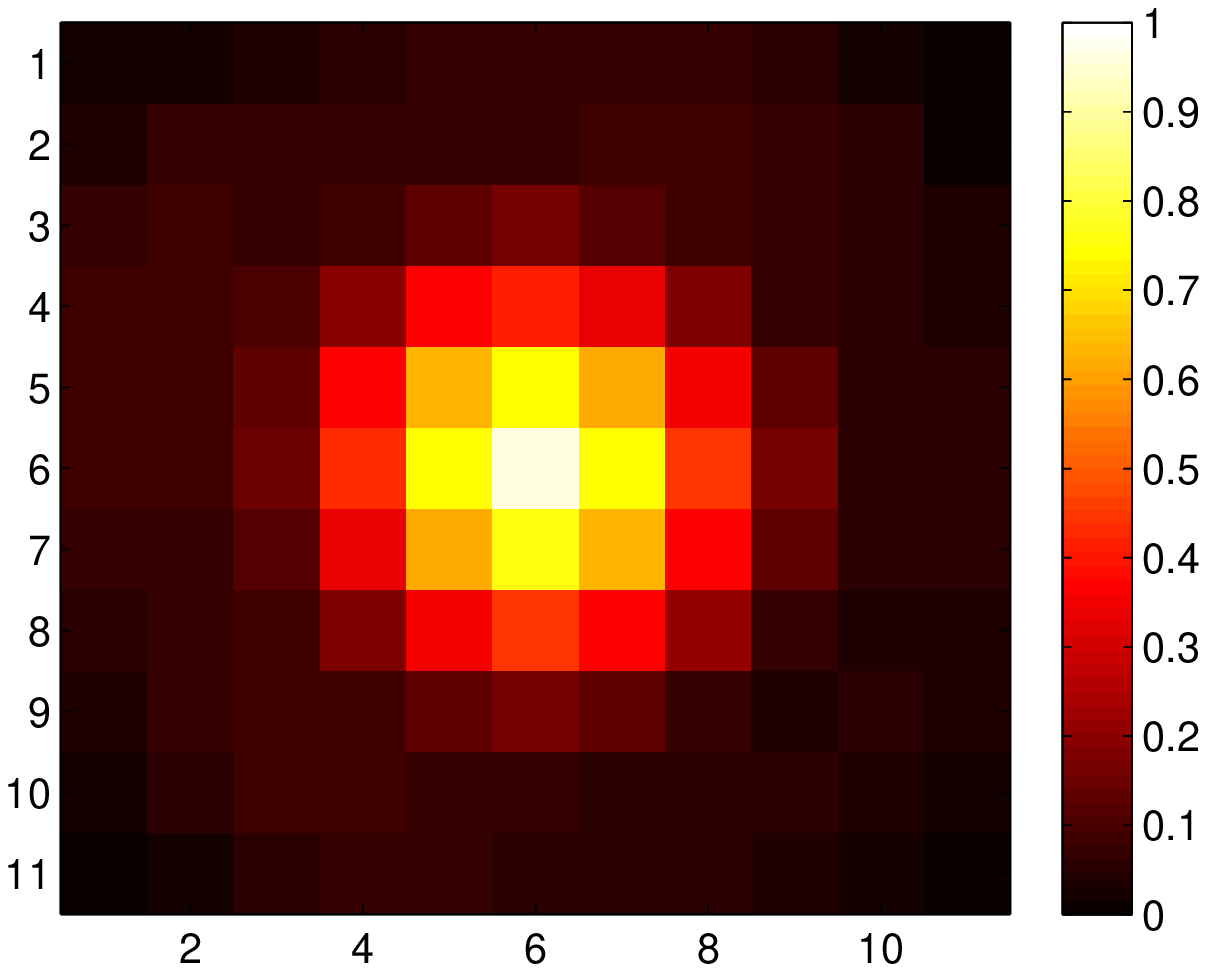}\\
\phantom{---}observed image & \phantom{---}pixelwise detection & \phantom{---}peak shape\\
\includegraphics[trim=35 10 35 0,clip,width=1.5in]{\figurepath/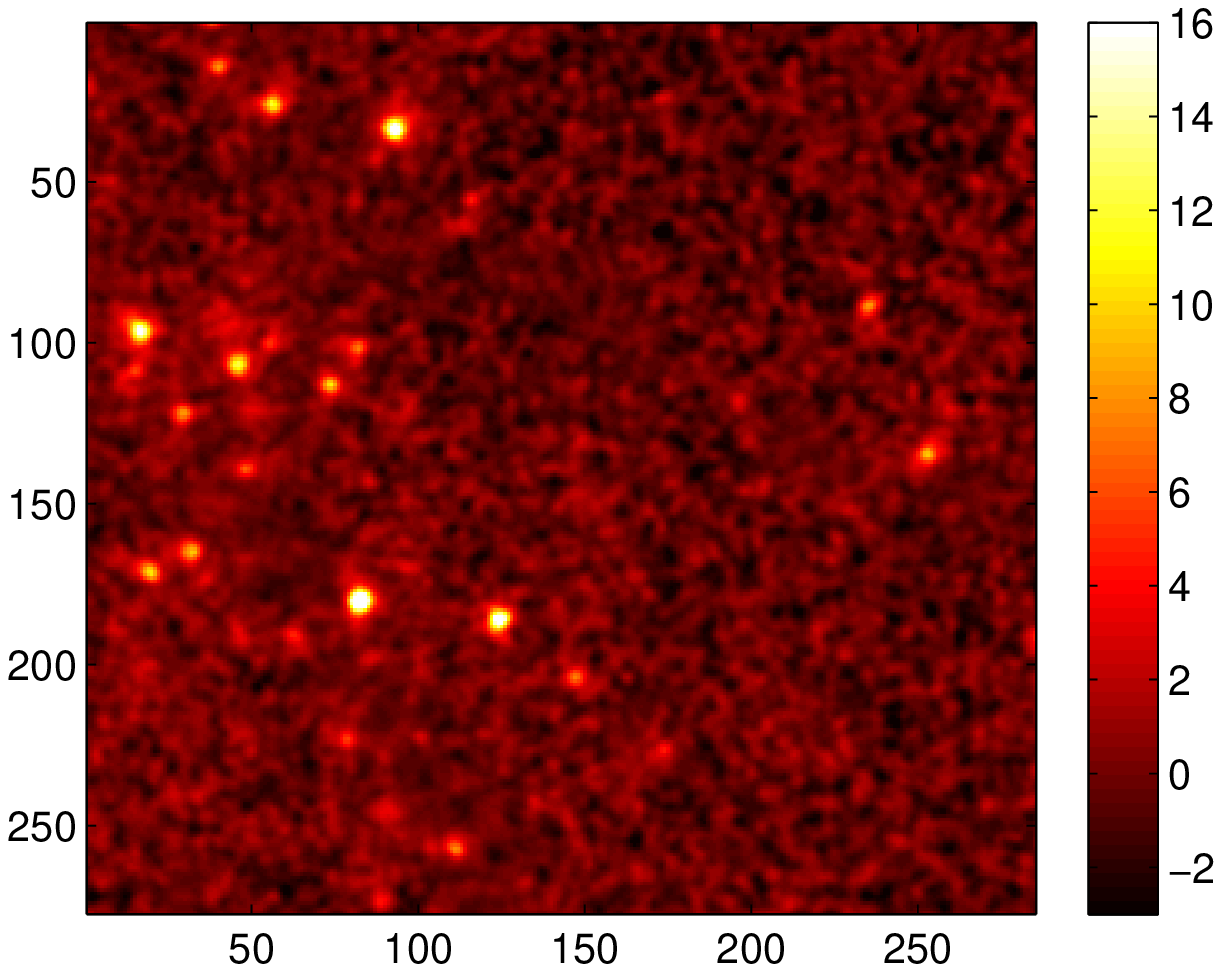}&
\includegraphics[trim=35 10 35 0,clip,width=1.5in]{\figurepath/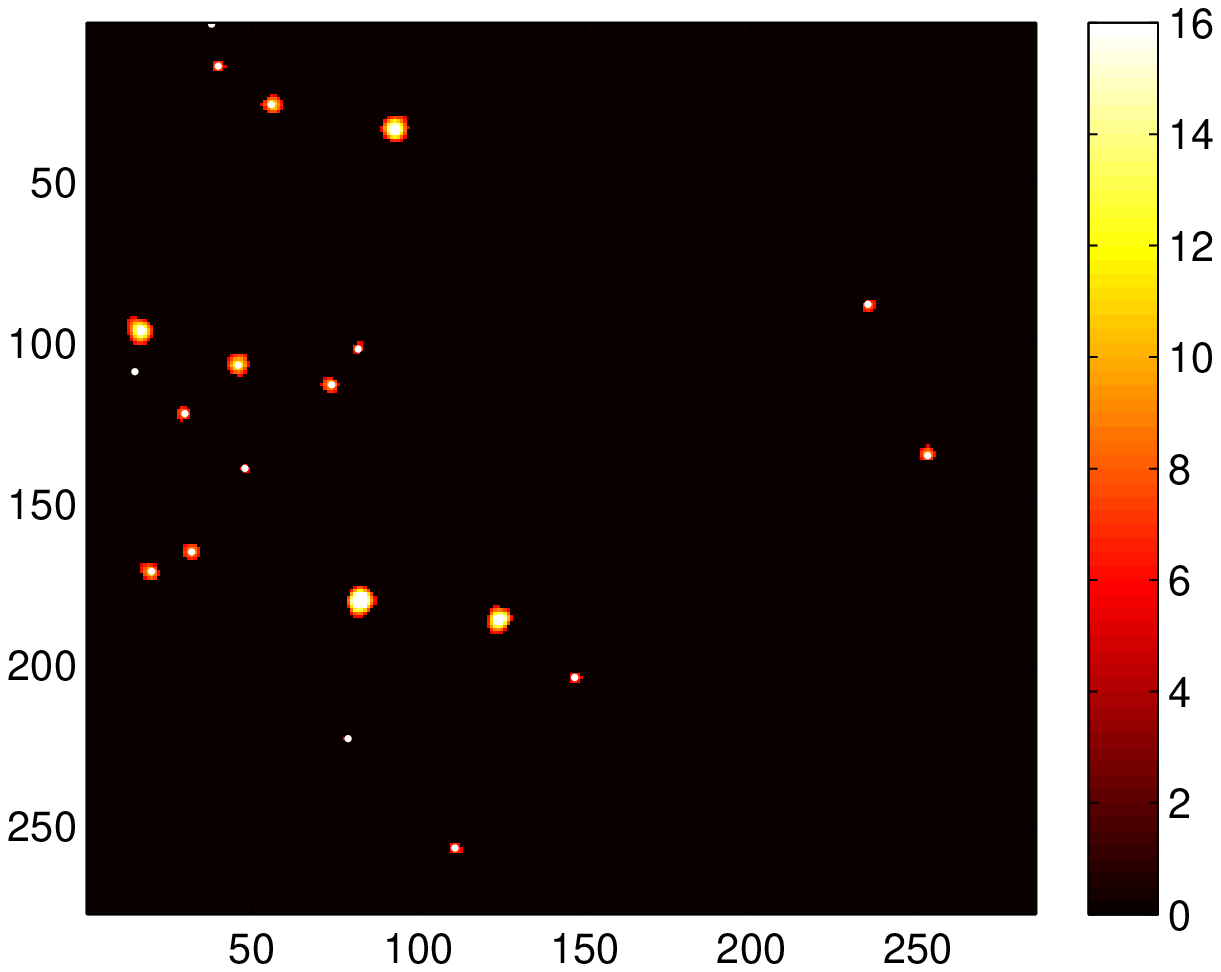} &
\includegraphics[trim=35 10 35 0,clip,width=1.5in]{\figurepath/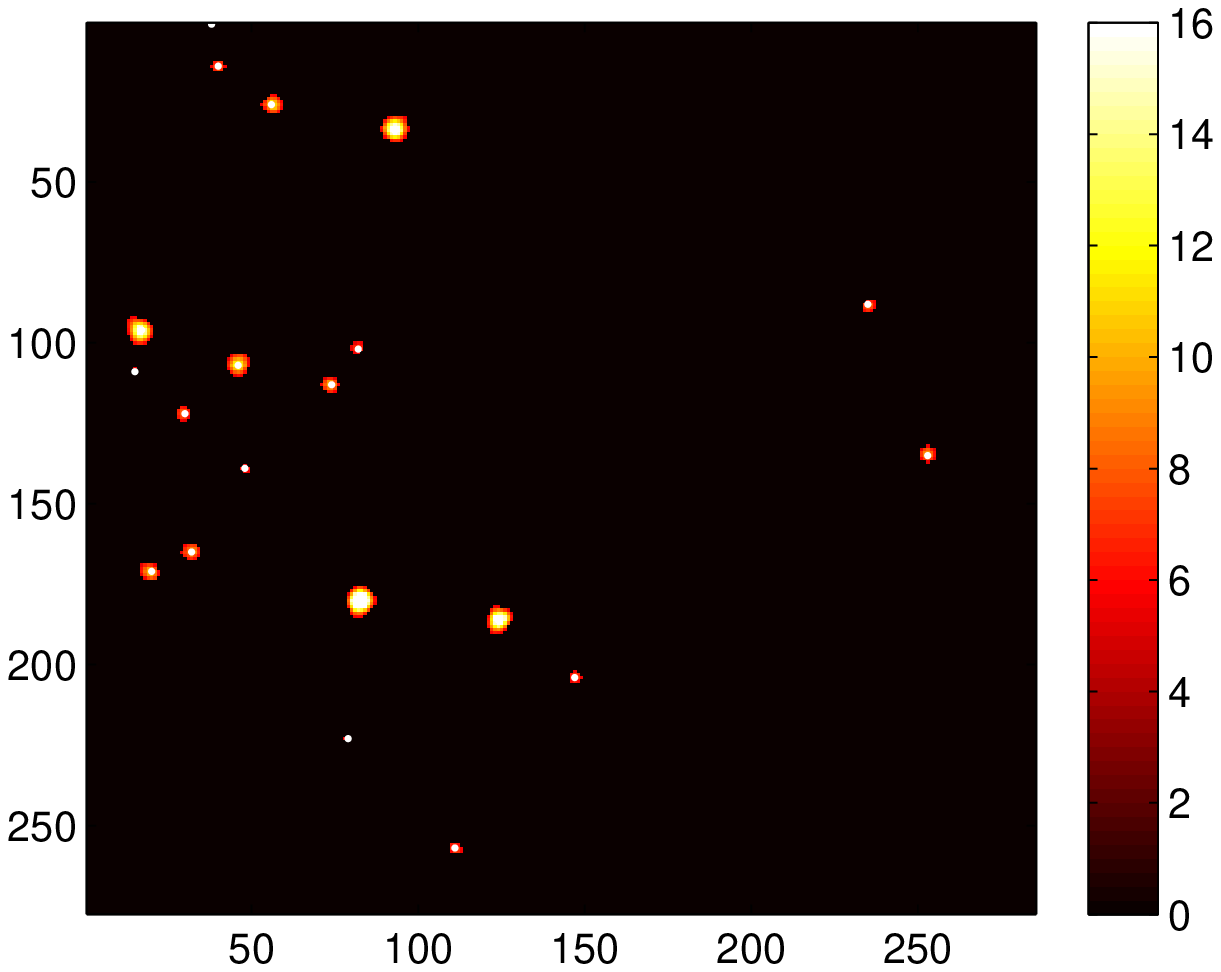}\\
 \phantom{---}smoothed image & \phantom{---}detection by height distr. & \phantom{---}detection by overshoot
\end{tabular}
\caption{ \label{fig:data} Data example.}
 \end{center}
 \end{figure}

The data consists of a stack of 1000 consecutive images of biological subcellular structure acquired via fluorescence nanoscopy \citep{Egner:2007,Geisler:2007}. The imaging technique, termed Photo-Activated Localization Microscopy with Independently Running Acquisition (PALMIRA), operates by shining an excitation laser at a very low intensity so that photons interact with only a small number of molecules at each recorded image frame. The recorded molecules appear as bright spots in each image. The image analysis task consists of separating those bright spots from the noisy background.

Figure \ref{fig:data} (top left) shows the first of those images, covering a region of about $10 \times 10~\mu m^2$. The background in this image has been log-transformed and adjusted by robust background estimates so that it can be assumed to have zero mean and unit variance. Pixelwise thresholding using the standard normal distribution for computing p-values and the BH algorithm on 85833 pixels per image times 1000 images at an FDR level of 0.0001 detects only the brightest regions (top middle) and provides a result where the FDR can only be interpreted in terms of pixels, not molecules.

The proper interpretation is given by the STEM algorithm. Considering the fluorescent molecules as point sources, the model of Section \ref{sec:model} captures the spatial extent of the signal peaks and the smoothness of the background field caused by dispersion of the recorded light in the acquisition process. The peak shape is seen in Figure \ref{fig:data} (top right), obtained as the average of the strongest peaks in the dataset, one from each image frame, aligned at their highest point. Robust estimation of the covariance function by rows and columns separately and across the image stack (not shown) indicates that the background noise may be modeled roughly by an isotropic Gaussian random field.

As shown in the simulation studies of Section \ref{sec:simulations}, a rough approximation to the peak shape suffices to obtain good detection power. To smooth the images (bottom left), we use an isotropic Gaussian smoothing kernel fitted by least squares to the estimated peak shape in the log domain, yielding $\gamma = 1.6$, and standardize the smoothed field so that the background has again mean zero and standard deviation $\sigma_\gamma = 1$. The choice of a Gaussian kernel allows using $\kappa_\gamma=1$ in the height distribution of Section \ref{sec:p-value-isotropic} for computing p-values, so that no correlation parameters need to be estimated. Thresholding the local maxima by the BH algorithm applied to the entire image stack at an FDR level of 0.0001 results in about 18 to 19 detected peaks in each frame (bottom middle).

Removing the isotropy assumption for the noise and computing p-values of the local maxima using the approximate overshoot distribution of Section \ref{sec:overshoot} instead with pre-threshold $v=1.8 \sigma_\gamma$ (close to optimal according to Figure \ref{fig:error-os-approx}) yields the same significant peaks (bottom right). This confirms the simulation results that using the approximate overshoot distribution yields a similar power, while being more general in its required assumptions.

\section{Technical details}
\label{sec:proofs}

\subsection{Supporting results}
\label{app:lemmas}


\begin{lemma}
\label{lemma:bounds}
Assume the model of Section \ref{sec:model} and let
$$
E=\sup_{\|x\|=1}{\rm Var} (\l \nabla z_{\gamma}(t), x \r), \quad F=\sup_{\|x\|=1}{\rm Var} (x^T \nabla^2 z_{\gamma}(t) x).
$$
For each $j$, let $u$ and $\ep_0>0$ be fixed, then for sufficiently large $a_j$,
\begin{equation}
\label{eq:I-bound}
\begin{split}
\P\left(\# \{t \in \tilde{T}\cap I_j^{\rm side}\} = 0\right)
&\ge
1 - \exp\left(-\frac{a_j^2C_j^2}{2E+\ep_0} \right), \\
\P\left(\# \{t \in \tilde{T}\cap I_j^{\rm mode}\} = 1\right)
&\ge
1 - \exp\left(-\frac{a_j^2C_j^2}{2E+\ep_0} \right) - \exp\left(-\frac{a_j^2D_j^2}{2F+\ep_0} \right), \\
\P\left(\# \{t \in \tilde{T}(u)\cap I_j^{\rm mode}\} = 1\right) &\ge
1 - \Phi\left(\frac{u-a_j M_j}{\sigma_\gamma}\right) - \exp\left(-\frac{a_j^2D_j^2}{2F+\ep_0} \right).
\end{split}
\end{equation}
\end{lemma}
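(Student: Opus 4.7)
The plan is to introduce three ``good'' events whose simultaneous occurrence forces the three conclusions, and bound their complementary failure probabilities by the Borell--TIS inequality (for the two suprema of Gaussian derivative fields) together with a standard Gaussian tail (for the noise at the mode). Concretely, define
\[
G_j = \Big\{ \sup_{t \in \overline{I_j^{\rm side}},\, \|x\|=1} \l \nabla z_\gamma(t), x\r < a_j C_j \Big\},
\]
\[
H_j = \Big\{ \sup_{t \in I_j^{\rm mode},\, \|x\|=1} x^T \nabla^2 z_\gamma(t) x < a_j D_j \Big\},
\]
\[
Z_j = \big\{ z_\gamma(\tau_j) > u - a_j M_j \big\}.
\]
The first two are parametrized centered Gaussian processes on compact index sets; the third is a one-dimensional Gaussian tail event.

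For $\P(G_j^c)$, apply Borell--TIS to $(t,x)\mapsto \l \nabla z_\gamma(t), x\r$ on $\overline{I_j^{\rm side}}\times S^{N-1}$, whose pointwise variance is bounded by $E$. Since $|S_{j,\gamma}|$ is uniformly bounded by assumption~(1) of Section~\ref{sec:model} and $S^{N-1}$ is compact, the expected supremum is a constant $O(1)$ in $a_j$; for $a_j$ large enough to absorb this shift into any fixed $\ep_0>0$, Borell--TIS yields $\P(G_j^c) \le \exp(-a_j^2 C_j^2/(2E+\ep_0))$. The identical argument applied to $(t,x)\mapsto x^T\nabla^2 z_\gamma(t)x$ on $I_j^{\rm mode}\times S^{N-1}$ with variance bounded by $F$ gives $\P(H_j^c) \le \exp(-a_j^2 D_j^2/(2F+\ep_0))$. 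Finally, $z_\gamma(\tau_j)\sim N(0,\sigma_\gamma^2)$ yields $\P(Z_j^c) = \Phi((u-a_j M_j)/\sigma_\gamma)$.

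The deterministic arguments on these events use the disjoint-support decomposition $\nabla y_\gamma = a_j\nabla h_{j,\gamma} + \nabla z_\gamma$ on $S_{j,\gamma}$ (and similarly for $\nabla^2 y_\gamma$). On $G_j$, the radial derivative $\l \nabla y_\gamma(t), (\tau_j-t)/\|\tau_j-t\|\r > 0$ for every $t\in I_j^{\rm side}$, so $\nabla y_\gamma(t)\neq 0$ there, ruling out local maxima in $I_j^{\rm side}$ and giving bound~1. On $G_j\cap H_j$ the largest eigenvalue of $\nabla^2 y_\gamma$ is strictly negative throughout $I_j^{\rm mode}$, so $y_\gamma$ is strictly concave on the convex ball $I_j^{\rm mode}$, yielding at most one critical point there; and by continuity $\nabla y_\gamma$ points strictly inward on $\partial I_j^{\rm mode}$, so a standard degree-theoretic argument for inward-pointing vector fields on a ball furnishes an interior zero of $\nabla y_\gamma$, automatically a local maximum by the Hessian sign. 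A union bound then gives bound~2. For bound~3, one works on $G_j\cap H_j\cap Z_j$: by the preceding, there is a unique local maximum $t^*\in I_j^{\rm mode}$, and since $t^*$ is the global maximum of $y_\gamma$ on $\bar I_j^{\rm mode}$ we have $y_\gamma(t^*)\ge y_\gamma(\tau_j) = a_jM_j + z_\gamma(\tau_j) > u$, so $t^*\in \tilde T(u)\cap I_j^{\rm mode}$; the auxiliary $\P(G_j^c)$ contribution is absorbed into the Hessian exponent through the $\ep_0$ slack (or, alternatively, the same inward-pointing conclusion on $\partial I_j^{\rm mode}$ can be deduced from a concavity-plus-value-gap argument using $Z_j$ alone).

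The main obstacle is the topological existence step underlying bounds~2 and~3: strict concavity alone delivers only uniqueness, and producing an interior critical point requires converting the Borell--TIS tail on $G_j$ into an inward-pointing vector field on $\partial I_j^{\rm mode}$ and invoking the standard fact about such fields on a ball; this propagation depends critically on the uniformity of $C_j$ and $M_j$ from assumption~(2) of Section~\ref{sec:model}. Everything else reduces to a careful accounting of the two Borell--TIS tails, the Gaussian tail for $z_\gamma(\tau_j)$, and a union bound.
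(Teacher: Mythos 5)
Your proof follows essentially the same route as the paper's: the same two Borell--TIS bounds (on $\sup\l\nabla z_\gamma(t),x\r$ over the side region for the $C_j$ term, and on $\sup x^T\nabla^2 z_\gamma(t)x$ over the mode region for the $D_j$ term), the same Gaussian tail at $\tau_j$, and the same inward-pointing-gradient argument to produce an interior critical point; the paper phrases the existence step as ``$\l\nabla y_\gamma(t),\tau_j-t\r>0$ for all $t\in\partial I_j^{\rm mode}$ implies at least one local maximum inside,'' which is exactly your $G_j$-based degree argument. The one place you deviate quantitatively is the third inequality: working on $G_j\cap H_j\cap Z_j$ gives three failure terms, and your claim that $\P(G_j^c)\le\exp(-a_j^2C_j^2/(2E+\ep_0))$ can be ``absorbed into the Hessian exponent through the $\ep_0$ slack'' is not valid in general --- it would require $C_j^2/(2E)\ge D_j^2/(2F+\ep_0)$, which nothing in the model guarantees ($C_j$ may well be much smaller than $D_j$). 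Your fallback (``concavity plus value gap using $Z_j$ alone'') does not close this either: strict concavity on the ball together with $y_\gamma(\tau_j)>u$ still allows the maximum of $y_\gamma$ over $\bar I_j^{\rm mode}$ to sit on the boundary, so no interior critical point is forced without controlling the gradient (or the values of $z_\gamma$) on $\partial I_j^{\rm mode}$, which reintroduces a third term. The paper gets the two-term bound by directly asserting that the event ``no local maximum in $I_j^{\rm mode}$ exceeds $u$'' is contained in ``$y_\gamma<u$ everywhere on $I_j^{\rm mode}$,'' whence the bound $\Phi((u-a_jM_j)/\sigma_\gamma)$ --- a step that itself implicitly relies on the inward-pointing condition you make explicit. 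Since the lemma is consumed downstream only through union bounds over $j$, the extra $\exp(-a_j^2C_j^2/(2E+\ep_0))$ term in your version of the third line would be harmless there, but as written your argument establishes a slightly weaker inequality than the one stated.
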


\begin{proof}
(1) Consider first the side region $I_j^{\rm side}$. The probability that there are no local maxima of $y_\gamma(t)$ in $I_j^{\rm side}$ is greater than the probability that $\l \nabla y_\gamma(t), \tau_j-t\r > 0$ for all $t \in I_j^{\rm side}$. This probability is bounded below by
\begin{equation*}
\begin{split}
\P&\left(\inf_{I_j^{\rm side}} \l \nabla z_\gamma(t), \tau_j-t\r > -\inf_{I_j^{\rm side}} \l \nabla \mu_\gamma(t), \tau_j-t\r \right)\\
&=1 - \P\left(\sup_{I_j^{\rm side}} -\l \nabla z_\gamma(t), (\tau_j-t)/\|\tau_j-t\| \r \ge a_j C_j \right)\\
&\ge 1 - \P\left(\sup_{t\in I_j^{\rm side}}\sup_{\|x\|=1} \l \nabla z_\gamma(t), x \r \ge a_j C_j \right).
\end{split}
\end{equation*}
Then the first line of \eqref{eq:I-bound} follows from the Borell-TIS inequality \cite[Eq. (4.1.1)]{Adler:2007} and the stationarity of $z_\gamma$.

(2) The probability that $y_\gamma(t)$ has no local maxima in $I_j^{\rm mode}$ is less than the probability that there exists some $t\in \partial I_j^{\rm mode}$ such that $\l \nabla y_\gamma(t), \tau_j-t\r \le 0$, for all $t\in \partial I_j^{\rm mode}$ satisfying $\l \nabla y_\gamma(t), \tau_j-t\r > 0$ would imply the existence of at least one local maximum in $I_j$. This probability is bounded above by
\begin{equation*}
\begin{aligned}
\P&\left(\inf_{\partial I_j^{\rm mode}} \l \nabla z_\gamma(t), \tau_j-t\r \le -\inf_{\partial I_j^{\rm mode}} \l \nabla \mu_\gamma(t), \tau_j-t\r \right)\\
&= \P\left(\sup_{\partial I_j^{\rm mode}} - \bigg\langle \nabla z_\gamma(t), \frac{\tau_j-t}{\|\tau_j-t\|} \bigg\rangle \ge \inf_{\partial I_j^{\rm mode}} \bigg\langle \nabla \mu_\gamma(t), \frac{\tau_j-t}{\|\tau_j-t\|} \bigg\rangle \right)\\
&\le \P\left(\sup_{t\in I_j^{\rm side}}\sup_{\|x\|=1} \l \nabla z_\gamma(t), x \r \ge a_jC_j \right),
\end{aligned}
\end{equation*}
where the last inequality is due to fact that $\partial I_j^{\rm mode}$ is contained in the closure of $I_j^{\rm side}$. Then by the Borell-TIS inequality, for any fixed $\ep_0>0$ and sufficiently large $a_j$,
\begin{equation}
\label{eq:I-mode-bound-0}
\P\left(\#\{t \in \tilde{T}\cap I_j^{\rm mode})\} = 0\right) \le \exp\left(-\frac{a_j^2C_j^2}{2E+\ep_0} \right).
\end{equation}

On the other hand, the probability that $y_\gamma(t)$ has more than one local maxima in $I_j^{\rm mode}$ is less than the probability that the largest eigenvalue of $\nabla^2 y_\gamma(t)$ is nonnegative for some $t$ in $I_j^{\rm mode}$. This probability is
\begin{equation*}
\begin{aligned}
\P\left(\sup_{t \in I_j^{\rm mode}} \sup_{\|x\|=1} x^T \nabla^2 y_\gamma(t) x \ge 0\right) \le\P\left(\sup_{t \in I_j^{\rm mode}} \sup_{\|x\|=1} x^T \nabla^2 z_\gamma(t) x \ge a_j D_j \right).
\end{aligned}
\end{equation*}
Then by the Borell-TIS inequality again, for any fixed $\ep_0>0$ and sufficiently large $a_j$,
\begin{equation}
\label{eq:I-mode-bound-1}
\P\left(\# \{t \in \tilde{T}\cap I_j^{\rm mode})\} \ge 2\right) \le \exp\left(-\frac{a_j^2D_j^2}{2F+\ep_0} \right).
\end{equation}
Putting \eqref{eq:I-mode-bound-0} and \eqref{eq:I-mode-bound-1} together gives the bound in the second line of \eqref{eq:I-bound}.

(3) The probability that no local maxima of $y_\gamma(t)$ in $I_j^{\rm mode}$ exceed the threshold $u$ is less than the probability that $y_\gamma(t)$ is below $u$ anywhere in $I_j^{\rm mode}$, so it is bounded above by $\Phi[(u-a_j M_j)/\sigma_\gamma]$. On the other hand, the probability that more than one local maxima of $y_\gamma(t)$ in $I_j^{\rm mode}$ exceed $u$ is less than the probability that there exist more than one local maximum, which is bounded above by \eqref{eq:I-mode-bound-1}. Putting the two together gives the bound in the last line of \eqref{eq:I-bound}.
\end{proof}

\begin{remark}
Recall the uniformity assumptions in our model in Section \ref{sec:model}, especially $\sup_j |S_{j, \gamma}| < \infty$ and $\inf_j M_j >0$. It can be seen from the proof of Lemma \ref{lemma:bounds} that there exist a universal $K>0$ such that \eqref{eq:I-bound} holds for all $j$ with $a_j>K$.
\end{remark}

\begin{lemma}
\label{lemma:unique-max}
Assume the model of Section \ref{sec:model} and let $\tilde{m}_{1, \gamma} = \#\{ \tilde{T}\cap \mathbb{S}_{1, \gamma} \}$ and $\tilde{m}_{1, \gamma}(u) = \#\{t\in \tilde{T}(u)\cap \mathbb{S}_{1, \gamma} \}$. Under conditions (C1) and (C2), there exists some $\ep_1>0$ such that as $L\to \infty$,
\begin{equation}\label{Eq:rate-localmax}
\begin{split}
\P\left(\# \{t\in \tilde{T}\cap \mathbb{T}_\gamma \}\ge 1\right)  & =o(e^{-\ep_1 a^2}),\\
\P\left( \tilde{m}_{1, \gamma} = J\right)  &=1-o(e^{-\ep_1 a^2}),\\
\P\left( \tilde{m}_{1, \gamma}(u) = J\right) &=1-o(e^{-\ep_1 a^2}),
\end{split}
\end{equation}
where $o(e^{-\ep_1 a^2})=o(L^{-N})$ by condition (C2).
\end{lemma}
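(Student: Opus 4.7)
All three statements in \eqref{Eq:rate-localmax} are per-peak in nature, and Lemma \ref{lemma:bounds} already supplies single-peak bounds whose decay is exponential in $a_j^2$. The strategy is to perform a union bound over the $J$ peaks, using the uniformity constants $C=\inf_j C_j$, $D=\inf_j D_j$, and $M=\inf_j M_j$ (all strictly positive by the assumptions of Section \ref{sec:model}), together with the hypothesis $(\log L)/a^2\to 0$ in (C2). This latter condition is what will guarantee that $J=O(L^N)$ is dominated by $e^{\ep_1' a^2}$ for any $\ep_1'>0$, so that a $J$-fold union bound of exponential-in-$a^2$ terms still decays exponentially in $a^2$.

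For the first claim, I will first observe that since $I_j^{\rm mode}\subset S_j$ by Assumption (2), the transition region satisfies $T_{j,\gamma}=S_{j,\gamma}\setminus S_j\subset S_{j,\gamma}\setminus I_j^{\rm mode}=I_j^{\rm side}$, so $\mathbb{T}_\gamma\subset\bigcup_{j=1}^J I_j^{\rm side}$. A union bound and the first line of \eqref{eq:I-bound} then give
$$\P\bigl(\#\{t\in\tilde{T}\cap\mathbb{T}_\gamma\}\ge 1\bigr)\le J\exp\!\bigl(-a^2C^2/(2E+\ep_0)\bigr),$$
which, since $J=O(L^N)$ and $(\log L)/a^2\to 0$, is $o(e^{-\ep_1 a^2})$ for any $\ep_1<C^2/(2E+\ep_0)$. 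For the second claim I would bound $\{\tilde m_{1,\gamma}\ne J\}$ by the union over $j$ of the events $\{\#(\tilde{T}\cap I_j^{\rm mode})\ne 1\}\cup\{\#(\tilde{T}\cap I_j^{\rm side})\ge 1\}$ and apply the first two lines of \eqref{eq:I-bound} in the same manner, obtaining a $J$-times two-exponential bound with rate $\min(C^2/(2E+\ep_0),D^2/(2F+\ep_0))$. The third claim is analogous but uses line 3 of \eqref{eq:I-bound} together with line 1: the extra Gaussian-tail term $\Phi((u-a_jM_j)/\sigma_\gamma)$ decays, by Mills' ratio, at rate at least $\exp(-a^2M^2/(3\sigma_\gamma^2))$ for large $a$ since $u$ is fixed and $M_j\ge M>0$, and therefore also gets absorbed into the exponential rate.

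The only real bookkeeping is the calibration of $\ep_1$, which must be chosen strictly smaller than $\min(C^2/(2E+\ep_0),D^2/(2F+\ep_0),M^2/(3\sigma_\gamma^2))$ so that a surplus $(\ep_1'-\ep_1)a^2$ is available to absorb the polynomial factor $J=O(L^N)$ via $(\log L)/a^2\to 0$. I do not anticipate a genuinely hard step; the main technical subtlety, if any, will be ensuring the per-peak constants in Lemma \ref{lemma:bounds} really are uniform in $j$, which is the content of Assumptions (1)--(2) of Section \ref{sec:model} and was already noted in the remark immediately following that lemma.
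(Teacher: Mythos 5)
Your proposal is correct and follows essentially the same route as the paper's proof: the containment $T_{j,\gamma}\subset I_j^{\rm side}$, a union bound over the $J=O(L^N)$ peaks using the uniform constants $C,D,M$ from the per-peak bounds of Lemma \ref{lemma:bounds}, and absorption of the $L^N$ factor via $(\log L)/a^2\to 0$. Your explicit Mills-ratio treatment of the $\Phi((u-a_jM_j)/\sigma_\gamma)$ term in the third claim is a detail the paper leaves implicit, but it is exactly what is needed.
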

\begin{proof} (1) Write $\mathbb{T}_\gamma = \cup_{j=1}^{J} T_{j,\gamma}$, where $T_{j,\gamma} = S_{j,\gamma} \setminus S_j$ is the transition region for peak $j$. Note that $T_{j,\gamma}$ is a subset of $I_j^{\rm side}$ since $I_j^{\rm mode} \subset S_j$. By \eqref{eq:I-bound} and condition (C2), for sufficiently large $L$, the required probability $\P(\# \{t\in \tilde{T}\cap \mathbb{T}_\gamma \} \ge 1 )$ is bounded above by
\[
\sum_{j=1}^{J} \exp\left(-\frac{a_j^2C_j^2}{2E+\ep_0} \right)
\le
L^N \exp\left(-\frac{a^2C^2}{2E+\ep_0} \right).
\]
Now the first line of \eqref{Eq:rate-localmax} follows from the fact that for any $\delta>0$
\[
L^N \exp(-\delta a^2) = \exp\left\{ a^2\left( \frac{N \log L}{a^2} - \delta\right)\right\} \to 0.
\]

(2) By \eqref{eq:I-bound} and condition (C2), for sufficiently large $J$, the required probability is bounded below by
\begin{equation*}
\begin{aligned}
& \P \left[\cap_{j=1}^{J} \left(\#\{t \in \tilde{T}\cap I_j^{\rm mode} \} = 1
~\cap~ \#\{t \in \tilde{T}\cap I_j^{\rm side} \} = 0 \right) \right] \\
&\ge 1- \sum_{j=1}^{J} \left[1 - \P\left(\#\{t \in \tilde{T}\cap I_j^{\rm mode} \} = 1 ~\cap~ \#\{t \in \tilde{T}\cap I_j^{\rm side} \} = 0 \right) \right] \\
&\ge 1- \sum_{j=1}^{J} \left[
2\exp\left(-\frac{a_j^2C_j^2}{2E+\ep_0} \right) + \exp\left(-\frac{a_j^2D_j^2}{2F+\ep_0} \right) \right] \\
&\ge
1 - 2L^N \exp\left(-\frac{a^2C^2}{2E+\ep_0} \right)- L^N \exp\left(-\frac{a^2D^2}{2F+\ep_0} \right),
\end{aligned}
\end{equation*}
where $D > 0$ is the infimum of the $D_i$'s. The second line of \eqref{Eq:rate-localmax} follows.

(3) Applying the last line of \eqref{eq:I-bound}, together with similar argument in step (2), yields the last line of \eqref{Eq:rate-localmax}.
\end{proof}

\begin{lemma}
\label{lemma:variance-localmax}
Assume the model of Section \ref{sec:model} and let $\tilde{m}(v)$ be as defined in Section \ref{sec:alg}. Then ${\rm Var}(\tilde{m}(v)) = O(L^N)$ as $L \to \infty$.
\end{lemma}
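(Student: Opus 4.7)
The plan is to express $\text{Var}(\tilde m(v))$ via the Kac-Rice formula for the first and second factorial moments of the count of local maxima of $y_\gamma$ above level $v$, and then bound both resulting pieces by $O(L^N)$. Let $K_1(s;v)$ and $K_2(s,t;v)$ be the one- and two-point Palm intensities, i.e.\ conditional Gaussian expectations of $|{\rm det}\,\nabla^2 y_\gamma|\,\mathbbm{1}_{\{\nabla^2 y_\gamma \prec 0,\,y_\gamma > v\}}$ given $\nabla y_\gamma = 0$ at the relevant point(s), multiplied by the appropriate Gaussian density of $\nabla y_\gamma$ evaluated at $0$. The Kac-Rice meta-theorem then gives
\[
{\rm Var}(\tilde m(v)) = \int\int_{U(L)^2}\!\big[K_2(s,t;v) - K_1(s;v)\,K_1(t;v)\big]\,ds\,dt \;+\; \E[\tilde m(v)],
\]
so it suffices to show each term on the right is $O(L^N)$.

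For the mean, $\E[\tilde m(v)] = \int_{U(L)} K_1(s;v)\,ds$. On the smoothed null region $\mathbb{S}_{0,\gamma}$, where $y_\gamma = z_\gamma$ is stationary, $K_1$ is the constant $K_1^{\rm st}(v)$, contributing $O(|\mathbb{S}_{0,\gamma}|) = O(L^N)$. On each signal support $S_{j,\gamma}$, the local integral equals the expected number of local maxima of $y_\gamma$ above $v$ within $S_{j,\gamma}$, which is uniformly $O(1)$ in $j$ under the model assumptions: the stationary part contributes $K_1^{\rm st}(v)\,|S_{j,\gamma}|$ with $\sup_j|S_{j,\gamma}|<\infty$, and the signal-driven spike near $\tau_j$ integrates to $O(1)$ by a Laplace-type estimate based on the assumption $D = \inf_j D_j > 0$. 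Since the $S_{j,\gamma}$ are disjoint and each contains a translate of the support of $w_\gamma$, we have $J = O(L^N)$, giving $O(L^N)$ total.

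The variance integral is the main obstacle. The key structural observation is that the joint Gaussian vector formed by $(y_\gamma,\nabla y_\gamma,\nabla^2 y_\gamma)$ at $s$ and at $s+h$ has a cross-covariance matrix whose entries are precisely the partial derivatives $\partial^\alpha r_\gamma(h)$ for $0\le |\alpha|\le 4$. As $|h|\to\infty$, these cross-covariances vanish, causing both the joint density $p_{\nabla y_\gamma(s),\nabla y_\gamma(s+h)}(0,0)$ and the conditional expectation in $K_2(s,s+h;v)$ to factorize into the product $K_1(s;v)\,K_1(s+h;v)$. Expanding the joint Gaussian density and the conditional Hessian expectations as a perturbation around this product form in the small parameter $\psi(h) := \sum_{0\le|\alpha|\le 4}|\partial^\alpha r_\gamma(h)|$ yields a pointwise bound of the form
\[
|K_2(s,s+h;v) - K_1(s;v)\,K_1(s+h;v)| \;\le\; C\,\psi(h)\,\big[1 + K_1(s;v) + K_1(s+h;v)\big].
\]
By \eqref{Eq:variance-asymptotics}, $\psi \in L^1(\R^N)$, so integrating first over $h \in \R^N$ and then over $s \in U(L)$, and using $\int_{U(L)} K_1(s;v)\,ds = O(L^N)$ from the mean bound, gives
\[
\int\int_{U(L)^2} |K_2 - K_1\,K_1|\,ds\,dt \;\le\; \|\psi\|_{L^1}\big(|U(L)| + 2\,\E[\tilde m(v)]\big) \;=\; O(L^N),
\]
which combined with $\E[\tilde m(v)] = O(L^N)$ completes the proof that ${\rm Var}(\tilde m(v)) = O(L^N)$.
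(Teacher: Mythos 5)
Your proposal follows essentially the same route as the paper: Kac--Rice for the first and second factorial moments, the decomposition $\mathrm{Var}(\tilde m(v))=\iint (K_2-K_1K_1)+\E[\tilde m(v)]$, an $O(L^N)$ bound on the mean term, and control of the off-diagonal integral by the integrability of $r_\gamma$ and its derivatives up to order four, i.e.\ condition \eqref{Eq:variance-asymptotics}. The one place where you assert rather than prove is the pointwise bound $|K_2(s,s+h;v)-K_1(s;v)K_1(s+h;v)|\le C\,\psi(h)\,[1+K_1(s;v)+K_1(s+h;v)]$: this is the entire technical content of the lemma, and "expanding as a perturbation in $\psi(h)$" is not yet an argument. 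The paper makes it rigorous by a Gaussian interpolation (smart path): it introduces an independent copy of the noise pair, sets $Y_\eta=\sqrt{\eta}\,Z_1+\sqrt{1-\eta}\,Z_0+\mu$, writes $K_2-K_1K_1$ as $\int_0^1 \frac{d}{d\eta}(\cdots)\,d\eta$, and converts the derivatives in the covariance entries $R_\eta(i,j)$ into second derivatives of the Gaussian density via the heat-equation identity \citep[Eq.\ (2.2.6)]{Adler:2007}; since $R_\eta$ is linear in $\eta$, this produces exactly the factor $\sum_{i,j}|R_1(i,j)-R_0(i,j)|\asymp\psi(h)$ multiplied by integrals that must be shown uniformly bounded in $s,t,\eta$. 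You would need to supply that interpolation step (or an equivalent uniform perturbation estimate, including uniformity near the diagonal $h\to0$ where the joint law of the gradients degenerates) for the proof to be complete; with it, your argument coincides with the paper's.
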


\begin{proof}
By the Kac-Rice formula \citep{Adler:2007}, $\E[\tilde{m}(v)]$ equals
\begin{equation*}
\int_{U(L)} \E\{|{\rm det}\nabla^2 y_\gamma(t)| \mathbbm{1}_{\{y_\gamma(t)>v, \nabla^2 y_\gamma(t)\prec 0\}}|\nabla y_\gamma(t)=0\}p_{\nabla y_\gamma(t)}(0)dt
\end{equation*}
and $\E[\tilde{m}(v)(\tilde{m}(v)-1)]$ equals
\begin{equation*}
\begin{split}
&\int_{U(L)} \int_{U(L)} \E\{|{\rm det}\nabla^2 y_\gamma(t)| |{\rm det}\nabla^2 y_\gamma(s)|\mathbbm{1}_{\{y_\gamma(t)>v, \nabla^2 y_\gamma(t)\prec 0\}}\\
& \quad \times \mathbbm{1}_{\{y_\gamma(s)>v, \nabla^2 y_\gamma(s)\prec 0\}}|\nabla y_\gamma(t)=0, \nabla y_\gamma(s)=0\} p_{\nabla y_\gamma(t),\nabla y_\gamma(s)}(0,0)dtds.
\end{split}
\end{equation*}
It then follows from the observation ${\rm Var}(\tilde{m})=\E[\tilde{m}(\tilde{m}-1)] + \E[\tilde{m}] - (\E[\tilde{m}])^2$
that ${\rm Var}(\tilde{m})$ can be written as
\begin{equation*}
\begin{split}
&\int_{U(L)} \int_{U(L)} \Big( \E\{|{\rm det}\nabla^2 y_\gamma(t)| |{\rm det}\nabla^2 y_\gamma(s)|\mathbbm{1}_{\{y_\gamma(t)>v,\nabla^2 y_\gamma(t)\prec 0\}}\\
&\quad \times \mathbbm{1}_{\{y_\gamma(s)>v, \nabla^2 y_\gamma(s)\prec 0\}} |\nabla y_\gamma(t)=0, \nabla y_\gamma(s)=0\} p_{\nabla y_\gamma(t),\nabla y_\gamma(s)}(0,0)\\
&\quad -\E\{|{\rm det}\nabla^2 y_\gamma(t)| \mathbbm{1}_{\{y_\gamma(t)>v, \nabla^2 y_\gamma(t)\prec 0\}}|\nabla y_\gamma(t)=0\} p_{\nabla y_\gamma(t)}(0) \\
&\quad \times \E\{|{\rm det}\nabla^2 y_\gamma(s)| \mathbbm{1}_{\{y_\gamma(s)>v, \nabla^2 y_\gamma(s)\prec 0\}}|\nabla y_\gamma(s)=0\}p_{\nabla y_\gamma(s)}(0)\Big)dtds\\
&\quad + \int_{U(L)} \E\{|{\rm det}\nabla^2 y_\gamma(t)| \mathbbm{1}_{\{y_\gamma(t)>v, \nabla^2 y_\gamma(t)\prec 0\}}|\nabla y_\gamma(t)=0\}p_{\nabla y_\gamma(t)}(0)dt\\
&:= \int_{U(L)} \int_{U(L)} f_1(t,s)dtds + \int_{U(L)} f_2(t) dt :=  I_1+I_2.
\end{split}
\end{equation*}
Recall $y_\gamma(t)=z_\gamma(t)+\mu_\gamma(t)$, where $\mu_\gamma(t)=\sum_{j=-\infty}^\infty  a_jh_{j,\gamma}(t)$ and $z_\gamma(t)$ is stationary. Assume $t\in S_{j, \gamma}$ for some $j$, then
\begin{equation}\label{eq:f2(t)}
\begin{split}
f_2(t) &\le \E\{|{\rm det}(\nabla^2 z_\gamma(0)+ a_j \nabla^2 h_{j,\gamma}(t)) | |\nabla z_\gamma(0) + a_j \nabla h_{j,\gamma}(t)=0\}\\
&\quad \times \frac{\exp\left[-a_j^2[ \nabla h_{j,\gamma}(t)]^T[{\rm Cov}(\nabla z_\gamma(0))]^{-1} \nabla h_{j,\gamma}(t)/2\right]}{(2\pi)^{N/2}[{\rm det Cov}(\nabla z_\gamma(0))]^{1/2}}\\
&\le c_1(a_j^{n_1}+1)\exp[-c_2a_j^2]\le c_3
\end{split}
\end{equation}
for some positive constants $n_1$, $c_1$, $c_2$ and $c_3$. Similar result holds for $t$ such that $\mu_\gamma(t)=0$. Therefore $I_2=O(L^N)$.

We consider a pair of independent Gaussian fields $\tilde{z}_\gamma(t)$ and $\tilde{z}_\gamma(s)$ with the distribution of $z_\gamma(t)$ and $z_\gamma(s)$ respectively, assume also that $\tilde{z}_\gamma(t)$ and $\tilde{z}_\gamma(s)$ are both independent of $z_\gamma(t)$ and $z_\gamma(s)$. Let
\begin{equation*}
\begin{split}
Z_1(t,s) &=(z_\gamma(t), z_\gamma(s)),  \quad Y_1(t,s) = Z_1(t,s) + (\mu_\gamma(t), \mu_\gamma(s)),\\
Z_0(t,s) &=(\tilde{z}_\gamma(t), \tilde{z}_\gamma(s)),  \quad Y_0(t,s) = Z_0(t,s) + (\mu_\gamma(t), \mu_\gamma(s)),\\
\end{split}
\end{equation*}
and let $F(x,y,A, B)=|{\rm det}(A)| |{\rm det}(B)|\mathbbm{1}_{\{x>v, A\prec 0\}}\mathbbm{1}_{\{y>v, B\prec 0\}}$, where $x,y\in \R$ and $A$ and $B$ are $N\times N$ symmetric matrices. Then
\begin{equation*}
\begin{split}
f_1(t,s) &= \E\{F(Y_1(t,s), \nabla^2 Y_1(t,s))|\nabla Y_1(t,s)=0\}p_{\nabla Y_1(t,s)}(0)\\
&\quad - \E\{F(Y_0(t,s), \nabla^2 Y_0(t,s))|\nabla Y_0(t,s)=0\}p_{\nabla Y_0(t,s)}(0).
\end{split}
\end{equation*}
Let
\begin{equation*}
\begin{split}
Y_\eta(t,s)&=\sqrt{\eta}Z_1(t,s) + \sqrt{1-\eta}Z_0(t,s) + (\mu_\gamma(t), \mu_\gamma(s)),\\
Y_\eta'(t,s)&=\sqrt{\eta}\nabla Z_1(t,s) + \sqrt{1-\eta}\nabla Z_0(t,s) + (\nabla \mu_\gamma(t), \nabla \mu_\gamma(s)),\\
Y_\eta^{''}(t,s)&=\sqrt{\eta}\nabla^2 Z_1(t,s) + \sqrt{1-\eta}\nabla^2 Z_0(t,s) + (\nabla^2 \mu_\gamma(t), \nabla^2 \mu_\gamma(s)),
\end{split}
\end{equation*}
and denote by $R_\eta(i,j)$ the $(i,j)$ entry of ${\rm Cov}(Y_\eta(t,s), Y_\eta'(t,s),Y_\eta^{''}(t,s))$. It follows that $f_1(t,s)$ can be written as
\begin{equation*}
\begin{split}
 &\quad \int_0^1 \frac{d}{d\eta}\left(\E\{F(Y_\eta(t,s),Y_\eta^{''}(t,s))|Y_\eta'(t,s)=0\}p_{Y_\eta'(t,s)}(0)\right)d\eta \\
&=\int_0^1 \frac{d}{d\eta}\int_{\R^{2+N(N+1)}} F(\xi)p_{(Y_\eta(t,s),Y_\eta^{''}(t,s)), Y_\eta'(t,s)}(\xi, 0)d\xi d\eta\\
&=\int_0^1 \int_{\R^{2+N(N+1)}} F(\xi)\frac{d}{d\eta} p_{(Y_\eta(t,s),Y_\eta^{''}(t,s)), Y_\eta'(t,s)}(\xi, 0)d\xi d\eta\\
&=\int_0^1 \int_{\R^{2+N(N+1)}} F(\xi)\left(\sum_{i,j}\frac{\partial p_{(Y_\eta(t,s),Y_\eta^{''}(t,s)), Y_\eta'(t,s)}(\xi, 0)}{\partial R_\eta(i,j)} \frac{\partial R_\eta(i,j)}{\partial \eta} \right)d\xi d\eta.
\end{split}
\end{equation*}
By a well-known formula for Gaussian density which is similar to the heat equation \citep[Eq. (2.2.6)]{Adler:2007} and the fact that $R_\eta(i,j)=\eta R_1(i,j)+(1-\eta)R_0(i,j)$, we see that
\begin{equation*}
\begin{split}
|f_1(t,s)| &\le \sum_{i,j} \left|\int_0^1 \int_{\R^{2+N(N+1)}} F(\xi) \frac{\partial^2 p_{(Y_\eta(t,s),Y_\eta^{''}(t,s)), Y_\eta'(t,s)}(y)}{\partial y_i \partial y_j}\Big|_{y=(\xi, 0)} d\xi d\eta\right|\\
&\qquad \quad \times |R_1(i,j)-R_0(i,j)|.
\end{split}
\end{equation*}
Similarly to \eqref{eq:f2(t)}, there exists a positive constant $c_4$ such that
\begin{equation*}
\begin{split}
\left|\int_0^1 \int_{\R^{2+N(N+1)}} F(\xi) \frac{\partial^2 p_{(Y_\eta(t,s),Y_\eta^{''}(t,s)), Y_\eta'(t,s)}(y)}{\partial y_i \partial y_j}\Big|_{y=(\xi, 0)} d\xi d\eta\right|\le c_4
\end{split}
\end{equation*}
for all $i$ and $j$.

Recall $r_\gamma(t)=\E[z_\gamma(t)z_\gamma(0)]$ in Section \ref{sec:model}. By stationarity and change of variables, we obtain
\begin{equation*}
\begin{split}
|I_1| & \le 8c_4 2^N L^N\sum_{i,j,k,l=1}^N\int_0^L \cdots\int_0^L \left(\prod_{i=1}^N\bigg(1-\frac{t_i}{L}\bigg)\right)\Bigg(|r_\gamma(t)| + \left|\frac{\partial r_\gamma(t)}{\partial t_i} \right|\\
&\quad +\left|\frac{\partial^2 r_\gamma(t)}{\partial t_i \partial t_j} \right| + \left|\frac{\partial^3 r_\gamma(t)}{\partial t_i \partial t_j \partial t_k} \right| + \left|\frac{\partial^4 r_\gamma(t)}{\partial t_i \partial t_j\partial t_k \partial t_l} \right| \Bigg) dt_1\cdots dt_N.
\end{split}
\end{equation*}
Condition \eqref{Eq:variance-asymptotics} implies that the last integral above is finite as $L \to \infty$, hence $I_1=O(L^N)$, completing the proof.
\end{proof}


\begin{lemma}
\label{lemma:mean-m-1}
Assume the model of Section \ref{sec:model} and let $\tilde{m}_{1, \gamma}$ and $\tilde{m}_{1, \gamma}(u)$ be as defined in Lemma \ref{lemma:unique-max}. Then
\begin{equation*}
\E[\tilde{m}_{1, \gamma}]/J = 1 + O(a^{-2}), \quad  \E[\tilde{m}_{1, \gamma}(u)]/J = 1 + O(a^{-2}).
\end{equation*}
\end{lemma}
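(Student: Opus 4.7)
The plan is to reduce both statements to the high-probability equality $\tilde m_{1,\gamma} = J$ (resp.\ $\tilde m_{1,\gamma}(u) = J$) supplied by Lemma~\ref{lemma:unique-max}, and to control the exceptional contribution by a Cauchy--Schwarz argument using the variance bound in Lemma~\ref{lemma:variance-localmax}. Set $A = \{\tilde m_{1,\gamma} = J\}$. Then
\begin{equation*}
\E[\tilde m_{1,\gamma}] = J\,\P(A) + \E[\tilde m_{1,\gamma}\mathbbm{1}_{A^c}],
\end{equation*}
so after dividing by $J$ it suffices to show that $\P(A^c)$ is negligible and that the second term is at most $J \cdot O(a^{-2})$.

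By Lemma~\ref{lemma:unique-max}, $\P(A^c) = o(e^{-\ep_1 a^2})$. By Cauchy--Schwarz,
\begin{equation*}
\E[\tilde m_{1,\gamma}\mathbbm{1}_{A^c}] \le \sqrt{\E[\tilde m_{1,\gamma}^2]\,\P(A^c)}.
\end{equation*}
Since $\tilde m_{1,\gamma} \le \tilde m := \tilde m(-\infty)$, I bound the second moment by $\E[\tilde m^2] = \var(\tilde m) + (\E[\tilde m])^2$. Lemma~\ref{lemma:variance-localmax} (applied with $v = -\infty$) gives $\var(\tilde m) = O(L^N)$, while the Kac--Rice formula applied to $y_\gamma$ yields $\E[\tilde m] = O(L^N)$ because the integrand is uniformly bounded in $t$ (the signal contribution in the bounded peak supports $S_{j,\gamma}$ adds only an $O(J) = O(L^N)$ term, as in the estimate \eqref{eq:f2(t)} inside the proof of Lemma~\ref{lemma:variance-localmax}). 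Hence $\E[\tilde m_{1,\gamma}^2] = O(L^{2N})$, and
\begin{equation*}
\E[\tilde m_{1,\gamma}\mathbbm{1}_{A^c}] = O(L^N)\cdot o(e^{-\ep_1 a^2/2}).
\end{equation*}
Dividing by $J \asymp A_1 L^N$ (condition (C2)) gives $\E[\tilde m_{1,\gamma}]/J - \P(A) = o(e^{-\ep_1 a^2/2})$. Combined with $\P(A) = 1 - o(e^{-\ep_1 a^2})$, this yields $\E[\tilde m_{1,\gamma}]/J = 1 + o(e^{-\ep_1 a^2/2})$, which is in particular $1 + O(a^{-2})$ since $(\log L)/a^2 \to 0$ makes $L^N e^{-\ep_1 a^2/2}$ superpolynomially small in $a$.

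The second identity is obtained by the same argument, now taking $A = \{\tilde m_{1,\gamma}(u) = J\}$ (whose complement has probability $o(e^{-\ep_1 a^2})$ by the third line of Lemma~\ref{lemma:unique-max}) and using $\tilde m_{1,\gamma}(u) \le \tilde m(u)$ together with $\var(\tilde m(u)) = O(L^N)$ from Lemma~\ref{lemma:variance-localmax}. The only place that genuinely required work was the second-moment bound $\E[\tilde m_{1,\gamma}^2] = O(L^{2N})$; this is the main obstacle and is precisely what Lemma~\ref{lemma:variance-localmax} was designed to supply. Everything else is bookkeeping, and the exponential smallness of $\P(A^c)$ gives ample slack against the polynomial $L^{2N}$ blowup of the second moment.
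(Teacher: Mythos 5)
Your argument is correct, but it takes a genuinely different route from the paper. The paper computes $\E[\tilde{m}_{1,\gamma}]$ directly: by the Kac--Rice formula it equals $\sum_{j=1}^J I(S_{j,\gamma})$, and each integral $I(S_{j,\gamma})$ is evaluated by localizing to $I_j^{\rm mode}$ and applying the Laplace method to the factor $\exp\{-a_j^2 [\nabla h_{j,\gamma}(t)]^T\La_\gamma^{-1}\nabla h_{j,\gamma}(t)/2\}$, which concentrates at $\tau_j$ and yields $I(S_{j,\gamma})=1+O(a_j^{-2})$; this works peak by peak as $a_j\to\infty$ under the model assumptions alone. You instead condition on the high-probability event $\{\tilde{m}_{1,\gamma}=J\}$ from Lemma \ref{lemma:unique-max} and control the exceptional contribution by Cauchy--Schwarz against the second moment $\E[\tilde{m}^2]=\var(\tilde{m})+(\E[\tilde{m}])^2=O(L^{2N})$, the variance coming from Lemma \ref{lemma:variance-localmax} and $\E[\tilde{m}]=O(L^N)$ from the same pointwise Kac--Rice bound used there for $I_2$. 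This is shorter, avoids the Laplace expansion entirely, and in fact delivers a stronger, exponentially small error term (there is no contradiction with the paper's $O(a^{-2})$, which is simply an upper bound on the Laplace remainder). The price is that your proof only establishes the lemma in the joint asymptotic regime of condition (C2) — you need $L\to\infty$, $J/L^N\to A_1>0$ and $(\log L)/a^2\to 0$ to invoke Lemma \ref{lemma:unique-max} and to make the union bound over $J\asymp L^N$ peaks beat the polynomial second-moment growth — whereas the lemma as stated (and as proved in the paper) assumes only the model of Section \ref{sec:model}. Since the lemma is only ever applied under (C1)--(C2), this costs nothing for the paper's purposes; alternatively, you could make your argument match the stated generality by running it peak by peak, replacing Lemma \ref{lemma:unique-max} with the per-peak bounds of Lemma \ref{lemma:bounds} and bounding the second moment of $\#\{t\in\tilde{T}\cap S_{j,\gamma}\}$ over the single bounded region $S_{j,\gamma}$.
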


\begin{proof}
We only prove the first part since the second part can be derived similarly. By the Kac-Rice formula \citep{Adler:2007}, $\E[\tilde{m}_{1, \gamma}]=\sum_{j=1}^J I(S_{j,\gamma})$, where
\begin{equation*}
\begin{split}
I(S_{j,\gamma}) &= \int_{S_{j,\gamma}} \E\left\{|{\rm det}\nabla^2 y_\gamma(t)| \mathbbm{1}_{\{\nabla^2 y_\gamma(t)\prec 0\}}\big|\nabla y_\gamma(t)=0\right\}p_{\nabla y_\gamma(t)}(0)dt\\
&= \frac{1}{(2\pi)^{N/2}\sqrt{{\rm det }(\La_\gamma)}}\int_{S_{j,\gamma}} \E\big\{|{\rm det}(\nabla^2 z_\gamma(0)+ a_j \nabla^2 h_{j,\gamma}(t))| \\
&\quad \times \mathbbm{1}_{\{\nabla^2 z_\gamma(0)+ a_j \nabla^2 h_{j,\gamma}(t) \prec 0\}}\big| \nabla z_\gamma(0) + a_j \nabla h_{j,\gamma}(t)=0\big\}\\
&\quad \times \exp\left\{-a_j^2[ \nabla h_{j,\gamma}(t)]^T\La_\gamma^{-1} \nabla h_{j,\gamma}(t)/2\right\}dt.
\end{split}
\end{equation*}
Let $f(t)= [ \nabla h_{j,\gamma}(t)]^T\La_\gamma^{-1} \nabla h_{j,\gamma}(t)/2$, which attains the minimum 0 only at $\tau_j$ over $S_{j,\gamma}$ such that
\begin{equation}\label{eq:hessian}
\nabla^2 f(\tau_j)= \nabla^2 h_{j,\gamma}(\tau_j)\La_\gamma^{-1} \nabla^2 h_{j,\gamma}(\tau_j).
\end{equation}
Let $I_j^{\rm mode}$, $C$ and $D$ be as defined in Section \ref{sec:model}. Since $C>0$, $f(t)$ is strictly greater than 0 over $S_{j,\gamma}\backslash I_j^{\rm mode}$ , and there exists some $\ep_0>0$ such that $I(S_{j,\gamma}) = I(I_j^{\rm mode}) + o(e^{-\ep_0 a^2})$ for all $j$. Moreover, since $D>0$, a similar argument as in the proof of \citep[Theorem 2.4]{CS:2014} yields that removing the indicator function $\mathbbm{1}_{\{\nabla^2 z_\gamma(0)+ a_j \nabla^2 h_{j,\gamma}(t) \prec 0\}}$ in the expression of $I(I_j^{\rm mode})$ only causes an error of rate $o(e^{-\ep_1 a^2})$, where $\ep_1>0$ is some constant. Now, applying the Laplace method to the expression
\begin{equation*}
\begin{split}
&\frac{1}{(2\pi)^{N/2}\sqrt{{\rm det }(\La_\gamma)}}\int_{I_j^{\rm mode}} \E\big\{|{\rm det}(\nabla^2 z_\gamma(0)+ a_j \nabla^2 h_{j,\gamma}(t))| \\
&\quad \times \mathbbm{1}_{\{\nabla^2 z_\gamma(0)+ a_j \nabla^2 h_{j,\gamma}(t) \prec 0\}}\big| \nabla z_\gamma(0) + a_j \nabla h_{j,\gamma}(t)=0\big\} \exp\left\{-a_j^2f(t)\right\}dt,
\end{split}
\end{equation*}
we obtain
\begin{equation*}
\begin{split}
I(S_{j,\gamma}) &= \frac{ {\rm det}(a_j \nabla^2 h_{j,\gamma}(\tau_j))}{(2\pi)^{N/2}\sqrt{{\rm det }(\La_\gamma)}} \left( \frac{(2\pi)^N}{a_j^{2N} {\rm det}(\nabla^2 f(\tau_j))}\right)^{1/2} + O(a_j^{-2}) \\
&= 1 + O(a_j^{-2}),
\end{split}
\end{equation*}
where the last line is due to \eqref{eq:hessian}. Therefore $\E[\tilde{m}_{1, \gamma}]/J = 1 + O(a^{-2})$.
\end{proof}

\subsection{Control of FDR}
\label{app:FDR}

\begin{proof}[Proof of Theorem \ref{thm:FDR}]
We prove part (ii) first, since part (i) is much easier and can be seen from the proof of part (ii). Let $\tilde{G}(u,v) = \#\{t\in\tilde{T}(u)\}/\#\{t\in\tilde{T}(v)\}$ be the empirical marginal right cdf of $y_\gamma(t)$ given $t \in \tilde{T}(v)$, where $u>v$. Then the random BH threshold $\tilde{u}_{\BH}(v)$ \eqref{eq:thresh-BH-random} satisfies
\[
\alpha \tilde{G}(\tilde{u}_{\BH}(v),v) = k\alpha/\tilde{m}(v) = F_\gamma(\tilde{u}_{\BH}(v),v),
\]
so $\tilde{u}_{\BH}(v)$ is the smallest $u$ that is greater than $v$ and satisfies
\begin{equation}
\label{eq:FDRthreshold}
\alpha \tilde{G}(u,v) \ge F_\gamma(u,v).
\end{equation}
The strategy is to solve equation \eqref{eq:FDRthreshold} in the limit when $L, a \to \infty$. We first find the limit of $\tilde{G}(u,v)$. Letting $\tilde{m}_{0, \gamma}(u) = \#\{t \in \tilde{T}(u) \cap\mathbb{S}_{0,\gamma}\}$ and $\tilde{m}_{1, \gamma}(u) = \#\{t \in \tilde{T}(u) \cap\mathbb{S}_{1,\gamma}\}$, so that $\tilde{m}(u) = \tilde{m}_{0, \gamma}(u)+ \tilde{m}_{1, \gamma}(u)$, write
\begin{equation}
\label{eq:ecdf}
\tilde{G}(u,v) = \frac{\tilde{m}(u)}{\tilde{m}(v)} =
\frac{\tilde{m}_{0, \gamma}(u)}{\tilde{m}_{0, \gamma}(v) +\tilde{m}_{1,\gamma}(v)}
+ \frac{\tilde{m}_{1, \gamma}(u)}{\tilde{m}_{0, \gamma}(v) +\tilde{m}_{1,\gamma}(v)}.
\end{equation}
By Chebyshev's inequality, Lemma \ref{lemma:variance-localmax} and condition (C2),
\begin{equation*}
\begin{split}
\tilde{m}_{0, \gamma}(v)/L^N &= \E[\tilde{m}_{0, \gamma}(v)]/L^N + O_p(L^{-N/2})\\
&=\E[\tilde{m}_{0, \gamma}(U(1),v)](1-A_{2,\gamma}) +O_p(a^{-2}+L^{-N/2}).
\end{split}
\end{equation*}
On the other hand, by Lemma \ref{lemma:mean-m-1} and condition (C2),
\begin{equation*}
\begin{split}
\tilde{m}_{1, \gamma}(v)/L^N = \E[\tilde{m}_{1, \gamma}(v)]/L^N +O_p(L^{-N/2})=A_1 +O_p(a^{-2}+L^{-N/2}).
\end{split}
\end{equation*}
Therefore, \eqref{eq:ecdf} can be written as
\begin{equation*}
\begin{split}
\tilde{G}(u,v) &= \frac{\E[\tilde{m}_{0, \gamma}(U(1),u)](1-A_{2,\gamma}) + A_1}{\E[\tilde{m}_{0, \gamma}(U(1),v)](1-A_{2,\gamma}) + A_1} + O_p(a^{-2}+L^{-N/2})\\
&=  \frac{F_\gamma(u,v)\E[\tilde{m}_{0, \gamma}(U(1),v)](1-A_{2,\gamma}) + A_1}{\E[\tilde{m}_{0, \gamma}(U(1),v)](1-A_{2,\gamma}) + A_1} + O_p(a^{-2}+L^{-N/2}).
\end{split}
\end{equation*}
Now replacing $\tilde{G}(u,v)$ by its limit in \eqref{eq:FDRthreshold}, and solving for $u$ gives the deterministic solution
\begin{equation}
\label{eq:u_BH^*}
u^*_{\BH}(v) =F_\gamma(\cdot, v)^{-1} \left(\frac {\alpha A_1}{A_1 + \E[\tilde{m}_{0, \gamma}(U(1),v)](1-A_{2,\gamma})(1-\alpha)}\right).
\end{equation}
The argument above implies $F(\tilde{u}_{\BH}(v),v) = F(u^*_{\BH}(v),v) + O_p(a^{-2}+L^{-N/2})$, but the first-derivative of $F_\gamma(\cdot, v)$ is uniformly bounded, thus $\tilde{u}_{\BH}(v) = u^*_{\BH}(v) + O_p(a^{-2}+L^{-N/2})$. Following similar arguments, together with Chebyshev's inequality, we also have
\begin{equation}
\label{eq:u_BH^*-and-utilde}
\begin{split}
\P\left(|\tilde{u}_{\BH}(v) - u^*_{\BH}(v)| > a^{-1} + L^{-N/4}\right) = O(a^{-1} + L^{-N/4}).
\end{split}
\end{equation}

Next, let us turn to estimating $\E[\tilde{m}_{0, \gamma}(\tilde{u}_{\BH}(v))]$. Notice that,
\begin{equation*}
\begin{split}
&\quad \left|\E\left[(\tilde{m}_{0, \gamma}(\tilde{u}_{\BH}(v))- \tilde{m}_{0, \gamma}(u^*_{\BH}(v)))/L^N\right]\right|\\
&\le \left|\E\left[(\tilde{m}_{0, \gamma}(\tilde{u}_{\BH}(v))- \tilde{m}_{0, \gamma}(u^*_{\BH}(v)))/L^N\mathbbm{1}_{\{|\tilde{u}_{\BH}(v) - u^*_{\BH}(v)| \le a^{-1} + L^{-N/4}\}}\right]\right|\\
&\quad + \left|\E\left[(\tilde{m}_{0, \gamma}(\tilde{u}_{\BH}(v))- \tilde{m}_{0, \gamma}(u^*_{\BH}(v)))/L^N\mathbbm{1}_{\{|\tilde{u}_{\BH}(v) - u^*_{\BH}(v)| > a^{-1} + L^{-N/4}\}}\right]\right|\\
&\le \left|\E\left[\left(\tilde{m}_{0, \gamma}\left(u^*_{\BH}(v)-a^{-1} - L^{-N/4}\right)-\tilde{m}_{0, \gamma}(u^*_{\BH}(v))\right)/L^N\right]\right| \\
&\quad + \left|\E\left[\left(\tilde{m}_{0, \gamma}(u^*_{\BH}(v)) - \tilde{m}_{0, \gamma}\left(u^*_{\BH}(v)+a^{-1} + L^{-N/4}\right)\right)/L^N\right]\right|\\
&\quad + 2\left(\E\left[\tilde{m}(v)^2\right]/L^{2N}\right)^{1/2}\P\left(|\tilde{u}_{\BH}(v) - u^*_{\BH}(v)| > a^{-1} + L^{-N/4}\right)\\
&= O(a^{-1} + L^{-N/4}),
\end{split}
\end{equation*}
where the second inequality is due to H\"older's inequality and the monotonicity of $\tilde{m}_{0, \gamma}(\cdot)$, and the last line is derived by applying Taylor's expansion to $F_\gamma(\cdot)$, together with Lemma \ref{lemma:variance-localmax} and \eqref{eq:u_BH^*-and-utilde}. Hence
\begin{equation}\label{eq:Vgamma-tu}
\begin{split}
\E[\tilde{m}_{0, \gamma}(\tilde{u}_{\BH}(v))/L^N] &= \E[\tilde{m}_{0, \gamma}(u^*_{\BH}(v))/L^N] + O(a^{-1} + L^{-N/4})\\
&= \E[\tilde{m}_{0, \gamma}(U(1),u^*_{\BH}(v))](1-A_{2,\gamma}) + O(a^{-1}+ L^{-N/4}).
\end{split}
\end{equation}
A similar argument, together with Lemma \ref{lemma:unique-max} and \eqref{eq:u_BH^*-and-utilde}, yields
\begin{equation}\label{eq:rand-neq-J}
\P(\tilde{m}_{1, \gamma}(\tilde{u}_{\BH}(v))\ne J) = O(a^{-1} + L^{-N/4}).
\end{equation}

Let $W(u)=R(u)-V(u)$. By \citep[Lemma 12]{Schwartzman:2011}, Lemma \ref{lemma:unique-max}, condition (C2), \eqref{eq:Vgamma-tu} and \eqref{eq:rand-neq-J}, we obtain that $\FDR_{\BH}(v)$ is bounded above by
\begin{equation}
\label{eq:FDR-u*}
\begin{aligned}
\P&(W(\tilde{u}_{\BH}(v))\le J-1) + \frac{ \E\left[ V(\tilde{u}_{\BH}(v))\right]}{ \E\left[V(\tilde{u}_{\BH}(v))\right] + J } \\
&= \P(\tilde{m}_{1, \gamma}(\tilde{u}_{\BH}(v))\le J-1) + \frac{\E\left[ \tilde{m}_{0, \gamma}(\tilde{u}_{\BH}(v))/L^N\right] }{ \E\left[\tilde{m}_{0, \gamma}(\tilde{u}_{\BH}(v))/L^N\right] + J/L^N } + o(L^{-N})\\
&= \frac{\E[\tilde{m}_{0, \gamma}(U(1),u^*_{\BH}(v))](1-A_{2,\gamma}) }{ \E[\tilde{m}_{0, \gamma}(U(1),u^*_{\BH}(v))](1-A_{2,\gamma}) + A_1} + O(a^{-1} + L^{-N/4})\\
&= \frac{F_\gamma(u^*_{\BH}(v),v)\E[\tilde{m}_{0, \gamma}(U(1),v)](1-A_{2,\gamma}) }{ F_\gamma(u^*_{\BH}(v),v)\E[\tilde{m}_{0, \gamma}(U(1),v)](1-A_{2,\gamma}) + A_1}  + O(a^{-1} + L^{-N/4}),
\end{aligned}
\end{equation}
where we have split $\tilde{m}_{1, \gamma}(\tilde{u}_{\BH}(v))$ into the true signal region $\mathbb{S}_{1}$ and the transition region $\mathbb{T}_\gamma = \mathbb{S}_{1,\gamma} \setminus \mathbb{S}_{1}$ and split $V(\tilde{u}_{\BH}(v))$ into the reduced null region $\mathbb{S}_{0, \gamma}$ and the transition region $\mathbb{T}_\gamma = \mathbb{S}_0 \setminus \mathbb{S}_{0, \gamma}$ as well. Plugging \eqref{eq:u_BH^*} into the last line of \eqref{eq:FDR-u*} yields part (ii).

For a deterministic threshold $u$, a similar argument for showing \eqref{eq:FDR-u*} yields part (i).
\end{proof}

\subsection{Power}
\label{app:power}


\begin{proof}[Proof of Lemma \ref{lemma:power-approx}] Recall $M_j = h_{j,\gamma}(\tau_j)$. By Lemma \ref{lemma:bounds},
\begin{equation*}
\begin{split}
\Power_j(u) &= \P\left(\#\{t \in \tilde{T}(u)\cap S_j\} \ge 1\right)\ge \P\left(\#\{t \in \tilde{T}(u)\cap I_j^{\rm mode}\} = 1 \right)\\
&\ge \Phi\left(\frac{a_jM_j-u}{\sigma_\gamma}\right) - \exp\left(-\frac{a_j^2D_j^2}{2F+\ep_0} \right).
\end{split}
\end{equation*}
On the other hand, by Lemma \ref{lemma:bounds} again,
\begin{equation*}
\begin{split}
\Power_j(u) &\le \P\left(\#\{t \in \tilde{T}(u)\cap I_j^{\rm mode}\} \ge 1 \right) + \P\left(\#\{t \in \tilde{T}(u)\cap I_j^{\rm side}\} \ge 1 \right)\\
&\le \E\left(\#\{t \in \tilde{T}(u)\cap I_j^{\rm mode}\} \right) + \exp\left(-\frac{a_j^2C_j^2}{2E+\ep_0} \right).
\end{split}
\end{equation*}
Similarly to the proof of Lemma \ref{lemma:mean-m-1}, applying the Laplace method, we obtain
\begin{equation*}
\begin{split}
\E\left(\#\{t \in \tilde{T}(u)\cap I_j^{\rm mode}\} \right) = \Phi\left(\frac{a_jM_j-u}{\sigma_\gamma}\right)(1 + O(a_j^{-2})),
\end{split}
\end{equation*}
completing the proof.
\end{proof}

\begin{proof}[Proof of Theorem \ref{thm:power}]
Part (i) follows immediately from Lemma \ref{lemma:power-approx}. For part (ii), notice that for each $j$ and any fixed $\delta>0$,
\begin{equation*}
\begin{split}
& \quad \P \left(\#\{t \in \tilde{T}(\tilde{u}_{\BH}(v))\cap S_j\} \ge 1\right)\\
& = \P \left(\#\{t \in \tilde{T}(\tilde{u}_{\BH}(v))\cap S_j\} \ge 1, \ |\tilde{u}_{\BH}(v) - u^*_{\BH}(v)| \le \delta \right)\\
& \quad +  \P \left(\#\{t \in \tilde{T}(\tilde{u}_{\BH}(v))\cap S_j\} \ge 1, \ |\tilde{u}_{\BH}(v) - u^*_{\BH}(v)| > \delta \right)\\
& \le \P \left(\#\{t \in \tilde{T}(u^*_{\BH}(v) - \delta)\cap S_j\} \ge 1 \right) + \P \left(|\tilde{u}_{\BH}(v) - u^*_{\BH}(v)| > \delta \right),
\end{split}
\end{equation*}
then the result follows from Lemma \ref{lemma:power-approx} and the fact similar to \eqref{eq:u_BH^*-and-utilde} that $\P \left(|\tilde{u}_{\BH}(v) - u^*_{\BH}(v)| > \delta \right) = O(a^{-2} + L^{-N/2})$.
\end{proof}


\subsection{Approximating the overshoot distribution}
\label{app:approx-os}

\begin{proof}[Proof of Theorem \ref{thm:FDR-os-approx}]
If we use $K_\gamma(\cdot,v)$ instead of $F_\gamma(\cdot,v)$ to compute the p-values, then the random BH threshold $\tilde{u}_{\BH}(v)$ \eqref{eq:thresh-BH-random} satisfies $\alpha \tilde{G}(\tilde{u}_{\BH}(v),v) = K_\gamma(\tilde{u}_{\BH},v)$, so $\tilde{u}_{\BH}(v)$ is the smallest $u$ that is greater than $v$ and satisfies
\begin{equation}
\label{eq:FDRthreshold-os-approx}
\alpha \tilde{G}(u,v) \ge K_\gamma(u,v).
\end{equation}
A similar argument in the proof of Theorem \ref{thm:FDR} gives
\begin{equation*}
\tilde{G}(u,v) =  \frac{F_\gamma(u,v)\E[\tilde{m}_{0, \gamma}(U(1),v)](1-A_{2,\gamma}) + A_1}{\E[\tilde{m}_{0, \gamma}(U(1),v)](1-A_{2,\gamma}) + A_1} + O_p(a^{-2}+L^{-N/2}).
\end{equation*}
By \eqref{Eq:stationary-os-u}, $K_\gamma(u,v)=F_\gamma(u,v)/\beta_\gamma(v) (1+ o(e^{-\ep_0 v^2}))$, replacing $\tilde{G}(u,v)$ by its limit in \eqref{eq:FDRthreshold-os-approx} and then solving for $u$ gives the deterministic solution $u^{**}_{\BH}(v)$ such that
\begin{equation}
\label{eq:u_BH^*-os-approx}
F_\gamma(u^{**}_{\BH}(v),v) = \frac {\alpha \beta_\gamma(v) A_1(1+ o(e^{-\ep_0 v^2}))}{A_1 + \E[\tilde{m}_{0, \gamma}(U(1),v)](1-A_{2,\gamma})(1-\alpha\beta_\gamma(v))}.
\end{equation}
By conditions $v^2/\log(L) \to 0$ and  $v^2/\log(a) \to 0$, $a^{-1}+L^{-N/4}=o(e^{-c v^2})$ for any fixed $c>0$ and moreover, Lemma \ref{lemma:variance-localmax} and Lemma \ref{lemma:mean-m-1} still hold as $v\to \infty$. Therefore, similarly to \eqref{eq:FDR-u*}, we obtain
\begin{equation*}
\begin{aligned}
\FDR_{\BH}(v) &\le \frac{F_\gamma(u^{**}_{\BH}(v),v)\E[\tilde{m}_{0, \gamma}(U(1),v)](1-A_{2,\gamma}) }{ F_\gamma(u^{**}_{\BH}(v),v)\E[\tilde{m}_{0, \gamma}(U(1),v)](1-A_{2,\gamma}) + A_1}\\
&\quad + O(a^{-1} + L^{-N/4})\\
& = \alpha\frac{\E[\tilde{m}_{0, \gamma}(U(1),v)](1-A_{2,\gamma})\beta_\gamma(v)}{\E[\tilde{m}_{0, \gamma}(U(1),v)](1-A_{2,\gamma}) + A_1} (1+ o(e^{-\ep_0 v^2})).
\end{aligned}
\end{equation*}
This proves \eqref{eq:bound-FDR-os-approx}.

On the other hand, we can slightly modify the proofs of Lemma \ref{lemma:power-approx} and Theorem \ref{thm:power} to obtain \eqref{eq:power-os-approx}.
\end{proof}


\bibliographystyle{plainnat}

\begin{thebibliography}{36}
\providecommand{\natexlab}[1]{#1}
\providecommand{\url}[1]{\texttt{#1}}
\expandafter\ifx\csname urlstyle\endcsname\relax
  \providecommand{\doi}[1]{doi: #1}\else
  \providecommand{\doi}{doi: \begingroup \urlstyle{rm}\Url}\fi

\bibitem[Adler and Taylor(2007)]{Adler:2007}
Robert~J. Adler and Jonathan~E. Taylor.
\newblock \emph{Random fields and geometry}.
\newblock Springer, New York, 2007.

\bibitem[Adler et~al.(2010)Adler, Taylor, and Worsley]{Adler:2010}
Robert~J. Adler, Jonathan~E. Taylor, and Keith~J. Worsley.
\newblock Applications of random fields and geometry: Foundations and case
  studies.
\newblock URL \url{http://webee.technion.ac.il/people/adler/publications.html}.
\newblock 2010.

\bibitem[Benjamini and Heller(2007)]{Heller:2007}
Yoav Benjamini and Ruth Heller.
\newblock False discovery rates for spatial signals.
\newblock \emph{J Amer Statist Assoc}, 102\penalty0 (480):\penalty0 1272--1281,
  2007.

\bibitem[Benjamini and Hochberg(1995)]{Benjamini:1995}
Yoav Benjamini and Yosef Hochberg.
\newblock Controlling the false discovery rate: a practical and powerful
  approach to multiple testing.
\newblock \emph{J R Statist Soc B}, 57\penalty0 (1):\penalty0 289--300, 1995.

\bibitem[Brutti et~al.(2005)Brutti, Genovese, Miller, Nichol, and
  Wasserman]{Brutti:2005}
Pierpaolo Brutti, Christopher~R. Genovese, Christopher~J. Miller, Robert~C.
  Nichol, and Larry Wasserman.
\newblock Spike hunting in galaxy spectra.
\newblock Technical report, Libera Universit\`{a} Internazionale degli Studi
  Sociali Guido Carli di Roma, 2005.
\newblock URL \url{http://www.stat.cmu.edu/tr/tr828/tr828.html}.

\bibitem[Cheng and Schwartzman(2014)]{CS:2014}
Dan Cheng and Armin Schwartzman.
\newblock Distribution of the Height of Local Maxima of Gaussian Random Fields.
\newblock \emph{arXiv:1307.5863}, 2014.

\bibitem[Chumbley and Friston(2009)]{Chumbley:2009}
Justin~R. Chumbley and Karl~J. Friston.
\newblock False discovery rate revisited: {FDR} and topological inference using
  {G}aussian random fields.
\newblock \emph{Neuroimage}, 44\penalty0 (1):\penalty0 62--70, 2009.

\bibitem[Chumbley et~al.(2010)Chumbley, Worsley, Flandin, and
  Friston]{Chumbley:2010}
Justin~R. Chumbley, Keith Worsley, Guillaume Flandin, and Karl~J. Friston.
\newblock Topological fdr for neuroimaging.
\newblock \emph{Neuroimage}, 49:\penalty0 3057--3064, 2010.

\bibitem[Cram\'{e}r and Leadbetter(1967)]{Cramer:1967}
Harald Cram\'{e}r and M.~Ross Leadbetter.
\newblock \emph{Stationary and related stochastic processes}.
\newblock Wiley, New York, 1967.

\bibitem[Egner et~al.(2007)]{Egner:2007}
Alexander Egner, Claudia Geisler, Claas von Middendorff, Hannes Bock, Dirk Wenzel, Rebecca Medda,
Martin Andresen, Andre C. Stiel, Stefan Jakobs, Christian Eggeling, Andreas Sch\"{o}nle, and Stefan W. Hell.
\newblock Fluorescence Nanoscopy in Whole Cells by Asynchronous Localization of Photoswitching Emitters.
\newblock \emph{Biophysical Journal}, 93:\penalty0 3285–-3290, 2007.

\bibitem[Fyodorov(2004)]{Fyodorov04}
Yan V. Fyodorov. Complexity of random energy landscapes, glass transition, and absolute value of the spectral determinant of random matrices. {\it Phys. Rev. Lett.} {\bf 92}, 240601, 2004.

\bibitem[Geisler et~al.(2007)]{Geisler:2007}
C.~Geisler, A.~Sch\"{o}nle, C.~von~Middendorff, H.~Bock, C.~Eggeling, A.~Egner, and S.W.~Hell.
\newblock Resolution of $\lambda/10$ in fluorescence microscopy using fast single molecule photo-switching.
\newblock \emph{Applied Physics A}, 88:\penalty0 223–-226, 2007.

\bibitem[Genovese et~al.(2002)Genovese, Lazar, and Nichols]{Genovese:2002}
Christopher~R. Genovese, Nicole~A. Lazar, and Thomas~E. Nichols.
\newblock Thresholding of statistical maps in functional neuroimaging using the
  false discovery rate.
\newblock \emph{Neuroimage}, 15:\penalty0 870--878, 2002.

\bibitem[Heller et~al.(2006)Heller, Stanley, Yekutieli, Rubin, and
  Benjamini]{Heller:2006}
Ruth Heller, Damian Stanley, Daniel Yekutieli, Nava Rubin, and Yoav Benjamini.
\newblock Cluster-based analysis of f{MRI} data.
\newblock \emph{Neuroimage}, 33\penalty0 (2):\penalty0 599--608, 2006.

\bibitem[Pacifico et~al.(2004{\natexlab{a}})Pacifico, Genovese, Verdinelli, and
  Wasserman]{Pacifico:2004}
M.~Perone Pacifico, C.~Genovese, I.~Verdinelli, and L.~Wasserman.
\newblock False discovery control for random fields.
\newblock \emph{J Amer Statist Assoc}, 99\penalty0 (468):\penalty0 1002--1014,
  2004{\natexlab{a}}.

\bibitem[Pacifico et~al.(2004{\natexlab{b}})Pacifico, Genovese, Verdinelli, and
  Wasserman]{Pacifico:2007}
M.~Perone Pacifico, C.~Genovese, I.~Verdinelli, and L.~Wasserman.
\newblock Scan clustering: A false discovery approach.
\newblock \emph{J Multivar Anal}, 98\penalty0 (7):\penalty0 1441--1469,
  2004{\natexlab{b}}.

\bibitem[Poline et~al.(1997)Poline, Worsley, Evans, and Friston]{Poline:1997}
J-B. Poline, K.~J. Worsley, A.~C. Evans, and K.~J. Friston.
\newblock Combining spatial extent and peak intensity to test for activations
  in functional imaging.
\newblock \emph{Neuroimage}, 5:\penalty0 83--96, 1997.

\bibitem[Pratt(1991)]{Pratt:1991}
William~K. Pratt.
\newblock \emph{Digital image processing}.
\newblock Wiley, New York, 1991.

\bibitem[Schwartzman et~al.(2008)Schwartzman, Dougherty, and
  Taylor]{Schwartzman:2008a}
Armin Schwartzman, Robert~F. Dougherty, and Jonathan~E. Taylor.
\newblock False discovery rate analysis of brain diffusion direction maps.
\newblock \emph{Ann Appl Statist}, 2\penalty0 (1):\penalty0 153--175, 2008.

\bibitem[Schwartzman et~al.(2011)Schwartzman, Gavrilov, and
  Adler]{Schwartzman:2011}
Armin Schwartzman, Yulia Gavrilov, and Robert~J. Adler.
\newblock Multiple testing of local maxima for detection of peaks in 1D.
\newblock \emph{Ann Statist}, 39\penalty0 (6):\penalty0 3290--3319, 2011.

\bibitem[Simon(1995)]{Simon:1995}
Marvin Simon.
\newblock \emph{Digital communication techniques: signal design and detection}.
\newblock Prentice Hall, Englewood Cliffs, NJ, 1995.

\bibitem[Smith and Nichols(2009)]{Smith:2009}
Stephen~M. Smith and Thomas~E. Nichols.
\newblock Threshold-free cluster enhancement: {A}ddressing problems of
  smoothing, threshold dependence and localisation in cluster inference.
\newblock \emph{Neuroimage}, 44:\penalty0 83--98, 2009.

\bibitem[Sun et al. (2014)]{Sun:2014}
Wenguang Sun, Brian Reich, Tony Cai, Michele Guindani and Armin Schwartzman
\newblock False Discovery Control in Large-Scale Spatial Multiple Testing.
\newblock \emph{J R Statist Soc B}, to appear.

\bibitem[Taylor and Worsley(2007)]{Taylor:2007}
Jonathan~E. Taylor and Keith~J. Worsley.
\newblock Detecting sparse signals in random fields, with an application to
  brain mapping.
\newblock \emph{J. Am. Statist. Assoc.}, 102:\penalty0 913--928, 2007.

\bibitem[Worsley et~al.(1996{\natexlab{a}})Worsley, Marrett, Neelin, and
  Evans]{Worsley:1996b}
Keith~J. Worsley, S.~Marrett, P.~Neelin, and A.~C. Evans.
\newblock Searching scale space for activation in {PET} images.
\newblock \emph{Human Brain Mapping}, 4:\penalty0 74--90, 1996{\natexlab{a}}.

\bibitem[Worsley et~al.(1996{\natexlab{b}})Worsley, Marrett, Neelin, Vandal,
  Friston, and Evans]{Worsley:1996a}
Keith~J. Worsley, S.~Marrett, P.~Neelin, A.~C. Vandal, Karl~J. Friston, and
  A.~C. Evans.
\newblock A unified statistical approach for determining significant signals in
  images of cerebral activation.
\newblock \emph{Human Brain Mapping}, 4:\penalty0 58--73, 1996{\natexlab{b}}.

\bibitem[Worsley et~al.(2004)Worsley, Taylor, Tomaiuolo, and
  Lerch]{Worsley:2004}
Keith~J. Worsley, Jonathan~E. Taylor, F.~Tomaiuolo, and J.~Lerch.
\newblock Unified univariate and multivariate random field theory.
\newblock \emph{Neuroimage}, 23:\penalty0 S189--195, 2004.

\bibitem[Zhang et~al.(2009)Zhang, Nichols, and Johnson]{Zhang:2009}
Hui Zhang, Thomas~E. Nichols, and Timothy~D. Johnson.
\newblock Cluster mass inference via random field theory.
\newblock \emph{Neuroimage}, 44:\penalty0 51--61, 2009.

\end{thebibliography}

\end{document}